\newtheorem{theorem}{Theorem}[section]
\newtheorem{observation}[theorem]{Observation}
\newtheorem{definition}[theorem]{Definition}
\newtheorem{corollary}[theorem]{Corollary}
\newtheorem{claim}[theorem]{Claim}
\newtheorem{lemma}[theorem]{Lemma}
\newtheorem{polyrule}{Rule}[section]
\newcommand{\PIC}{\textsc{Proper Interval Completion}}
\newcommand{\BCC}{\textsc{Bi-clique Chain Completion}}
\newcommand{\BCD}{\textsc{Bipartite Chain Deletion}}
\title{Polynomial kernels for \textsc{Proper Interval Completion} and related problems
\thanks{Research supported by the AGAPE project  (ANR-09-BLAN-0159).}}
\author{St\'ephane Bessy \hspace{0.4cm} Anthony Perez\\
bessy@lirmm.fr \hspace{0.4cm} perez@lirmm.fr \\
  LIRMM -- Universit\'e Montpellier II - FRANCE}
\begin{document}

\maketitle

\begin{abstract}
Given a graph $G = (V,E)$ and a positive integer $k$, the \PIC{}
problem asks whether there exists a set $F$ of at most $k$ pairs of
$(V \times V) \setminus E$ such that the graph $H = (V, E \cup F)$ is
a proper interval graph. The \PIC{} problem finds applications in
molecular biology and genomic research~\cite{KST94,Sha02}. First
announced by Kaplan, Tarjan and Shamir in FOCS '94, this problem is
known to be FPT~\cite{KST94}, but no polynomial kernel was known to
exist. We settle this question by proving that \PIC{} admits a kernel
with at most $O(k^5)$ vertices. Moreover, we prove that a related
problem, the so-called \textsc{Bipartite Chain Deletion} problem,
admits a kernel with at most $O(k^2)$ vertices, completing a previous
result of Guo~\cite{Guo07}.
\end{abstract}

\section*{Introduction}
	

The aim of a graph modification problem is to transform a given graph
in order to get a certain property $\Pi$ satisfied. Several types of
transformations can be considered: for instance, in \emph{vertex
  deletion} problems, we are only allowed to delete vertices from the
input graph, while in \emph{edge modification problems} the only
allowed operation is to modify the edge set of the input graph. The
optimization version of such problems consists in finding a
\emph{minimum} set of edges (or vertices) whose modification makes the
graph satisfy the given property $\Pi$.  Graph modification problems
cover a broad range of NP-Complete problems and have been extensively
studied in the literature~\cite{Man08,SST04,Sha02}.  Well-known examples
include the \textsc{Vertex Cover}~\cite{DFRS04}, \textsc{Feedback
  Vertex Set}~\cite{Tho10}, or \textsc{Cluster Editing}~\cite{CM10}
problems.  These problems find applications in various domains, such
as computational biology~\cite{KST94,Sha02}, image
processing~\cite{SST04} or relational databases~\cite{TY84}.   
	

Due to these applications, one may be interested in computing an
\emph{exact} solution for such problems. \emph{Parameterized
  complexity} provides a useful theoretical framework to that
aim~\cite{DF99,Nie06}. A problem \emph{parameterized} by some integer
$k$ is said to be \emph{fixed-parameter tractable} (FPT for short)
whenever it can be solved in time $f(k) \cdot n^c$ for any constant $c
> 0$. A natural parameterization for graph modification problems
thereby consists in the number of allowed transformations.  As one of
the most powerful technique to design fixed-parameter algorithms,
\emph{kernelization algorithms} have been extensively studied in the
last decade (see~\cite{Bod09} for a survey). A \emph{kernelization
  algorithm} is a polynomial-time algorithm (called \emph{reduction
  rules}) that given an instance $(I,k)$ of a parameterized problem
$P$ computes an instance $(I', k')$ of $P$ such that $(i)$ $(I,k)$ is
a \textsc{Yes}-instance if and only if $(I',k')$ is a
\textsc{Yes}-instance and $(ii)$ $|I'| \le h(k)$ for some computable
function $h()$ and $k' \le k$.  The instance $(I',k')$ is called the
\emph{kernel} of $P$. We say that $(I', k')$ is a polynomial kernel if
the function $h()$ is a polynomial. It is well-known that a
parameterized problem is FPT if and only if it has a kernelization
algorithm~\cite{Nie06}. But this equivalence only yields kernels of
super-polynomial size.  To design efficient fixed-parameter
algorithms, a kernel of small size - polynomial (or even linear) in
$k$ - is highly desirable~\cite{NR00}. However, recent results give
evidence that not every parameterized problem admits a polynomial
kernel, unless $NP \subseteq coNP/poly$~\cite{BDFH09}. On the positive
side, notable kernelization results include a less-than-$2k$ kernel for
\textsc{Vertex Cover}~\cite{DFRS04}, a $4k^2$ kernel for
\textsc{Feedback Vertex Set}~\cite{Tho10} and a $2k$ kernel for
\textsc{Cluster Editing}~\cite{CM10}.
	

We follow this line of research with respect to graph modification
problems.  It has been shown that a graph modification problem is FPT
whenever $\Pi$ is hereditary and can be characterized by a finite set
of forbidden induced subgraphs~\cite{Cai96}. However, recent results
proved that several graph modification problems do not admit a
polynomial kernel even for such properties $\Pi$~\cite{GPP10,KW09}.
In this paper, we are in particular interested in \emph{completion}
problems, where the only allowed operation is to add edges to the
input graph. We consider the property $\Pi$ as being the class of
\emph{proper interval graphs}.  This class is a well-studied class of
graphs, and several characterizations are known to
exist~\cite{LO93,Weg67}.  In particular, there exists an
\emph{infinite} set of forbidden induced subgraphs that characterizes
proper interval graphs~\cite{Weg67} (see Figure~\ref{fig:forbidden}).
More formally, we consider the following problem: \\
\\ 
\PIC{}:
\\ 
\textbf{Input}: A graph $G = (V,E)$ and a positive integer
$k$. \\ 
\textbf{Parameter}: $k$. \\ 
\textbf{Output}: A set $F$ of at
most $k$ pairs of $(V \times V) \setminus E$ such that the graph $H =
(V, E \cup F)$ is a proper interval graph.\\

Interval completion problems find applications in
molecular biology and genomic research~\cite{HSS01,KST94}, and in
particular in \emph{physical mapping} of DNA. In this case, one is
given a set of long contiguous intervals (called \emph{clones})
together with experimental information on their pairwise overlaps, and
the goal is to reconstruct the relative position of the clones along
the target DNA molecule. We focus here on the particular case where
all intervals have equal length, which is a biologically important
case (e.g. for cosmid clones~\cite{HSS01}).  In the presence of (a
small number of) unidentified overlaps, the problem becomes equivalent
to the \PIC{} problem. 
 It is known to be NP-Complete for a long
time~\cite{GKS94}, but fixed-parameter tractable due to a result of
Kaplan, Tarjan and Shamir in FOCS '94~\cite{KST94, KST99}.
\footnote{Notice also that the \emph{vertex deletion} of the problem
  is fixed-parameter tractable~\cite{Vil10}.}
The fixed-parameter tractability of the \PIC{} can also be seen 
as a corollary of a 
characterization of Wegner~\cite{Weg67} combined with Cai's
result~\cite{Cai96}. Nevertheless, it was not known whether this
problem admit a polynomial kernel or not.


\paragraph{Our results} We prove that the \PIC{} problem admits a kernel 
with at most $O(k^5)$ vertices. To that aim, we identify \emph{nice}
parts of the graph that induce proper interval graphs and can hence be
safely reduced. Moreover, we apply our techniques to the so-called
\textsc{Bipartite Chain Deletion} problem, closely related to the
\PIC{} problem where one is given a graph $G = (V,E)$ and seeks a set
of at most $k$ edges whose deletion from $E$ result in a bipartite
chain graph (a graph that can be partitioned into two independent sets
connected by a join). We obtain a kernel with $O(k^2)$ vertices for
this problem. This result completes a previous result of
Guo~\cite{Guo07} who proved that the \textsc{Bipartite Chain Deletion
  With Fixed Bipartition} problem admits a kernel with $O(k^2)$
vertices.
	
\paragraph{Outline} We begin with some definitions and notations regarding 
proper interval graphs. Next, we give the reduction rules the
application of which leads to a kernelization algorithm for the \PIC{}
problem. These reduction rules allow us to obtain a kernel with at most 
$O(k^5)$ vertices. Finally, we prove that our techniques can be applied to
\BCD{} to obtain a quadratic-vertex kernel, completing a previous 
result of Guo~\cite{Guo07}.


\section{Preliminaries}

\subsection{Proper interval graphs}
	
We consider simple, loopless, undirected graphs $G = (V,E)$ where
$V(G)$ denotes the vertex set of $G$ and $E(G)$ its edge
set\footnote{In all our notations, we forget the mention to the graph
  $G$ whenever the context is clear.}.  Given a vertex $v \in V$, we
use $N_G(v)$ to denote the \emph{open neighborhood} of $v$ and $N_G[v]
= N_G(v) \cup \{v\}$ for its \emph{closed neighborhood}. Two vertices
$u$ and $v$ are \emph{true twins} if $N[u] = N[v]$.  If $u$ and $v$
are not true twins but $uv \in E$, we say that a vertex of
$N[u]\bigtriangleup N[v]$ \emph{distinguishes} $u$ and $v$.  Given a
subset of vertices $S \subseteq V$, $N_S(v)$ denotes the set $N_G(v)
\cap S$ and $N_G(S)$ denotes the set $\{N_G(s) \setminus S : s \in
S\}$. Moreover, $G[S]$ denotes the subgraph \emph{induced} by $S$,
i.e. $G[S] = (S, E_S)$ where $E_S = \{uv \in E\ :\ u,v \in S\}$.  A
\emph{join} in a graph $G=(V,E)$ is a bipartition $(X,Y)$ of $G$ and an
order $x_1,\dots ,x_{|X|}$ on $X$ such that for all $i=1,\dots
,|X|-1$, $N_Y(x_i)\subseteq N_Y(x_{i+1})$. The edges between $X$ and
$Y$ are called the \emph{edges of the join}, and a subset $F \subseteq
E$ is said to \emph{form a join} if $F$ corresponds to the edges of a
join of $G$.  Finally, a graph is an \emph{interval graph} if it
admits a representation on the real line such that: $(i)$ the vertices
of $G$ are in bijection with intervals of the real line and $(ii)$ $uv
\in E$ if and only if $I_u \cap I_v \neq \emptyset$, 
where $I_u$ and $I_v$ denote
the intervals associated to $u$ and $v$, respectively. Such a graph is
said to admit an \emph{interval representation}. A graph is a
\emph{proper interval graph} if it admits an interval representation
such that $I_u \not\subset I_v$ for every $u,v \in V$. In other words,
no interval strictly contains another interval.\\ 
We will make use of the two following characterizations of proper
interval graphs to design our kernelization algorithm.
		

\begin{theorem}[Forbidden subgraphs~\cite{Weg67}]
\label{thm:pic} 
A graph is a proper interval graph if and only if it does not contain
any $\{hole,claw,net,3$-$sun\}$ as an induced subgraph (see
Figure~\ref{fig:forbidden}).
\end{theorem}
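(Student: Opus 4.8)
The statement is a classical theorem of Wegner, so I sketch the standard line of proof.

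It is convenient to work with the equivalent characterization that $G$ is a proper interval graph if and only if $V(G)$ admits a linear order $\prec$ with the \emph{umbrella property}: whenever $u \prec v \prec w$ and $uw \in E$, both $uv \in E$ and $vw \in E$. For the easy implication of this equivalence, fix a proper interval representation $(I_v)_{v\in V}$, perturb it so that all $2|V|$ endpoints are distinct, and let $\prec$ order the vertices by left endpoint. Because no interval contains another, $\prec$ also orders the vertices by right endpoint, and then a one-line computation with the endpoints shows that if $u\prec v\prec w$ and $I_u\cap I_w\neq\emptyset$ then the left endpoint of $I_v$ lies in $I_u$ and the left endpoint of $I_w$ lies in $I_v$, so $uv,vw\in E$. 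Granting this equivalence, the ``only if'' direction of the theorem reduces to checking that none of the four obstructions has an umbrella ordering, which I would do uniformly. In a hole, the $\prec$-maximum vertex has two cycle-neighbours that precede it and are non-adjacent, directly violating the umbrella property. In the claw, wherever the centre falls among the three pairwise non-adjacent leaves, two leaves are forced to be adjacent. In the net and in the $3$-sun, the umbrella property keeps the vertex opposite the $\prec$-median triangle vertex out of the span of the triangle, hence at one extreme; applying the umbrella property to it together with two triangle vertices then turns a non-edge into an edge --- a contradiction in each case.

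The content of the theorem is the converse: a graph with no induced hole, claw, net, or $3$-sun admits an umbrella ordering. One may assume $G$ is connected, since concatenating umbrella orderings of the connected components produces one for their disjoint union. The cleanest route invokes two classical results: Lekkerkerker--Boland (a graph is an interval graph iff it is chordal and has no asteroidal triple) and Roberts (a graph is a proper interval graph iff it is an interval graph with no induced claw). Hole-freeness (no induced cycle of length at least four) gives chordality and claw-freeness is assumed, so it suffices to show that a chordal graph with no induced claw, net, or $3$-sun has no asteroidal triple.

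To prove this last statement, suppose $\{a,b,c\}$ is an asteroidal triple and fix induced paths $P_{ab},P_{bc},P_{ca}$, where $P_{xy}$ joins $x$ to $y$ and avoids the closed neighbourhood of the third vertex, chosen so that $|V(P_{ab})\cup V(P_{bc})\cup V(P_{ca})|$ is minimum. Chordality now constrains the picture sharply: any chord inside this union either closes a short cycle (impossible) or yields a shorter connecting path (contradicting minimality), forcing the three paths to be internally disjoint and to meet only in a small ``central'' configuration --- a triangle or a single vertex carrying three short legs. A finite case analysis on the leg lengths (kept short by minimality) then exhibits an induced net, an induced $3$-sun, or, when a leg degenerates, an induced claw, the desired contradiction. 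This last case analysis --- reconstructing how an asteroidal triple can sit inside a chordal graph --- is the only genuinely delicate step and is the heart of the argument. As a self-contained alternative that avoids quoting Lekkerkerker--Boland and Roberts, one can build the umbrella ordering directly by induction on $|V(G)|$: using chordality, single out a simplicial vertex $v$ whose closed neighbourhood is inclusion-extreme among those meeting its clique, argue that the forbidden-subgraph hypotheses guarantee such a $v$ exists and that $N(v)$ then forms an initial segment of some umbrella ordering of $G-v$, and prepend $v$; here the obstacle is exactly the existence of the right $v$ and the verification that prepending preserves the umbrella property.
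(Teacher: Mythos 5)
The paper does not prove this statement at all: it is quoted verbatim from Wegner~\cite{Weg67} as a known characterization, so there is no in-paper argument to compare yours against. Judged on its own terms, your sketch gets the easy direction right (each of the four obstructions is readily shown to have no umbrella ordering, and your pigeonhole/median arguments for the hole, claw, net and $3$-sun are sound), and you correctly identify the standard route for the converse: Lekkerkerker--Boland plus Roberts reduce everything to showing that a chordal graph with no induced claw, net or $3$-sun has no asteroidal triple.

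That last reduction, however, is precisely where the whole content of the theorem lives, and your treatment of it has a genuine gap. You assert that choosing the three connecting paths with minimum total vertex count forces them to be ``short'' and to meet in a small central configuration, so that a finite case analysis on leg lengths finishes the proof. Minimality does not bound the leg lengths: take a triangle $x_1x_2x_3$ with three long induced paths attached, one per $x_i$, ending at $a$, $b$, $c$. This graph is chordal and claw-free, $\{a,b,c\}$ is an asteroidal triple, and the unique (hence minimal) path system has arbitrarily long legs; the induced net sits at the center, far from $a$, $b$, $c$. So the case analysis you invoke is not finite as described, and the real work --- locating an induced net or $3$-sun near the point where the three paths separate, e.g.\ via a clique-tree argument or by reproving the Lekkerkerker--Boland list of minimal chordal non-interval graphs and checking each contains a claw, net or $3$-sun --- is asserted rather than carried out. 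The same remark applies to your alternative inductive construction, where you yourself flag the existence of the right simplicial vertex as the unresolved obstacle. For the purposes of this paper the citation to~\cite{Weg67} suffices, but as a standalone proof your write-up establishes only the easy direction.
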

		
\begin{figure}[ht]
\centerline{
\includegraphics[scale=0.85]{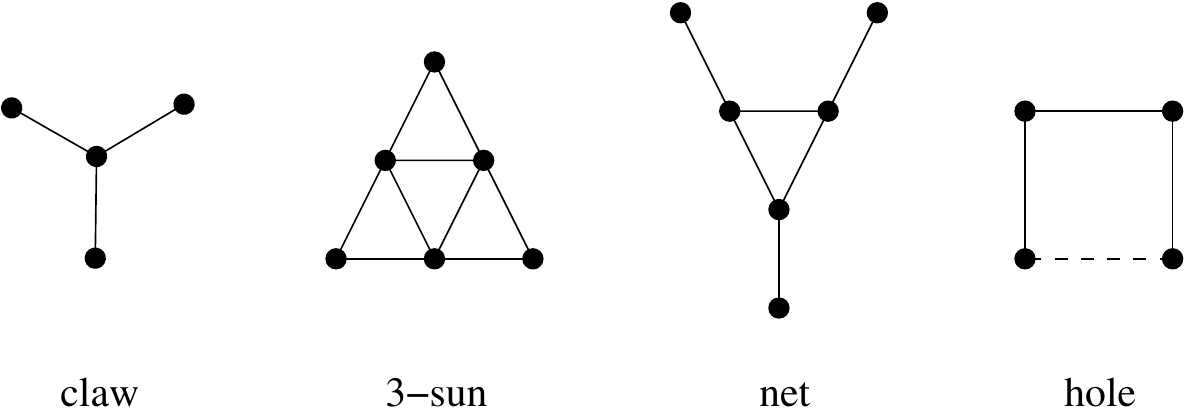}}
\caption{The forbidden induced subgraphs of proper interval graphs. A
  \emph{hole} is an induced cycle of length at least
  $4$. 
\label{fig:forbidden}}
\end{figure}

The claw graph is the bipartite graph $K_{1,3}$. Denoting the bipartition
by $(\{c\},\{l_1,l_2,l_3\})$, we call $c$ the \emph{center}
and $\{l_1,l_2,l_3\}$ the \emph{leaves} of the claw.

\begin{theorem}[Umbrella property~\cite{LO93}]
\label{thm:umbrella} 
A graph is a proper interval graph if and only if its vertices admit
an ordering $\sigma$ (called \emph{umbrella ordering}) satisfying the
following property: given $v_iv_j \in E$ with $i < j$ then $v_iv_l, v_lv_j
\in E$ for every $i < l < j$ (see Figure~\ref{fig:umbrella}).
\end{theorem}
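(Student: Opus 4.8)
The plan is to prove the equivalence by passing back and forth between an interval representation of $G$ and a linear order on $V(G)$; neither direction needs the forbidden-subgraph characterization.

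For the ``only if'' direction, suppose $G$ is a proper interval graph and fix a proper interval representation $\{I_v = [\alpha_v,\beta_v] : v\in V\}$; by a standard reduction (slightly enlarging/perturbing intervals without creating containments or changing the intersection graph) we may assume the $2n$ endpoints are pairwise distinct. Let $\sigma = v_1,\dots,v_n$ be the ordering by increasing left endpoint $\alpha_{v_i}$. The first observation is that the no-strict-containment condition forces this to coincide with the ordering by right endpoints, i.e. $\alpha_{v_i}<\alpha_{v_j}\iff\beta_{v_i}<\beta_{v_j}$, since otherwise one of $I_{v_i},I_{v_j}$ would be strictly contained in the other. Now take an edge $v_iv_j\in E$ with $i<j$; as $I_{v_i}\cap I_{v_j}\neq\emptyset$ while $\alpha_{v_i}<\alpha_{v_j}$ and $\beta_{v_i}<\beta_{v_j}$, we must have $\alpha_{v_j}\le\beta_{v_i}$. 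Then for any $i<l<j$ we get $\alpha_{v_i}<\alpha_{v_l}<\alpha_{v_j}\le\beta_{v_i}$, so $\alpha_{v_l}\in I_{v_i}$ and hence $v_iv_l\in E$; symmetrically $\alpha_{v_j}\le\beta_{v_i}<\beta_{v_l}<\beta_{v_j}$ gives $\beta_{v_l}\in I_{v_j}$ and hence $v_lv_j\in E$. This is exactly the umbrella property for $\sigma$.

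For the ``if'' direction, suppose $\sigma=v_1,\dots,v_n$ is an umbrella ordering. The key preliminary step is to show that right-neighborhoods are intervals of indices: setting $r(i)=\max(\{i\}\cup\{j : v_iv_j\in E\})$, one applies the umbrella property to the edge $v_iv_{r(i)}$ (when $r(i)>i$) to conclude $\{j>i : v_iv_j\in E\}=\{i+1,\dots,r(i)\}$, and likewise to see that $r$ is non-decreasing in $i$. I would then define the representation explicitly by $I_i := \bigl[\,i,\ r(i)+1-\tfrac{1}{i+1}\,\bigr]$. Checking the claimed properties: since $j$ is an integer, $I_i\cap I_j\neq\emptyset$ iff $j\le r(i)$, which by the previous step is equivalent to $v_iv_j\in E$ (for $i<j$); and using monotonicity of $r$ one verifies that both the left endpoints $i$ and the right endpoints $r(i)+1-\tfrac1{i+1}$ are strictly increasing in $i$, so no interval is contained in another and the representation is proper.

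The step I expect to be the main obstacle is exactly the properness in this last direction: the naive assignment $I_i=[i,r(i)]$ realizes the right adjacencies but can produce nested intervals (two vertices with the same $r$-value but different indices), so one genuinely needs the correction term $\tfrac1{i+1}$ (or, equivalently, an induction on $n$ that maintains the invariant of a proper representation with strictly increasing left endpoints along $\sigma$ and inserts $I_n$ just far enough to the right to meet precisely the neighbors of $v_n$). Everything else reduces to routine bookkeeping with the umbrella condition.
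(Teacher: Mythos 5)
Your proof is correct. Note, however, that the paper does not prove this statement at all: Theorem~\ref{thm:umbrella} is quoted as a known characterization from the literature (Looges and Olariu, \cite{LO93}), so there is no in-paper argument to compare yours against. Taken on its own terms, your argument is a clean, self-contained proof of the classical result. The ``only if'' direction is standard and airtight: once the endpoints are made distinct (a legitimate perturbation step, which you are right to flag but need not belabor), properness forces the left- and right-endpoint orders to agree, and the two containments $\alpha_{v_i}<\alpha_{v_l}<\alpha_{v_j}\le\beta_{v_i}$ and $\alpha_{v_j}\le\beta_{v_i}<\beta_{v_l}<\beta_{v_j}$ give exactly the two required edges. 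The ``if'' direction is also complete: the umbrella property applied to the edge $v_iv_{r(i)}$ shows right-neighborhoods are index intervals, the same property shows $r$ is non-decreasing, and your explicit intervals $I_i=\bigl[\,i,\ r(i)+1-\tfrac{1}{i+1}\,\bigr]$ realize adjacency (because the right endpoint sits strictly between $r(i)$ and $r(i)+1$, so an integer $j>i$ lands in $I_i$ iff $j\le r(i)$) while having strictly increasing left and right endpoints, which rules out any containment. You correctly identify that the naive choice $[i,r(i)]$ fails properness when two vertices share an $r$-value; the $\tfrac{1}{i+1}$ correction fixes this. The only cosmetic remark is that one should also note $r(i)+1-\tfrac{1}{i+1}>i$ so each $I_i$ is a genuine interval, but this is immediate from $r(i)\ge i$.
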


In the following, we associate an umbrella ordering $\sigma_G$ to any
proper interval graph $G = (V,E)$.  There are several things to
remark.  First, note that in an umbrella ordering $\sigma_G$ of a
graph $G$, every maximal set of true twins of $G$ is consecutive, and
that $\sigma_G$ is unique up to permutation of true twins of $G$.
Remark also that for any edge $uv$ with $u <_{\sigma_G} v$, the set
$\{w \in V\ :\ u\le_{\sigma_G} w\le_{\sigma_G} v\}$ is a clique of
$G$, and for every $i$ with $1\le i < l$, $(\{v_1,\dots
,v_i\},\{v_{i+1},\dots ,v_n\})$ is a join of $G$.\\
According to this ordering, we say that an edge $uv$ is
\emph{extremal} if there does not exist any edge $u'v'$ different of
$uv$ such that $u' \le_{\sigma_G} u$ and $v\le_{\sigma_G} v'$ (see
Figure~\ref{fig:umbrella}).

\begin{figure}[t]
\centerline{
\input{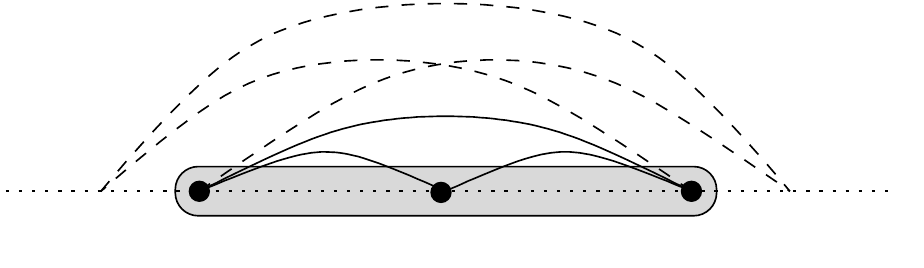_t}}
\caption{Illustration of the umbrella property. 
The edge $v_iv_j$ is extremal.\label{fig:umbrella}
\protect\footnotemark}
\end{figure}
\footnotetext{In all the figures, (non-)edges between blocks stand for
all the possible (non-)edges between the vertices that lie in these blocks,
and the vertices within a gray box form a clique of the graph.} 
  
Let $G = (V,E)$ be an instance of \PIC{}. A \emph{completion} of $G$
is a set $F \subseteq (V \times V) \setminus E$ such that the graph $H
= (V, E \cup F)$ is a proper interval graph. In a slight abuse of
notation, we use $G + F$ to denote the graph $H$. A
\emph{$k$-completion} of $G$ is a completion such that $|F| \le
k$, and an \emph{optimal completion} $F$ is such that $|F|$ is
minimum. We say that $G = (V,E)$ is a \emph{positive} instance of
\PIC{} whenever it admits a $k$-completion.  We state a simple
observation that will be very useful for our kernelization algorithm.
		
\begin{observation}
\label{obs:extremal}
Let $G = (V,E)$ be a graph and $F$ be an optimal completion of
$G$. Given an umbrella ordering $\sigma$ of $G + F$, any extremal edge
of $\sigma$ is an edge of $G$.
\end{observation}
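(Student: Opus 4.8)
The plan is to argue by contradiction: suppose that some extremal edge $v_iv_j$ (with $i<j$ in the umbrella ordering $\sigma$ of $G+F$) is not an edge of $G$, i.e. $v_iv_j\in F$. I will show that $F'=F\setminus\{v_iv_j\}$ is still a completion of $G$, contradicting the optimality of $F$. Since $F'\subseteq F\subseteq (V\times V)\setminus E$, it suffices to check that $G+F'=(G+F)-v_iv_j$ is still a proper interval graph, which by Theorem~\ref{thm:umbrella} amounts to exhibiting an umbrella ordering of it.

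**The key step** is that the very same ordering $\sigma$ witnesses the umbrella property for $(G+F)-v_iv_j$. To see this, recall the umbrella condition: for every edge $v_av_b$ with $a<b$ and every $a<l<b$ we need $v_av_l,v_lv_b$ to be edges. Removing $v_iv_j$ only endangers this for pairs $v_av_b$ that ``straddle'' the removed edge, i.e.\ with $a\le i<j\le b$ and $(a,b)\ne(i,j)$; for such a pair we would need $v_iv_j$ itself as one of the intermediate edges. But if $a\le i$ and $j\le b$ with $(a,b)\ne(i,j)$, then $v_av_b$ is precisely an edge contradicting the extremality of $v_iv_j$ — unless $v_av_b$ is itself removed, which cannot happen since we only delete one edge. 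Hence no such straddling edge survives in $(G+F)-v_iv_j$, so $\sigma$ remains an umbrella ordering and $(G+F)-v_iv_j$ is a proper interval graph. (One should also note the trivial point that $v_i<v_l<v_j$ forces $i<l<j$, so an intermediate vertex of a straddling pair with the extremal edge as a ``missing link'' really does force $(a,b)$ with $a\le i$, $b\ge j$.)

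**The main obstacle**, such as it is, lies in handling the boundary cases cleanly — in particular a pair $v_av_b$ with $a=i$ and $b>j$, or $a<i$ and $b=j$: these still contradict extremality (the definition of extremal edge only excludes $u'v'$ \emph{different} from $uv$, and here $(a,b)\ne(i,j)$), so they too cannot be edges of $G+F$ after the deletion. It is worth stating explicitly that no edge of $G+F$ other than $v_iv_j$ is deleted, so every ``straddling'' edge that existed in $G+F$ must have been $v_iv_j$ itself. Once this bookkeeping is done, the contradiction with $|F|$ being minimum is immediate, and the observation follows.
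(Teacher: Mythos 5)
Your proof is correct and follows exactly the paper's approach: delete the extremal edge from $F$, observe that $\sigma$ remains an umbrella ordering, and contradict optimality. The paper dismisses the verification that $\sigma$ stays an umbrella ordering with ``by definition''; you supply the (correct) case analysis showing that any edge needing $v_iv_j$ as an intermediate would itself violate extremality.
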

		
\begin{proof}
Assume that there exists an extremal edge $e$ in $\sigma$ that belongs
to $F$. By definition, $\sigma$ is still an umbrella ordering if we
remove the edge $e$ from $F$, contradicting the optimality of $F$.
\end{proof}

\subsection{Branches}

We now give the main definitions of this Section. The branches that we
will define correspond to some parts of the graph that already behave
like proper interval graphs. They are the parts of the graph that we
will reduce in order to obtain a kernelization algorithm.
		
\begin{definition}[$1$-branch]
\label{def:1branch}
Let $B \subseteq V$. We say that $B$ is a $1$-branch if the following
properties hold (see Figure~\ref{fig:1branch}):
\begin{enumerate}[(i)]
\item The graph $G[B]$ is a connected proper interval graph admitting an
  umbrella ordering $\sigma_{B} = b_1, \ldots ,b_{|B|}$ and,
\item The vertex set $V \setminus B$ can be partitioned into two sets
  $R$ and $C$ with: no edges between $B$ and $C$, every vertex in $R$ 
  has a neighbor in $B$, no edges between
  $\{b_1,\dots ,b_{l-1}\}$ and $R$ where $b_l$ is the neighbor of
  $b_{|B|}$ with minimal index in $\sigma_B$, and for every $l \le i < |B|$, we have
  $N_R(b_i) \subseteq N_R(b_{i+1})$.
\end{enumerate}
\end{definition} 

We denote by $B_1$ the set of vertices $\{v \in V:\ b_l \le_{\sigma_B}
v \le_{\sigma_B} b_{|B|}\}$, which is a clique (because $b_l$ is a
neighbor of $b_{|B|}$). We call $B_1$ the \emph{attachment clique} of
$B$, and use $B^R$ to denote $B \setminus B_1$.

\begin{figure}[ht]
\centerline{
\input{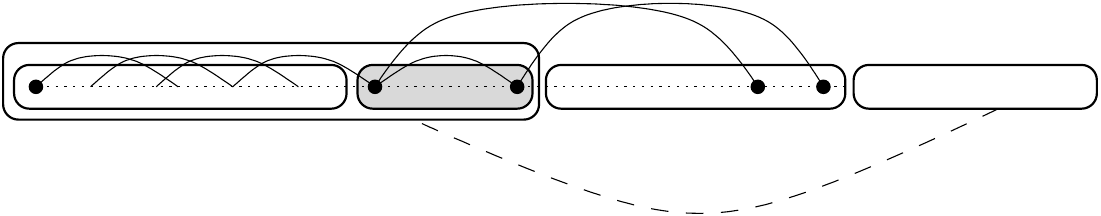_t}}
\caption{A $1$-branch of a graph $G = (V, E)$. The vertices of $B$ are
  ordered according to the umbrella ordering
  $\sigma_B$. \label{fig:1branch}}
\end{figure}
		
\begin{definition}[$2$-branch]
\label{def:2branch}
Let $B \subseteq V$. We say that $B$ is a $2$-branch if the following
properties hold (see Figure~\ref{fig:2branch}):
\begin{enumerate}[(i)]
\item The graph $G[B]$ is a connected proper interval graph admitting an umbrella
  ordering $\sigma_{B} = b_1, \ldots ,b_{|B|}$ and,
\item The vertex set $V \setminus B$ can be partitioned into sets
  $L,R$ and $C$ with:
  \begin{itemize} 
    \item no edges between $B$ and $C$,
    \item every vertex in $L$ (resp. $R$) has a neighbor in $B$,
    \item no edges between $\{b_1,\dots ,b_{l-1}\}$ and $R$ where
      $b_l$ is the neighbor of $b_{|B|}$ with minimal index in
      $\sigma_B$,
    \item no edges between $\{b_{l'+1},\dots ,b_{|B|}\}$ and $L$ where
      $b_{l'}$ is the neighbor of $b_{1}$ with maximal index in
      $\sigma_B$ and,
    \item $N_R(b_i) \subseteq N_R(b_{i+1})$ for every $l \le i < |B|$
      and $N_L(b_{i+1}) \subseteq N_L(b_i)$ for every $1 \le i < l'$.
  \end{itemize}
\end{enumerate}
\end{definition}

Again, we denote by $B_1$ (resp. $B_2$) the set of vertices $\{v \in
V:\ b_1 \le_{\sigma_B} v \le_{\sigma_B} b_{l'}\}$ (resp. $\{v \in
V:\ b_l \le_{\sigma_B} v \le_{\sigma_B} b_{|B|}\}$).  We call $B_1$
and $B_2$ the \emph{attachment cliques} of $B$, and use $B^R$ to
denote $B \setminus (B_1 \cup B_2)$. Observe that the cases where $L =
\emptyset$ or $R = \emptyset$ are possible, and correspond to the
definition of a $1$-branch. Finally, when $B^R=\emptyset$, it is
possible that a vertex of $L$ or $R$ is adjacent to all the vertices
of $B$. In this case, we will denote by $N$ the set of vertices that
are adjacent to every vertex of $B$, remove them from $R$ and $L$ and
abusively still denote by $L$ (resp. $R$) the set $L\setminus N$
(resp. $R\setminus N$). We will precise when we need to use the set
$N$.\\

\begin{figure}[ht]
\centerline{
\input{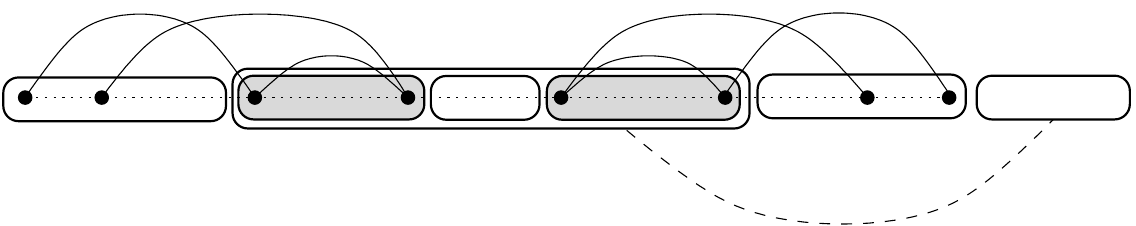_t}}
\caption{A $2$-branch of a graph $G = (V, E)$. The vertices of $B$ are
  ordered according to the umbrella ordering
  $\sigma_B$. \label{fig:2branch}}
\end{figure}
		
In both cases, in a 1- or 2-branch, whenever the proper interval graph $G[B]$ is a
\emph{clique}, we say that $B$ is a \emph{$K$-join}. Observe that, in a 
1- or 2-branch $B$, for
any extremal edge $uv$ in $\sigma_B$, the set of vertices $\{w \in
V:\ u \le_{\sigma_B} w \le_{\sigma_B} v\}$ defines a $K$-join.  In
particular, this means that a branch can be decomposed into a sequence
of $K$-joins. Observe however that the decomposition is not unique: for
instance, the $K$-joins corresponding to all the extremal edges of $\sigma_B$ are 
not disjoint. We will precise in
Section~\ref{section:cutting_the_2branch}, when we will reduce the size
of 2-branches, how to fix a decomposition. Finally, we say that a
$K$-join is \emph{clean} whenever its vertices are not contained in any claw or
$4$-cycle. Remark that a subset of a $K$-join (resp. clean $K$-join) is
also a $K$-join (resp. clean $K$-join).


\section{Kernel for \sc{\PIC}}
\label{sec:pic}

The basic idea of our kernelization algorithm is to detect the large
enough branches and then to reduce them. This section details the
rules we use for that.

\subsection{Reduction rules}

\subsubsection{Basic rules}

We say that a rule is \emph{safe} if when it is applied to an instance
$(G,k)$ of the problem, $(G,k)$ admits a $k$-completion iff the instance 
$(G',k')$ reduced by the rule admits a $k'$-completion.

The first reduction rule gets rid of connected components that are
already proper interval graphs. This rule is trivially safe 
and can be
applied in $O(n + m)$ time using any recognition algorithm for proper
interval graphs~\cite{Corn04}. 
		
\begin{polyrule}[Connected components]
\label{rule:cc}
Remove any connected component of $G$ that is a proper interval graph.
\end{polyrule}

The following reduction rule can be applied since proper interval
graphs are closed under true twin addition and induced subgraphs. For
a class of graphs satisfying these two properties, we know that this
rule is safe~\cite{BPP10} (roughly speaking, we edit all the large set
of true twins in the same way).
		
\begin{polyrule}[True twins~\cite{BPP10}]
\label{rule:twins}
Let $T$ be a set of true twins in $G$ such that $|T| > k$. Remove $|T|
- (k + 1)$ arbitrary vertices from $T$.
\end{polyrule}
		
We also use the classical \emph{sunflower} rule, allowing to identify
a set of edges that must be added in any optimal completion.
		
\begin{polyrule}[Sunflower]
\label{rule:sunflower}
Let $\mathcal{S} = \{C_1, \ldots, C_m\}$, $m > k$ be a set of claws
having  two leaves $u,v$ in common but distinct third leaves. Add $uv$ to $F$
and decrease $k$ by $1$.\\
Let $\mathcal{S} = \{C_1, \ldots, C_m\}$, $m > k$ be a set of
distinct $4$-cycles having a non-edge $uv$ in common. Add $uv$ to $F$
and decrease $k$ by $1$.
\end{polyrule}
		
\begin{lemma}
\label{lem:sunflower}
Rule~\ref{rule:sunflower} is safe and can be carried out in polynomial time.
\end{lemma}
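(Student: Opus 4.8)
The plan is to establish both parts of Rule~\ref{rule:sunflower} through the standard \emph{sunflower} argument, whose core claim is: \emph{if the hypothesis of the rule holds, then $uv$ belongs to every $k$-completion of $G$}. Once this is proved, safeness is immediate. Write $G' = (V, E \cup \{uv\})$ and $k' = k-1$ for the reduced instance; note that $uv \notin E$ since it is a non-edge of each configuration in the sunflower. If $F$ is a $k$-completion of $G$, then $uv \in F$, so $F \setminus \{uv\}$ is a completion of $G'$ of size at most $k-1 = k'$; conversely, if $F'$ is a $k'$-completion of $G'$, then $F' \cup \{uv\}$ is a $k$-completion of $G$. Hence $(G,k)$ admits a $k$-completion iff $(G',k')$ admits a $k'$-completion, which is exactly what safeness requires.

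To prove the core claim for the claw part, fix the $m > k$ claws $C_1,\dots,C_m$, where $C_i$ has center $c_i$ and leaves $\{u,v,w_i\}$ with the $w_i$ pairwise distinct. In $G$, the induced subgraph on $\{c_i,u,v,w_i\}$ is a claw, so its only non-edges are $uv$, $uw_i$ and $vw_i$. Since a completion only adds edges and $G+F$ contains no induced claw (Theorem~\ref{thm:pic}), the induced subgraph of $G+F$ on these four vertices is not a claw, which forces $F$ to contain at least one of $uv$, $uw_i$, $vw_i$. Now assume, for contradiction, that some $k$-completion $F$ avoids $uv$; then $F$ contains $uw_i$ or $vw_i$ for every $i$, and since the $w_i$ are pairwise distinct the pairs $\{uw_i,vw_i\}$ are pairwise disjoint, so $|F| \ge m > k$, a contradiction. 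For the $4$-cycle part the argument is identical once we observe that if $uv$ is a non-edge of an induced $4$-cycle $C_i$ then it is one of its two diagonals, so $C_i$ is an induced $4$-cycle on a vertex set $\{u,x_i,v,y_i\}$ whose only non-edges are $uv$ and $x_iy_i$; moreover two distinct such $4$-cycles have distinct ``second diagonals'' $x_iy_i$, since a fixed set of four vertices induces at most one $C_4$. Hence destroying $C_i$ forces $F$ to contain $uv$ or $x_iy_i$, and a $k$-completion avoiding $uv$ would contain the $m$ pairwise distinct pairs $x_1y_1,\dots,x_my_m$, again contradicting $|F| \le k$.

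It remains to check that an applicable instance of the rule can be found in polynomial time. For every non-edge $uv$, let $A = N(u) \cap N(v)$. A valid third leaf of a claw with leaves $u,v$ is exactly a vertex $w \notin N[u] \cup N[v]$ with $N(w) \cap A \neq \emptyset$, and a valid $4$-cycle with diagonal $uv$ corresponds exactly to a non-edge inside $G[A]$; counting these and testing whether the count exceeds $k$ is done in polynomial time, as is the editing $G \mapsto G + uv$, $k \mapsto k-1$. The only mildly delicate point in the whole proof is the $4$-cycle bookkeeping --- guaranteeing that a family of distinct $4$-cycles sharing the non-edge $uv$ really forces $m$ \emph{distinct} edges into $F$ --- but this is settled by the uniqueness of an induced $C_4$ on a four-vertex set noted above, so no genuine obstacle remains.
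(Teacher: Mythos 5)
Your proof is correct and follows essentially the same route as the paper: both hinge on the observation that the $m>k$ configurations pairwise share only the non-edge $uv$, so a completion avoiding $uv$ must contain $m>k$ pairwise distinct added edges. You simply spell out the details the paper leaves implicit (the explicit reformulation of safeness via $(G+uv,k-1)$, the $4$-cycle case, and the disjointness bookkeeping), and your polynomial-time detection per non-edge is an equivalent, if slightly more economical, variant of the paper's $O(n^4)$ enumeration of all claws and $4$-cycles.
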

		
\begin{proof}
We only prove the first rule. The second rule can be proved similarly.
Let $F$ be a $k$-completion of $G$ and assume that $F$ does not
contain $(u,v)$. Since any two claws in $\mathcal{S}$ only share
$(u,v)$ as a common non-edge, $F$ must contain one edge for every
$C_i$, $1 \le i \le m$.  Since $m > k$, we have $|F| > k$, which
cannot be. Observe that a sunflower can be found in polynomial time
once we have enumerated all the claws and $4$-cycles of a graph, which
can clearly be done in $O(n^4)$.
\end{proof}

\subsubsection{Extracting a clean $K$-join from a $K$-join}
\label{section:extraction}	

Now, we want to reduce the size of the 'simplest' branches, namely the
$K$-joins.  More precisely, in the next subsection we will bound the
number of vertices in a clean $K$-join (whose vertices are not contain
in any claw or $4$-cycle), and so, we first indicate how to extract a
clean $K$-join from a $K$-join.

\begin{lemma}
\label{lem:claws_and_$4$-cycles}
Let $G = (V,E)$ be a positive instance of \PIC{} on which
Rule~\ref{rule:sunflower} has been applied. There are at most
$k^2$ claws with distinct sets of leaves, and at most
$k^2+2k$ vertices of $G$ are leaves of claw. Furthermore, there are at most 
$2k^2+2k$ vertices of $G$ that are vertices of a $4$-cycle.
\end{lemma}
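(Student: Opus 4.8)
The statement has three parts, and all three are essentially counting arguments once the sunflower rule has been applied. The plan is to bound the number of claws with distinct leaf-sets, then deduce the bound on vertices that are leaves of a claw, and finally treat $4$-cycles in a completely analogous way. Throughout we work with a fixed optimal $k$-completion $F$ of $G$ (it exists since $G$ is a positive instance); every forbidden induced subgraph of $G$ — in particular every claw and every $4$-cycle — must receive at least one pair from $F$, since $G+F$ is a proper interval graph and hence contains none of them (Theorem~\ref{thm:pic}).

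\textbf{Claws with distinct leaf-sets.} First I would fix an optimal $k$-completion $F$ and, for each claw $C$ in $G$ (with leaves $\{l_1,l_2,l_3\}$), charge $C$ to a non-edge of $C$ that lies in $F$. There is at least one such non-edge because a claw is a forbidden subgraph; among the three non-edges of the claw ($l_1l_2$, $l_1l_3$, $l_2l_3$), pick one in $F$, say $l_il_j$. Now I claim that for a fixed pair $\{l_i,l_j\}\in F$, at most $k$ claws with pairwise distinct leaf-sets can be charged to it: indeed, all such claws share the two leaves $l_i,l_j$, so if there were more than $k$ of them with distinct third leaves we would have a sunflower of more than $k$ claws sharing the common non-edge $l_il_j$, contradicting the fact that Rule~\ref{rule:sunflower} has been applied and cannot be applied again. (Here I use that once Rule~\ref{rule:sunflower} is exhausted, no such large sunflower survives; I should be slightly careful and phrase this as: the reduced instance admits no applicable sunflower, so any $\{u,v\}$ is a common non-edge of at most $k$ claws with distinct third leaves.) Since $|F|\le k$, summing over the pairs in $F$ gives at most $k\cdot k=k^2$ claws with distinct leaf-sets.

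\textbf{Vertices that are leaves of a claw.} Every leaf of a claw is a leaf in one of the (at most $k^2$) claws counted above — or more precisely, if $v$ is a leaf of some claw $C'$, then $C'$ has the same leaf-set as one of the $k^2$ representative claws, so $v$ is a leaf of a representative claw. Wait — that is not quite enough, since different representatives could reuse leaves; but the crude bound is: the $k^2$ representative claws have at most $3k^2$ leaves in total, so at most $3k^2$ vertices are leaves of a claw. To reach the sharper bound $k^2+2k$ claimed, I would instead argue directly: a vertex $v$ that is a leaf of a claw is, in particular, in a claw, hence is incident to (i.e. equals an endpoint of) some pair of $F$ unless — this is the subtle point — $v$ can be a leaf while the $F$-pair charged to its claw avoids $v$. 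So let me recount. Partition leaves $v$ of claws into two types: (a) those for which there is a claw $C$ with $v$ a leaf and $v$ an endpoint of the $F$-pair charged to $C$; (b) the rest. Type (a): at most $2k$ vertices, since there are $\le k$ pairs in $F$ and each has two endpoints. Type (b): if $v$ is a leaf of claw $C$ but no $F$-pair of $C$ touches $v$, then the $F$-pair of $C$ joins the other two leaves; so $v$ is the center or... no, $v$ is a leaf, so the $F$-pair joins the two leaves $\ne v$, meaning $v$ together with those two leaves and the center — the center $c$ is adjacent to all three leaves and the two non-$v$ leaves are joined by $F$. The point is then that $v$ is one of the (distinct) third leaves in a sunflower-like family sharing the pair $\{$center's other two leaves$\}$; by the sunflower bound there are at most $k$ choices of such $v$ per $F$-pair, hence at most $k^2$ type-(b) vertices. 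Total: $k^2+2k$.

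\textbf{$4$-cycles.} For the last bound I would repeat the same scheme with $4$-cycles. A $4$-cycle $v_1v_2v_3v_4$ has exactly two non-edges, $v_1v_3$ and $v_2v_4$, and at least one is in $F$. Charge the $4$-cycle to such a non-edge. For a fixed pair $\{u,w\}\in F$, the $4$-cycles charged to it all have $\{u,w\}$ as one of their two diagonals; the two common neighbors of $u,w$ in such a $4$-cycle form its other diagonal, and by the second part of Rule~\ref{rule:sunflower} (no surviving sunflower of $>k$ distinct $4$-cycles through a common non-edge) there are at most $k$ such $4$-cycles per $F$-pair — hence at most $k^2$ distinct $4$-cycles total, involving at most $2k$ "diagonal endpoints $u,w$" plus, for each of the $k$ pairs in $F$, at most $k$ further vertices appearing as the opposite diagonal, i.e. at most $k\cdot k$ more; combined with the $2k$ endpoints of $F$-pairs this gives at most $2k^2+2k$ vertices lying on a $4$-cycle. (Concretely: a vertex on a $4$-cycle either is an endpoint of the $F$-pair charged to that $4$-cycle — at most $2k$ such — or lies on the diagonal opposite to the $F$-pair, and for each of the $\le k$ pairs of $F$ there are at most $k$ distinct such opposite diagonals, contributing at most $2k^2$; wait, this double counts, so I would be careful and instead bound: opposite-diagonal vertices number at most $2k$ per $F$-pair $\times$ ... ) — the bookkeeping here is where I expect to have to be most careful to land exactly on $2k^2+2k$.

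\textbf{Main obstacle.} The only real content is the sunflower bound, which is already given by Lemma~\ref{lem:sunflower}; everything else is counting. The one place where I expect to spend care — and the place where an off-by-constant-factor error is easy — is getting the \emph{exact} coefficients $k^2+2k$ and $2k^2+2k$ rather than the looser $3k^2$ and $\le 4k^2$, i.e. correctly organizing the case split "$v$ is an endpoint of a charged $F$-pair" versus "$v$ is a distinct third leaf / opposite diagonal vertex" and making sure the two cases are exhaustive and that within the second case the sunflower rule applies verbatim.
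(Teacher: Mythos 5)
Your proposal is correct and follows essentially the same argument as the paper: charge each claw (resp.\ $4$-cycle) to a non-edge of $F$ joining two of its leaves (resp.\ a diagonal), and use the exhausted sunflower rule to bound the number of distinct third leaves (resp.\ opposite diagonals) per $F$-pair by $k$, giving $k^2$ claws with distinct leaf-sets, $(k+2)k$ claw-leaves, and $(2k+2)k$ vertices on $4$-cycles. Your case split into ``endpoint of the charged pair'' versus ``third leaf / opposite diagonal'' is exactly the bookkeeping the paper performs implicitly, and it does land on the stated constants.
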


\begin{proof}
As $G$ is a positive instance of \PIC{}, every claw or $4$-cycle of $G$
has a non-edge that will be completed and then is an edge of $F$.  Let
$xy$ be an edge of $F$. As we have applied Rule~\ref{rule:sunflower}
on $G$, there are at most $k$ vertices in $G$ that form the three
leaves of a claw with $x$ and $y$. So, at most $(k+2)k$ vertices of $G$
are leaves of claws.  Similarly, there are at most $k$ non-edges of $G$,
implying at most $2k$ vertices, that form a $4$-cycle with $x$ and $y$.
So, at most $(2k+2)k$ vertices of $G$ are in a $4$-cycle.
\end{proof}
	
\begin{lemma}
\label{lem:center_of_claws}
Let $G = (V,E)$ be a positive instance of \PIC{} on which
Rule~\ref{rule:twins} and Rule~\ref{rule:sunflower} have been applied
and $B$ be a $K$-join of $G$.  There are at most $k^3+4k^2+5k+1$
vertices of $B$ that belong to a claw or a $4$-cycle.
\end{lemma}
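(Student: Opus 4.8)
The plan is to count, separately, the vertices of $B$ that are leaves of a claw, the vertices of $B$ that lie on a $4$-cycle, and the vertices of $B$ that are centers of a claw. Lemma~\ref{lem:claws_and_$4$-cycles} already bounds the first two quantities (by $k^2+2k$ and $2k^2+2k$ respectively, counting over all of $G$, hence in particular over $B$), so the whole difficulty is to bound the number of \emph{centers of claws} lying in $B$. The key structural observation is that $B$ is a $K$-join, hence a clique, so the three leaves of any claw centered at some $c\in B$ cannot all lie in $B$; at least two of them lie outside $B$, and in fact — using the branch structure — they lie in $L\cup R\cup N$ (there are no edges between $B$ and $C$). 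So every claw-center in $B$ witnesses a non-edge inside $N(c)\cap(L\cup R\cup N)$ that must be filled by the completion.

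First I would set up the charging: let $F$ be a $k$-completion of $G$ (it exists since $G$ is positive), and consider a vertex $c\in B$ that is the center of a claw with leaves $\{l_1,l_2,l_3\}$. At least two leaves, say $l_1,l_2$, lie outside $B$; then $l_1l_2\notin E$, and since $G[B]$ is a clique while $c$ is adjacent to both, the pair $l_1l_2$ must be completed, so $l_1l_2\in F$. I would like to charge $c$ to this edge of $F$. To make the charging (nearly) injective I use the twin rule and the sunflower rule: fix an edge $xy\in F$; how many vertices $c\in B$ can be charged to $xy$? Such a $c$ is adjacent (in $G$) to both $x$ and $y$, i.e. $c\in N(x)\cap N(y)$, and $x,y\notin B$. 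If two such centers $c,c'$ both also had a private third leaf outside $B$ making a genuine claw with $\{x,y\}$ on the two common leaves, the sunflower rule (first part, applied with the two common leaves being... ) does not directly apply because here $x,y$ are the common \emph{leaves} and $c$ the center — so instead I observe that $c$ and $c'$ together with $x$ (or with $y$) already give claws sharing two leaves, or I argue that all the centers charged to $xy$ that are pairwise true twins can be reduced to at most $k+1$ by Rule~\ref{rule:twins}, and centers charged to $xy$ that are \emph{not} true twins are distinguished by some vertex, which in turn produces yet another claw or $4$-cycle whose leaves/vertices were already counted. This is the delicate bookkeeping step.

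Concretely, the cleanest route is: (1) the number of vertices of $B$ in a $4$-cycle is at most $2k^2+2k$ by Lemma~\ref{lem:claws_and_$4$-cycles}; (2) the number of vertices of $B$ that are leaves of a claw is at most $k^2+2k$; (3) for centers, note that if $c\in B$ is a claw-center then, as above, it is adjacent to a non-edge $xy$ with $x,y\notin B$ and $xy\in F$ — there are at most $k$ such edges $xy$. Fix one such $xy$. The centers $c\in B$ attached to $xy$ that are pairwise true twins number at most $k+1$ after Rule~\ref{rule:twins}. If $c,c'\in B$ are both attached to $xy$ but are \emph{not} true twins, then since $cc'\in E$ (they are in the clique $B$) some vertex $w$ distinguishes them, say $w\in N[c]\setminus N[c']$; then $\{w,c,x,c'\}$ or $\{w,c,y,c'\}$ spans a claw (centered at $c$) or a $4$-cycle, so $c$ is in fact a \emph{leaf} of a claw or a vertex of a $4$-cycle and has already been counted in (1)–(2). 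Hence, after subtracting the already-counted vertices, the centers attached to a fixed $xy$ that remain are all true twins, so at most $k+1$ of them; summing over the $\le k$ choices of $xy$ gives at most $k(k+1)=k^2+k$ genuinely new centers. Adding up: $(k^2+2k)+(2k^2+2k)+(k^2+k)$ is only $4k^2+5k$, so to reach the claimed $k^3+4k^2+5k+1$ I expect the true-twin argument to actually cost a factor $k$ more than the naive count above — i.e. the distinguishing-vertex argument must be iterated, or the bound on centers per edge $xy$ is $k^2+1$ rather than $k+1$ because each distinguishing vertex may itself be involved in up to $k$ edges of $F$ — and the final arithmetic $k\cdot(k^2+\dots)+\dots$ produces the cubic term.

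The main obstacle, then, is precisely this centers-of-claws count: turning "$c$ is a center attached to a completed non-edge $xy$" into a near-injective charge requires controlling how two distinct centers attached to the same $xy$ interact, and the honest version of that control (via true twins plus the distinguishing vertex spawning new claws/$4$-cycles, possibly iterated once more) is what pushes the bound from quadratic to cubic in $k$. Everything else — invoking Lemma~\ref{lem:claws_and_$4$-cycles}, using that $G[B]$ is a clique and that $B$ has no edges to $C$ — is routine.
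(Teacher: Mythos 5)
Your opening decomposition (leaves of claws, vertices of $4$-cycles, centers of claws, with the first two handled by Lemma~\ref{lem:claws_and_$4$-cycles}) matches the paper's first step exactly, and you have correctly identified that the whole difficulty is the count of claw \emph{centers} in $B$. But your proposed resolution of that count has a genuine gap. The step that fails is the claim that two non-true-twin centers $c,c'$ attached to the same completed non-edge $xy$ force $c$ to be a leaf of a claw or a vertex of a $4$-cycle via the sets $\{w,c,x,c'\}$ or $\{w,c,y,c'\}$: since $c$ and $c'$ are both adjacent to $x$ (and to $y$) and $cc'\in E$ because $B$ is a clique, each of these four-vertex sets contains the triangle $\{c,c',x\}$ (resp.\ $\{c,c',y\}$), so it induces neither a claw nor an induced $4$-cycle. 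Without that step you have no bound on the number of true-twin classes of centers attached to a fixed edge of $F$, and you yourself concede the point when you observe that your arithmetic yields only $4k^2+5k$ and that the argument ``must be iterated'' to produce the cubic term --- that is an acknowledged hole, not a proof. (A smaller bound would of course still prove the lemma, but only if the argument producing it were sound.) There is also a minor imprecision earlier: a claw centered at $c$ only guarantees that \emph{some} pair of its leaves is completed by $F$, not the specific pair $l_1l_2$ you name, so the charge must be defined with respect to whichever leaf-pair $F$ actually completes.

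For comparison, the paper bounds the centers by a different mechanism that never charges to edges of $F$ directly. It first discards the at most $3k^2+4k$ vertices of $B$ already counted as leaves or $4$-cycle vertices, keeps only the centers, and contracts true twins (each class has size at most $k+1$ by Rule~\ref{rule:twins}). It then exploits the \emph{linear order} of the $K$-join: in the umbrella ordering $b_1,\dots,b_l$ of the contracted set, each consecutive pair $b_i,b_{i+1}$ is distinguished by a vertex $c_i$, and the join structure forces all the $c_i$ to be distinct. The crux is then to exhibit, for each $i$, a claw having $c_i$ as a leaf; these $l-1$ claws have pairwise distinct leaf sets, so Lemma~\ref{lem:claws_and_$4$-cycles} bounds $l-1$ by $k^2$, giving $(k+1)(k^2+1)$ centers and hence the $k^3$ term. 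The ordered structure of the $K$-join is precisely what makes the distinguishing vertices pairwise distinct and thus the claws pairwise different --- your approach discards that structure, which is why the bookkeeping cannot be closed in the form you propose.
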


\begin{proof}
By Lemma~\ref{lem:claws_and_$4$-cycles}, there are at most $3k^2+4k$
vertices of $B$ that are leaves of a claw or in $4$-cycles. We remove
these vertices from $B$ and denote $B'$ the set of remaining vertices,
which forms a $K$-join. Now, we remove from $B'$ all the vertices that
do not belong to any claw and contract all the true twins in the
remaining vertices. As Rule~\ref{rule:twins} has been applied on $B$,
every contracted set has size at most $k+1$. We denote by $B''$ the
obtained set which can be seen as a subset of $B$ and then, $B''$ is
also a $K$-join of $G$. Remark that every vertex of $B''$ is the
center of a claw.  We consider an umbrella ordering $b_1,\dots ,b_l$
of $B''$. We will find a set of $l-1$ claws with distinct sets of
leaves, which will bound $l$ by $k^2+1$, by
Lemma~\ref{lem:claws_and_$4$-cycles}.  As, for all $i=1, \dots ,l-1$,
$b_i$ and $b_{i+1}$ are not true twins, there exists $c_i$ such that
$b_ic_i\in E$ and $b_{i+1}c_i\notin E$ or $b_ic_i\notin E$ and
$b_{i+1}c_i\in E$.  As $B''$ is $K$-join, by definition, all the
$c_i$'s are distinct. Now, for every $i=1,\dots ,l-1$, we will find a
claw containing $c_i$ as leaf. Assume that $b_ic_i\notin E$ and
$b_{i+1}c_i\in E$. As $b_{i+1}$ is the center of a claw, there exists
a set $\{x,y,z\}$ which is an independent set and is fully adjacent to
$b_{i+1}$. If $c_i\in \{x,y,z\}$, we are done. Assume this is not the
case. This means that $b_i$ is adjacent to any vertex of $\{x,y,z\}$
(otherwise one of this vertex would be adjacent to $b_{i+1}$ and not
to $b_i$, and we choose it to be $c_i$).  Now, if two elements of this
set, say $x$ and $y$, are adjacent to $c_i$, then $\{x,c_i,y,b_i\}$
forms a $4$-cycle that contains $b_i$, which is not possible. So, at
least two elements among $\{x,y,z\}$, say $x$ and $y$, are not
adjacent to $c_i$ and then, we find the claw $\{b_{i+1},x,y,c_i\}$ of
center $b_{i+1}$ that contains $c_i$. In the case where $b_ic_i\in E$
and $b_{i+1}c_i\notin E$, we proceed similarly by exchanging the role
of $b_i$ and $b_{i+1}$ and find also a claw containing $c_i$.
Finally, all the considered claws have distinct sets of leaves and
there are at most $k^2$ such claws by
Lemma~\ref{lem:claws_and_$4$-cycles}.  What means that $B''$ has size
at most $k^2+1$ and $B'$ at most $(k+1)(k^2+1)$.  As we removed
at most $3k^2+4k$ vertices of $B$ that could be leaves of claws or
contain in $4$-cycles, we obtain $k^3+4k^2+5k+1$ vertices of $B$ that
are possibly in claws or $4$-cycles.
\end{proof}

Since any subset of a $K$-join forms a $K$-join, Lemma~\ref{lem:center_of_claws} 
implies that it is possible to remove a set of at most $k^3+4k^2+5k+1$
vertices from any $K$-join to obtain a clean $K$-join.

\subsubsection{Bounding the size of the $K$-joins}

Now, we set a rule that will bound the number of vertices in 
a clean $K$-join, once applied. Although quite technical to prove, this
rule is the core tool of our process of kernelization.

\begin{polyrule}[$K$-join]
\label{rule:$K$-join}
Let $B$ be a clean $K$-join of size at least $2k+2$. Let $B_L$ be the $k +
1$ \emph{first vertices} of $B$, $B_R$ be its $k + 1$ \emph{last
  vertices} and $M = B \setminus (B_R \cup B_L)$. Remove the set of
vertices $M$ from $G$.
\end{polyrule}
		
\begin{lemma}
\label{lem:$K$-join}
Rule~\ref{rule:$K$-join} is safe.
\end{lemma}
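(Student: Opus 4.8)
The plan is to show that removing the "middle" vertices $M$ of a clean $K$-join $B$ does not change whether the instance has a $k$-completion. Write $G' = G - M$ and note $k' = k$. One direction is essentially free: any $k$-completion $F$ of $G$ restricts to a set $F' \subseteq F$ of non-edges of $G'$, and since proper interval graphs are closed under taking induced subgraphs, $G' + F'$ is a proper interval graph; hence $G'$ is a positive instance with $|F'| \le |F| \le k$. The work is entirely in the converse: given a $k$-completion $F'$ of $G'$, produce a $k$-completion $F$ of $G$, of size at most $|F'|$, that reinserts the deleted vertices $M$ for free.

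The key structural observation is that $B$ being a clean $K$-join means $B$ is a clique none of whose vertices lies in a claw or a $4$-cycle of $G$, and that $B^R$ (or $B_1,B_2$) together with the sets $L,R,C$ of the surrounding branch structure tightly control how $B$ attaches to the rest of the graph. In particular, since $|B_L| = |B_R| = k+1$ and $F'$ has at most $k$ edges, at least one vertex $b_L \in B_L$ and one vertex $b_R \in B_R$ are \emph{untouched} by $F'$ — incident to no edge of $F'$. I would then examine the umbrella ordering $\sigma'$ of $G' + F'$ and argue that, because $B_L$ and $B_R$ are cliques that were left (nearly) intact and $B$ is clean, the block of $\sigma'$ strictly between the untouched $b_L$ and the untouched $b_R$ behaves exactly like a $K$-join sitting inside the completed graph: the vertices there can be ordered arbitrarily among themselves, the neighborhoods of $b_L$ and $b_R$ in the rest of the graph are nested around this block, and the surrounding vertices ($L$, $R$, $C$, $N$ and anything already between $b_L$ and $b_R$) are arranged monotonically. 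Concretely, every vertex strictly between $b_L$ and $b_R$ in $\sigma'$ is adjacent (in $G'+F'$) to both $b_L$ and $b_R$, so inserting each $m \in M$ immediately after $b_L$, with exactly the adjacencies $N_G(m)$ it had in $G$, preserves the umbrella property: the cleanness of $B$ guarantees $m$'s original neighborhood in $G$ is of the form "a clique-interval of $B$ plus a nested piece of $L\cup R\cup N$," which is precisely what an umbrella ordering allows at that position, and no new claw or $4$-cycle is created because all of $m$'s neighbors are already pairwise comparable in $\sigma'$ around that spot. This yields an umbrella ordering of $G + F'$ with the same edge set $F'$, so $F = F'$ is a $k$-completion of $G$.

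The main obstacle is the careful verification that the untouched vertices $b_L \in B_L$ and $b_R \in B_R$ really do "sandwich" a reinsertable block in \emph{every} umbrella ordering $\sigma'$ of $G' + F'$ — one has to rule out the possibility that the completion $F'$ reorganizes $G'$ so that the surrounding sets $L, R, C$ (or the pre-existing interior of the $K$-join) get shuffled in a way that makes the insertion slot for $M$ inconsistent with the umbrella property. Here I would lean on Observation~\ref{obs:extremal} (extremal edges of an optimal completion are original edges) and on the branch axioms of Definitions~\ref{def:1branch}–\ref{def:2branch}, which force the nesting $N_R(b_i) \subseteq N_R(b_{i+1})$ and $N_L(b_{i+1}) \subseteq N_L(b_i)$ and the absence of edges between $B$ and $C$; combined with the fact that $b_L$ and $b_R$ are incident to no edge of $F'$, these constraints pin down the position of the whole clique $B_L \cup M \cup B_R$ relative to $L$, $R$, $C$ and $N$ in $\sigma'$ up to the irrelevant permutation of true twins. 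Once that localization is established, re-inserting $M$ between $b_L$ and $b_R$ with its original neighborhood is routine, and a symmetric bookkeeping handles the degenerate cases ($L = \emptyset$, $R = \emptyset$, $B^R = \emptyset$ with a nonempty $N$, or $B$ being only a $1$-branch).
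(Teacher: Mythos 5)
Your overall skeleton matches the paper's: the restriction direction is free, and the work is to take an umbrella ordering $\sigma_H$ of $G'+F$ and reinsert $M$ without adding edges. Your use of $|B_L|=|B_R|=k+1$ to extract untouched vertices $b_L,b_R$ is sound (each edge of $F$ meets $B_L$ in at most one vertex since $B_L$ is a clique), and it does yield the analogue of the paper's Claim~\ref{claim:clique} that everything between $b_L$ and $b_R$ in $\sigma_H$ lies in $L\cup N\cup R$ and is a $G$-neighbor of all of $M$. But the heart of the lemma is exactly the step you assert rather than prove, and the specific recipe you give is wrong. Inserting each $m$ \emph{immediately after $b_L$ with the edge set $F'$ unchanged} does not in general produce an umbrella ordering. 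The paper's Claim~\ref{claim:properm} isolates the obstruction: let $v_{i-1}$ and $v_{j+1}$ be the non-neighbors of $m$ flanking $N_H(m)$ in $\sigma_H$. The completion may arrange things so that the rightmost $H$-neighbor $b_{v_{i-1}}$ of $v_{i-1}$ lies \emph{after} the leftmost $H$-neighbor $b_{v_{j+1}}$ of $v_{j+1}$; then every candidate slot for $m$ is straddled by an $H$-edge whose endpoint is a non-neighbor of $m$, and no insertion position works until one reorders the block $X$ between $b_{v_{j+1}}$ and $b_{v_{i-1}}$ into $X_2X_1$ and \emph{deletes} the $F$-edges from $X_1$ to the left and from $X_2$ to the right (these are provably in $F$, via a claw argument using cleanness). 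Consequently the completion one ends up with is a subset of $F'$, not $F'$ itself: $G+F'$ need not be a proper interval graph at all, so your concluding sentence ``$F=F'$ is a $k$-completion of $G$'' is not justified.

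The second gap is your claim that the branch axioms plus the untouched $b_L,b_R$ ``pin down the position of the whole clique $B_L\cup M\cup B_R$ relative to $L$, $R$, $C$ and $N$ in $\sigma'$ up to the irrelevant permutation of true twins.'' They do not, and establishing enough control is precisely the content of the paper's Claim~\ref{claim:umbrellam}: one must exclude pairs $u<_{\sigma_H}v$ outside the clique with $um\in E(G)$, $vm\notin E(G)$ and $v$ between $u$ and the insertion region, which requires a case analysis on $u\in N$, $u\in R$ with $v\notin R$, and $u,v\in R$, the last case needing the preliminary re-sorting of true twins by their neighborhoods in $M$, Observation~\ref{obs:extremal} to replace $F$-edges by extremal $G$-edges, and cleanness to manufacture a forbidden claw or $4$-cycle through $B$. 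Finally, a smaller point: distinct vertices of $M$ have nested but distinct neighborhoods in $L\cup R$, so they cannot all be dropped at the same slot in arbitrary order; they must be inserted one at a time at positions dictated by the join, as the paper does by iterating Claim~\ref{claim:properm}. In short, your proposal correctly identifies the target statement and some of the right tools, but the actual verification --- which is where all the difficulty of this lemma lives --- is missing, and the concrete construction proposed in its place fails on the configurations handled by Claims~\ref{claim:umbrellam} and~\ref{claim:properm}.
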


\begin{proof}
Let $G' = G \setminus M$. Observe that the restriction to $G'$ of any
$k$-completion of $G$ is a $k$-completion of $G'$, since proper
interval graphs are closed under induced subgraphs. So, let $F$ be a
$k$-completion for $G'$. We denote by $H = G' + F$ the resulting
proper interval graph and by $\sigma_H=b_1,\dots ,b_{|H|}$ an umbrella
ordering of $H$. We prove that we can insert the vertices of
$M$ into $\sigma_H$ and modify it if necessary, to obtain an umbrella
ordering for $G$ without adding any edge (in fact, some edges of $F$
might even be deleted during the process). This will imply that $G$
admits a $k$-completion as well. To see this, we need the following
structural description of $G$. As explained before, we denote by $N$
the set $\cap_{b \in B} N_G(b) \setminus B$, and abusively still
denote by $L$ (resp. $R$) the set $L\setminus N$ (resp. $R\setminus
N$) (see Figure~\ref{fig:struct}).
			
\begin{claim}
\label{claim:struct}
The sets $L$ and $R$ are cliques of $G$.
\end{claim}
			
\emph{Proof.}  We prove that $R$ is a clique in $G$. The proof for $L$
uses similar arguments. No vertex of $R$ is a neighbor of $b_1$,
otherwise such a vertex must be adjacent to every vertex of $B$ and
then stand in $N$.  So, if $R$ contains two vertices $u,v$ such that
$uv \notin E$, we form the claw $\{b_{|B|},b_1,u,v\}$ of center $b_{|B|}$,
contradicting the fact that $B$ is clean.  \hfill $\diamond$ \\
			
The following observation comes from the definition of a $K$-join.
			
\begin{observation}
\label{obs:$K$-join}
Given any vertex $r \in R$, if $N_B(r) \cap B_L \neq \emptyset$ holds
then $M \subseteq N_B(r)$. Similarly, given any vertex $l \in L$, if
$N_B(l) \cap B_R \neq \emptyset$ holds then $M \subseteq N_B(l)$.
\end{observation}

\begin{figure}[ht]
\centerline{
\input{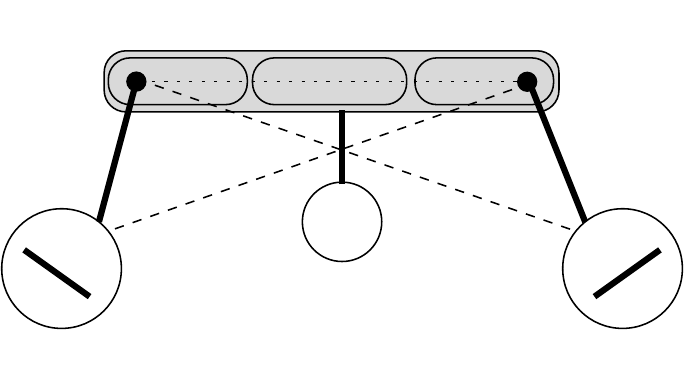_t}}
\caption{The structure of the $K$-join $B$. \label{fig:struct}}
\end{figure}

We use these facts to prove that an umbrella ordering can be obtained
for $G$ by inserting the vertices of $M$ into $\sigma_H$. Let $b_f$
and $b_l$ be respectively the first and last vertex of $B \setminus M$
appearing in $\sigma_H$. We let $B_H$ denote the set $\{u \in
V(H)\ :\ b_f \le_{\sigma_H} u \le_{\sigma_H} b_l\}$.  Observe that
$B_H$ is a clique in $H$ since $b_fb_l \in E(G)$ and that $B\setminus
M \subseteq B_H$. Now, we modify $\sigma_H$ by ordering the true twins
in $H$ according to their neighborhood in $M$: if $x$ and $y$ are true
twins in $H$, are consecutive in $\sigma_H$, verify
$x<_{\sigma_H}y<_{\sigma_H}b_f$ and $N_M(y)\subset N_M(x)$, then we
exchange $x$ and $y$ in $\sigma_H$. This process stops when the
considered true twins are ordered following the join between $\{u \in
V(H)\ :\ u <_{\sigma_H} b_f\}$ and $M$. We proceed similarly on the
right of $B_H$, i.e. for $x$ and $y$ consecutive twins with
$b_l<_{\sigma_H}x<_{\sigma_H}y$ and $N_M(x)\subset N_M(y)$.  The
obtained order is clearly an umbrella ordering too (in fact, we just
re-labeled some vertices in $\sigma_H$), and we abusively still
denote it by $\sigma_H$.
			
\begin{claim}
\label{claim:clique}
The set $B_H \cup \{m\}$ is a clique of $G$ for any $m \in M$, and 
consequently $B_H\cup M$ is a clique of $G$.
\end{claim}
			
\emph{Proof.}  Let $u$ be any vertex of $B_H$. We claim that $um \in E(G)$. 
Observe that if $u \in
B$ then the claim trivially holds. So assume $u \notin B$.  Recall
that $B_H$ is a clique in $H$. It follows that $u$ is adjacent to
every vertex of $B \setminus M$ in $H$. Since $B_L$ and $B_R$ both
contain $k + 1$ vertices, we have $N_G(u) \cap B_L \neq
\emptyset$ and $N_G(u) \cap B_R \neq \emptyset$. Hence, $u$ belongs to
$L \cup N \cup R$ and $um \in E(G)$ by Observation~\ref{obs:$K$-join}.
\hfill $\diamond$
			
\begin{claim}
\label{claim:umbrellam}
Let $m$ be any vertex of $M$ and $\sigma'_H$ be the ordering obtained
from $\sigma_H$ by removing $B_H$ and inserting $m$ to the position of
$B_H$. The ordering $\sigma'_H$ respects the umbrella property.
\end{claim}
			
\emph{Proof.}  Assume that $\sigma'_H$ does not respect the umbrella
property, i.e. that there exist (w.l.o.g.) two vertices $u$ and $v$ of
$H\setminus B_H$ such that either $(1)$ $u<_{\sigma'_H} v
<_{\sigma'_H} m$, $um \in E(H)$ and $uv\notin E(H)$ or $(2)$
$u<_{\sigma'_H} m <_{\sigma'_H} v$, $um\notin E(H)$ and $uv\in E(H)$
or $(3)$ $u<_{\sigma'_H} v <_{\sigma'_H} m$, $um \in E(H)$ and $vm
\notin E(H)$.  First, assume that $(1)$ holds. Since $uv \notin E(H)$
and $\sigma_H$ is an umbrella ordering, $uw \notin E(H)$ for any $w
\in B_H$, and hence $uw \notin E(G)$.  This means that $B_L \cap
N_G(u) = \emptyset$ and $B_R \cap N_G(u) = \emptyset$, which is
impossible since $um \in E(G)$. Then, assume that $(2)$ holds. Since
$uv\in E(H)$ and $\sigma_H$ is an umbrella ordering, $B_H\subseteq
N_H(u)$, and in particular $B_L$ and $B_R$ are included in
$N_H(u)$. As $|B_L|=|B_R|=k+1$, we know that $N_G(u)\cap B_L\neq
\emptyset$ and $N_G(u)\cap B_R\neq \emptyset$, but then, Observation~\ref{obs:$K$-join}
implies that $um\in E(G)$.
So, $(3)$ holds, and we choose the first $u$ satisfying this property 
according to the order given by $\sigma'_H$. So we have $wm \notin
E(G)$ for any $w <_{\sigma'_H} u$. Similarly, we choose $v$ to be the
first vertex after $u$ satisfying $vm \notin E(G)$. Since
$um \in E(G)$, we know that $u$ belongs to $L \cup N \cup R$. 
Moreover, since $vm \notin E(G)$, $v \in C \cup L \cup
R$.  There are several cases to consider:
				
\begin{enumerate}[(i)]
\item $u \in N$: in this case we know that $B \subseteq N_G(u)$, and
  in particular that $ub_l \in E(G)$.  Since $\sigma_H$ is an umbrella
  ordering for $H$, it follows that $vb_l \in E(H)$ and $B_H\subseteq
  N_H(v)$. Since $|B_L| = |B_R| = k + 1$, we know that $N_G(v) \cap
  B_L \neq \emptyset$ and $N_G(v) \cap B_R \neq \emptyset$.  But, then
  Observation~\ref{obs:$K$-join} implies that $vm \in E(G)$.

\item $u \in R,\ v \notin R$: since $um \in E(G)$, $B_R \subseteq
  N_G(u)$. Let $b \in B_R$ be the vertex such that $B_R \subseteq \{w
  \in V\ :\ u <_{\sigma_H} w \le_{\sigma_H} b\}$. Since $ub \in E(G)$,
  this means that $B_R \subseteq N_H(v)$. Now, since $|B_R| = k + 1$,
  it follows that $N_G(v) \cap B_R \neq \emptyset$. 
  Observation~\ref{obs:$K$-join} allows us to conclude that $vm \in
  E(G)$.
 
\item $u, v \in R$: in this case, $uv \in E(G)$ by
  Claim~\ref{claim:clique} but $u$ and $v$ are not true twins in $H$
  (otherwise $v$ would be placed before $u$ in $\sigma_H$ due to the
  modification we have applied to $\sigma_H$). This means that there
  exists a vertex $w \in V(H)$ that distinguishes $u$ from $v$ in
  $H$.\\ Assume first that $w <_{\sigma_H} u$ and $uw \in E(H), vw
  \notin E(H)$. We choose the first $w$ satisfying this according to
  the order given by $\sigma_H$. There are two cases to
  consider. First, if $uw \in E(G)$, then since $wm \notin E(G)$ for
  any $w <_{\sigma_H} u$ by the choice of $u$, $\{u,v,w,m\}$ is a claw
  in $G$ containing a vertex of $B$ (see Figure~\ref{fig:umbrellam}
  $(a)$ ignoring the vertex $u'$), which cannot be. So assume $uw \in
  F$. By Observation~\ref{obs:extremal}, $uw$ is not an extremal edge
  of $\sigma_H$.  By the choice of $w$ and since $vw \notin E(H)$,
  there exists $u'$ with $u <_{\sigma_H} u' <_{\sigma_H} v$ such that
  $u'w$ is an extremal edge of $\sigma_H$ (and hence belongs to
  $E(G)$, see Figure~\ref{fig:umbrellam} $(a)$). Now, by the choice of
  $v$ we have $u'm \in E(G)$ and hence $u' \in N \cup R \cup L$.
  Observe that $u'v \notin E(G)$: otherwise $\{u',v,w,m\}$ would form
  a claw in $G$. Since $R$ is a clique of $G$, it follows that $u' \in
  L \cup N$. Moreover, since $u'm \in E(G)$, $B_L \subseteq
  N_G(u')$. We conclude like in configuration $(ii)$ that $v$ should
  be adjacent to a vertex of $B_L$ and hence to $m$.\\ Hence we can
  assume that all the vertices that distinguish $u$ and $v$ are after
  $u$ in $\sigma_H$ and that $uw'' \in E(H)$ implies $vw'' \in E(H)$
  for any $w'' <_{\sigma_H} u$. Now, suppose that there exists $w\in
  H$ such that $b_l <_{\sigma_H} w$ and $uw \notin E(H),\ vw \in
  E(H)$. In particular, this means that $B_L \subseteq N_H(v)$.  Since
  $|B_L| = k + 1$ we have $N_G(v) \cap B_L \neq \emptyset$, implying
  $vm \in E(G)$ by Observation~\ref{obs:$K$-join}. Assume now that
  there exists a vertex $w$ which distinguishes $u$ and $v$ with $v
  <_{\sigma_H} w <_{\sigma_H} b_f$.  In this case, since $uw \notin
  E(H)$, $B \cap N_H(u) = \emptyset$ holds and hence $B \cap N_G(u) =
  \emptyset$, which cannot be since $u \in R$. Finally, assume that
  there is $w \in B_H$ with $wu\notin E(H)$ and $wv\in E(H)$.  Recall
  that $wm\in E(G)$ as $B_H\cup \{m\}$ is a clique by
  Claim~\ref{claim:clique}. We choose $w$ in $B_H$ distinguishing $u$
  and $v$ to be the last according to the order given by $\sigma_H$
  (i.e. $vw' \notin E(H)$ for any $w <_{\sigma_H} w'$, see
  Figure~\ref{fig:umbrellam} $(b)$, ignoring the vertex $u'$).

\begin{figure}[ht]
\centerline{
\input{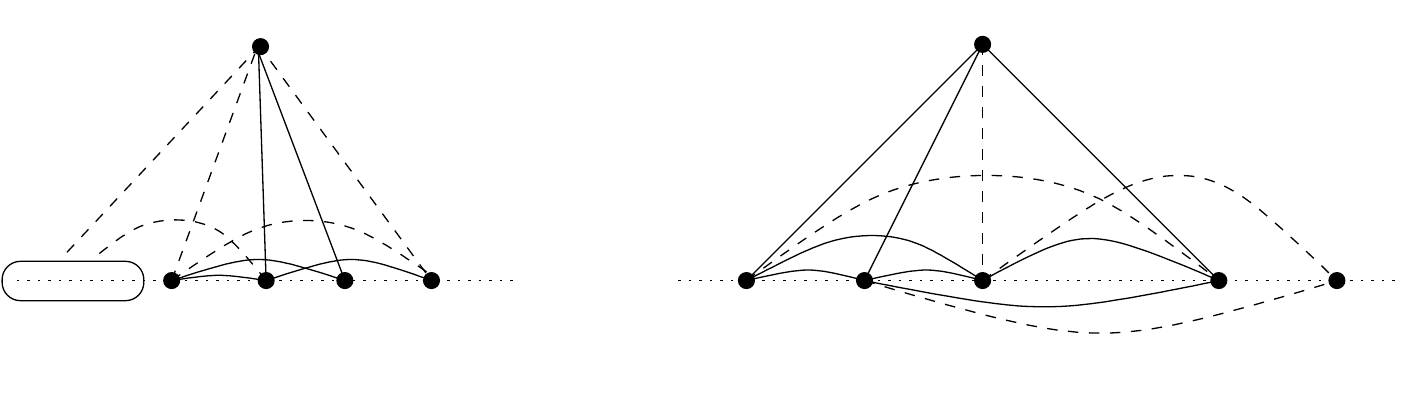_t}}
\caption{$(a)$ $u$ and $v$ are distinguished by some vertex $w
  <_{\sigma_H} u$; $(b)$ $u$ and $v$ are distinguished by a vertex $w
  \in B_H$.\label{fig:umbrellam}}
  \end{figure}

If $vw \in E(G)$ then $\{u,m,w,v\}$ is a $4$-cycle in $G$ containing a
vertex of $B$, which cannot be.  Hence $vw \in F$ and by the choice of
$w$, there exists $u' \in V(H)$ such that $u <_{\sigma_H} u'
<_{\sigma_H} v$ and $u'w$ is an extremal edge (and then belongs to
$E(G)$). By the choice of $v$ we know that $u'm \in E(G)$. Moreover,
by the choice of $w$, observe that $u'$ and $v$ are true twins in $H$
(if a vertex $s$ distinguishes $u'$ and $v$ in $H$, $s$ cannot be
before $u$, since otherwise $s$ would distinguishes $u$ and $v$, not
between $u$ and $w$ because it would be adjacent to $u'$ and $v$, and
not after $w$, by choice of $w$). This leads to a contradiction since
we assumed that $N_M(x) \subseteq N_M(y)$ for any true twins $x$ and
$y$ with $x <_{\sigma_H} y <_{\sigma_H} b_f$.
\end{enumerate}
				
The cases where $u \in L$ are similar, what concludes the proof of
Claim~\ref{claim:umbrellam} \hfill $\diamond$ \\
			
\begin{claim}
\label{claim:properm}
Let $m \in M$. Then $m$ can be added to the graph $H$ while preserving
an umbrella ordering.
\end{claim}
		
\emph{Proof.}  Let $m \in M$ and $v_i$ (resp. $v_j$) be the vertex
with minimal (resp. maximal) index in $\sigma_H$ such that $v_im \in
E(G)$ (resp. $v_jm \in E(G)$). By definition, we have $v_{i-1}m,
v_{j+1}m \notin E(G)$ and through Claim~\ref{claim:umbrellam}, we know
that $N_H(m) = \{w \in V\ :\ v_i \le_{\sigma_H} w \le_{\sigma_H}
v_j\}$.  Moreover, since $B_H \cup M$ is a clique by
Claim~\ref{claim:clique}, it follows that $v_{i-1} <_{\sigma_H} b_f$
and $b_l <_{\sigma_H} v_{j+1}$.  Hence, by
Claim~\ref{claim:umbrellam}, we know that $v_{i-1}v_{j+1} \notin
E(G)$, otherwise the ordering $\sigma'_H$ defined in
Claim~\ref{claim:umbrellam} would not be an umbrella ordering.  The
situation is depicted in Figure~\ref{fig:x1x2} $(a)$.  For any vertex
$v \in N_H(m)$, let $N^-(v)$ (resp. $N^+(v)$) denote the set of
vertices $\{w\le_{\sigma_H} v_{i-1}\ :\ wv \in E(H)\}$ (resp. $\{w
\ge_{\sigma_H} v_{j+1}\ :\ wv \in E(H)\}$). Observe that for any
vertex $v\in N_H(m)$, if there exist two vertices $x \in N^-(v)$ and
$y \in N^+(v)$ such that $xv, yv \in E(G)$, then the set $\{v,x,y,m\}$
defines a claw containing $m$ in $G$, which cannot be. We now consider
$b_{v_{i-1}}$ the neighbor of $v_{i-1}$ with maximal index in
$\sigma_H$.  Similarly we let $b_{v_{j+1}}$ be the neighbor of
$v_{j+1}$ with minimal index in $\sigma_H$. Since $v_{i-1}v_{j+1}
\notin E(G)$, we have $b_{v_{i-1}}, b_{v_{j+1}} \in N_H(m)$.  
We study
the behavior of $b_{v_{i-1}}$ and $b_{v_{j+1}}$ in order to conclude.
	
Assume first that $b_{v_{j+1}} <_{\sigma_H} b_{v_{i-1}}$. 
Let $X$
be the set of vertices $\{w \in V\ :\ b_{v_{j+1}} \le_{\sigma_H} w
\le_{\sigma_H} b_{v_{i-1}}\}$. Remark that we have $b_{v_{i-1}}\leq_{\sigma_H}b_l$ 
and  $b_f \leq_{\sigma_H} b_{v_{j+1}}$, otherwise for instance, if 
we have $b_{v_{i-1}}>_{\sigma_H}b_l$, then $B_H\subseteq N_H(v_{i-1})$ implying,
as usual, that $v_{i-1}m\in E(G)$ which is not.
So, we know that $X\subseteq B_H$. Then, let $X_1 \subseteq X$ be the
set of vertices $x \in X$ such that there exists $w \in N^+(x)$ with
$xw \in E(G)$ and $X_2 = X \setminus X_1$. Let $x \in X_1$: observe
that by construction $xw' \in F$ for any $w' \in N^-(x)$.  Similarly,
given $x \in X_2$, $xw'' \in F$ for any $w'' \in N^+(x)$.  We now
reorder the vertices of $X$ as follows: we first put the vertices from
$X_2$ and then the vertices from $X_1$, preserving the order induced
by $\sigma_H$ for both sets. Moreover, we remove from $E(H)$ all edges
between $X_1$ and $N^-(X_1)$ and between $X_2$ and $N^+(X_2)$. Recall
that such edges have to belong to $F$.  We claim that
inserting $m$ \emph{between} $X_2$ and $X_1$ yields an umbrella
ordering (see Figure ~\ref{fig:x1x2} $b$). Indeed, by
Claim~\ref{claim:umbrellam}, we know that the umbrella ordering is
preserved between $m$ and the vertices of $H \setminus B_H$.

\begin{figure}[ht]
\centerline{
\input{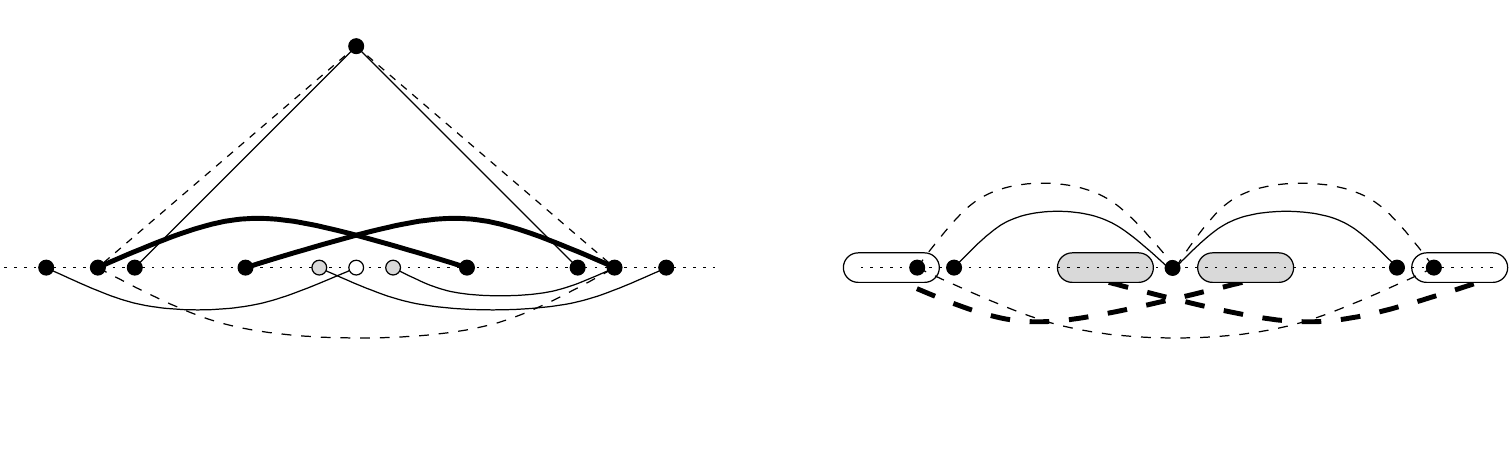_t}}
\caption{Illustration of the reordering applied to $\sigma_H$. The
  thin edges stand for edges of $G$. On the left, the gray vertices
  represent vertices of $X_1$ while the white
  vertex is a vertex of $X_2$.\label{fig:x1x2}}
\end{figure}

Now, remark that there is no edge between $X_1$ and $\{w\in V\ :\ w
\le_{\sigma_H} v_{i-1}\}$, that there is no edge between $X_2$ and
$\{w\in V\ :\ w \ge_{\sigma_H} v_{j+1}\}$), that there are still all
the edges between $N_H(m)$ and $X_1\cup X_2$ and that the edges between
$X_1$ and $\{w\in V\ :\  w \ge_{\sigma_H} v_{j+1}\}$ and  the edges between
$X_2$ and $\{w\in V\ :\  w \le_{\sigma_H} v_{i-1}\}$ are unchanged.
So, it follows that the new
ordering respects the umbrella property, and we are done.
	
Next, assume that $b_{v_{i-1}} <_{\sigma_H} b_{v_{j+1}}$. We let $b_{v_i}$
(resp. $b_{v_j}$) be the neighbor of $v_i$ (resp. $v_j$) with maximal
(resp. minimal) index in $N_H(m)$. Notice that $b_{v_{i-1}}
\le_{\sigma_H} b_{v_i}$ and $b_{v_{j}} \le_{\sigma_H} b_{v_{j+1}}$
(see Figure~\ref{fig:notx1x2}). Two cases may occur:
	
\begin{enumerate}[(i)]
\item First, assume that $b_{v_{i}} <_{\sigma_H} b_{v_{j}}$, case
  depicted in Figure~\ref{fig:notx1x2} $(a)$. In particular, this
  means that $v_iv_j \notin E(G)$.  If $b_{v_i}$ and $b_{v_j}$ are
  consecutive in $\sigma_H$, then inserting $m$ between $b_{v_i}$ and
  $b_{v_j}$ yields an umbrella ordering (since $b_{v_j}$
  (resp. $b_{v_i}$) does not have any neighbor before (resp. after)
  $v_i$ (resp. $v_j$) in $\sigma_H$). Now, if there exists $w \in V$
  such that $b_{v_i} <_{\sigma_H} w <_{\sigma_H} b_{v_j}$, then one
  can see that the set $\{m,v_i,w,v_j\}$ forms a claw containing $m$
  in $G$, which is impossible.
\item The second case to consider is when $b_{v_j} \le_{\sigma_H}
  b_{v_i}$. In such a case, one can see that $m$ and the vertices of
  $\{w \in V\ :\ b_{v_j} \le_{\sigma_H} w \le_{\sigma_H} b_{v_i}\}$
  are true twins in $H \cup \{m\}$, because their common neighborhood
  is exactly $\{w\in V\ :\ v_i\le_{\sigma_H} w\le_{\sigma_H}v_j\}$.
  Hence, inserting $m$ just before $b_{v_i}$ (or anywhere between
  $b_{v_i}$ and $b_{v_j}$ or just after $b_{v_j}$) yields an umbrella
  ordering.
\end{enumerate}
			
\begin{figure}[ht]
\centerline{
\input{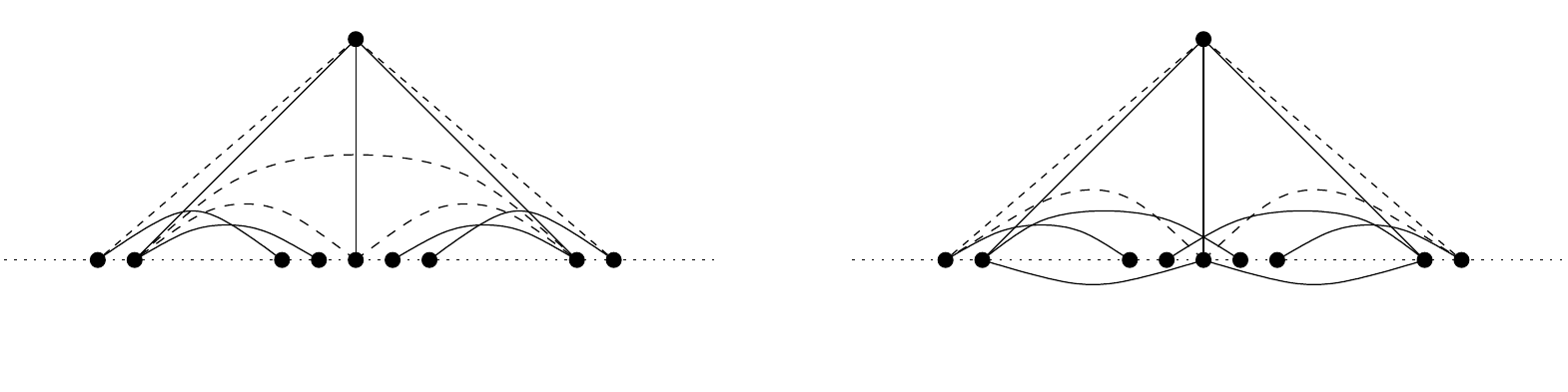_t}}
\caption{The possible cases for $b_{v_{i-1}} <_{\sigma_H} b_{v_{j+1}}$. \label{fig:notx1x2}}
\end{figure}

\vspace{-0.5cm}
	
\hfill $\diamond$ \\
		
Since the proof of Claim~\ref{claim:properm} does not use the fact
that the vertices of $H$ do not belong to $M$, it follows that we can
iteratively insert the vertices of $M$ into $\sigma_H$, preserving an
umbrella ordering at each step. This concludes the proof of
Lemma~\ref{lem:$K$-join}.
\end{proof}

The complexity needed to compute Rule~\ref{rule:$K$-join} will be
discussed in the next section.  The following observation results from
the application of Rule~\ref{rule:$K$-join} and from
Section~\ref{section:extraction}.

\begin{observation}
\label{obs:clean$K$-join}
Let $G = (V,E)$ be a positive instance of \PIC{} reduced under
Rules~\ref{rule:twins} to \ref{rule:$K$-join}. Any $K$-join of $G$ has
size at most $k^3+4k^2+7k+3$.
\end{observation}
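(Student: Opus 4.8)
The plan is to combine the extraction bound obtained in Section~\ref{section:extraction} with the size bound on clean $K$-joins that Rule~\ref{rule:$K$-join} enforces. Concretely: start from an arbitrary $K$-join $B$, strip off the (few) vertices of $B$ that sit in a claw or a $4$-cycle to expose a clean $K$-join $B^{*}\subseteq B$, observe that reducedness under Rule~\ref{rule:$K$-join} caps $|B^{*}|$, and finally add the two contributions.

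In detail, let $B$ be a $K$-join of $G$. Since $G$ is a positive instance of \PIC{} reduced under Rules~\ref{rule:twins} and~\ref{rule:sunflower}, Lemma~\ref{lem:center_of_claws} applies and tells us that at most $k^{3}+4k^{2}+5k+1$ vertices of $B$ belong to a claw or a $4$-cycle of $G$. Remove exactly this set of vertices from $B$ and call $B^{*}$ the remaining vertex set. As noted right after Lemma~\ref{lem:center_of_claws}, $B^{*}$ is again a $K$-join of $G$ (any subset of a $K$-join is a $K$-join), and by construction none of its vertices lies in a claw or $4$-cycle, so $B^{*}$ is a \emph{clean} $K$-join of $G$. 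Now use reducedness under Rule~\ref{rule:$K$-join}: a clean $K$-join of size at least $2k+2$ triggers the rule, which leaves behind only its $k+1$ first and $k+1$ last vertices; hence exhaustive application of the rule guarantees that every clean $K$-join of $G$ has at most $2k+2$ vertices, and in particular $|B^{*}|\le 2k+2$. Putting the two bounds together,
\[
|B| \;\le\; |B^{*}| + (k^{3}+4k^{2}+5k+1) \;\le\; (2k+2) + (k^{3}+4k^{2}+5k+1) \;=\; k^{3}+4k^{2}+7k+3,
\]
which is the claimed bound.

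There is essentially no hard step here — the statement is a bookkeeping corollary of the two preceding results — but two points deserve a word of care. First, one must check that the hypotheses of Lemma~\ref{lem:center_of_claws} are met, i.e. that $G$ is a positive instance and has been reduced by Rules~\ref{rule:twins} and~\ref{rule:sunflower}; both are included among the reductions assumed in the statement. Second, one should be precise about what ``reduced under Rule~\ref{rule:$K$-join}'' yields: the rule is applicable to a clean $K$-join of size exactly $2k+2$ but deletes nothing in that case, so exhaustive application leaves clean $K$-joins of size at most $2k+2$ (not $2k+1$) — this is exactly what produces the additive constant $3$ rather than $2$ in the final bound.
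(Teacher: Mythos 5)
Your proof is correct and follows essentially the same route as the paper's: the paper argues by contradiction (if $|B|>k^3+4k^2+7k+3$, extracting the clean part leaves a clean $K$-join of size $>2(k+1)$, contradicting reducedness under Rule~\ref{rule:$K$-join}), which is just the contrapositive of your direct computation. Your citation of Lemma~\ref{lem:center_of_claws} for the $k^3+4k^2+5k+1$ bound and your remark that exhaustive application of Rule~\ref{rule:$K$-join} caps clean $K$-joins at $2k+2$ (not $2k+1$) are both accurate.
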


\begin{proof}
Let $B$ be any $K$-join of $G$, and assume $|B|>k^3+4k^2+7k+3$. 
By Lemma~\ref{lem:claws_and_$4$-cycles} we know that 
it is possible to extract a clean $K$-join from $B$ of size
at least $|B|-(k^3+4k^2+5k+1)>2(k+1)$ what is impossible after having 
applied Rule~\ref{rule:$K$-join}.
\end{proof}
		
\subsubsection{Cutting the $1$-branches}

We now turn our attention to branches of a graph $G = (V,E)$, proving
how they can be reduced.
		
\begin{lemma}
\label{lem:1branches}
Let $G = (V,E)$ be a connected graph and $B$ be a $1$-branch of $G$
associated with the umbrella ordering $\sigma_B$. Assume that $|B^R|
\ge 2k + 1$ and let $B_f$ be the $2k +1$ last vertices of $B^R$
according to $\sigma_B$. For any $k$-completion $F$ of $G$ into a
proper interval graph, there exists a $k$-completion $F'$ of $G$
with $F'\subseteq F$ and  a vertex $b \in B_f$ such that the
umbrella ordering of $G + F'$ preserves the order of the set $B_b = \{v
\in V\ :\ b_1 \le_{\sigma_B} v \le_{\sigma_B} b_{l'}\}$, where $l'$ is
the maximal index such that $bb_{l'} \in E(G)$. Moreover, the vertices of
$B_b$ are the first in an umbrella ordering of $G+F'$.
\end{lemma}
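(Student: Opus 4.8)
The plan is to start from an arbitrary $k$-completion $F$ of $G$, let $H = G+F$, and fix an umbrella ordering $\sigma_H$ of $H$. The first observation is that since $G[B]$ is a connected proper interval graph and $B$ sits in $G$ as described in Definition~\ref{def:1branch} (with no edges between $B$ and $C$, and the join-like structure between the tail of $B$ and $R$), the completion $F$ may reorder the vertices of $B$ in $\sigma_H$, but it cannot do so too wildly: I want to argue that in $\sigma_H$ some ``initial chunk'' of $B^R$ appears as a prefix, or can be made to appear as a prefix after deleting some edges of $F$. Concretely, I would look at where $b_1$ (more precisely the true-twin class of $b_1$, and more generally the vertices of $B^R$) lands in $\sigma_H$, and at what lies to its left. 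Anything to the left of the leftmost vertex of $B^R$ in $\sigma_H$ that is adjacent (in $H$) to a vertex of $B^R$ but non-adjacent in $G$ witnesses an edge of $F$ incident to $B^R$; since $|F|\le k$ and $|B_f| = 2k+1$, a counting/pigeonhole argument will produce a vertex $b\in B_f$ whose $G$-neighbourhood to the ``early'' side is not touched by $F$, i.e. $b$ and the block $B_b$ before it behave in $H$ exactly as in $G$.

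The second step is to turn that vertex $b$ into the claimed $F'$. Having found $b\in B_f$ such that no edge of $F$ is incident to $B_b=\{v : b_1\le_{\sigma_B} v\le_{\sigma_B} b_{l'}\}$ on its ``outside'' (towards $R$ and towards the rest of the graph) in a way that forces reordering, I would define $F'$ by removing from $F$ precisely those edges that are incident to $B_b$ and whose removal is harmless. The key point, using Observation~\ref{obs:extremal} and the umbrella property, is that the edges of $F$ between $B_b$ and $V\setminus B_b$ are never extremal and can be deleted while keeping an umbrella ordering — we simply slide $B_b$ to the front. I then need to check that $F'$ is still a completion: the graph $G+F'$ restricted away from $B_b$ is an induced subgraph of $H$ (hence proper interval), and $G[B_b]$ together with its genuine $G$-edges to $R$ forms, by the $1$-branch hypothesis and the fact that $B_b$ is a clique (it is contained in the attachment clique / is an initial segment ending at a neighbour of $b$), a proper interval graph that glues onto the rest via a join. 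Concatenating the umbrella ordering $b_1,\dots$ of $B_b$ (in the order $\sigma_B$ gives) with the induced umbrella ordering of $G+F'$ on $V\setminus B_b$ yields an umbrella ordering of $G+F'$, and $|F'|\le|F|\le k$.

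The main obstacle I anticipate is the reordering/gluing verification: one must be careful that after sliding $B_b$ to the front and deleting the $F$-edges leaving $B_b$, the umbrella property is not violated at the seam between $B_b$ and the next block. This requires knowing that every vertex of $V\setminus B_b$ that is $G$-adjacent to some vertex of $B_b$ is in fact adjacent (in $G+F'$) to the \emph{last} vertices of $B_b$ — which is exactly what the nested-neighbourhood condition $N_R(b_i)\subseteq N_R(b_{i+1})$ of the $1$-branch guarantees for $R$, while for $b_1,\dots,b_{l-1}$ there are no $R$-edges at all, and there are no $C$-edges by definition. So the seam condition reduces to the branch structure plus the choice of $b$; the bookkeeping of which edges of $F$ are removed, and the verification that none of them was extremal (so Observation~\ref{obs:extremal}-type reasoning applies and optimality/size is preserved), is the technical heart. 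I would organize the argument as: (1) locate $B^R$ in $\sigma_H$ and bound the $F$-edges crossing into its prefix; (2) pigeonhole to get $b\in B_f$; (3) define $F'\subseteq F$ and exhibit the concatenated umbrella ordering with $B_b$ as a prefix; (4) conclude $|F'|\le k$.
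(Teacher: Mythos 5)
Your overall strategy is the same as the paper's: use $|B_f|=2k+1$ and $|F|\le k$ to find a vertex $b\in B_f$ not incident to any edge of $F$, cut at $N_G[b]$, move $B_b$ to the front in the order given by $\sigma_B$, and shrink $F$ to an $F'\subseteq F$. But the step you yourself identify as the technical heart is handled incorrectly. First, a small overstatement: the pigeonhole only guarantees that $b$ itself is unaffected, not that ``the block $B_b$ before it behaves in $H$ exactly as in $G$''; $F$ may contain many edges inside $B_b$ and between $B_b$ and the rest, and these must be treated explicitly. Second, and more seriously, your claim that the edges of $F$ between $B_b$ and $V\setminus B_b$ ``are never extremal and can be deleted while keeping an umbrella ordering'' has the logic of extremality backwards. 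In an umbrella ordering a non-extremal edge is \emph{forced} by any edge above it (if $u'v'\in E$ with $u'\le_\sigma u<_\sigma v\le_\sigma v'$ then $uv\in E$), so it is exactly the \emph{extremal} edges of $F$ that may be safely discarded (Observation~\ref{obs:extremal}); deleting a non-extremal crossing edge generically destroys the ordering. Concretely, a vertex $u\in N_G(b)$ may have a genuine $G$-edge to some $w\in V\setminus B_b$ lying late in the induced order of $V\setminus B_b$, while its adjacency to an intermediate vertex $v$ exists only through $F$; deleting $uv$ but keeping the undeletable $G$-edge $uw$ violates the umbrella property on the triple $(u,v,w)$. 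This is why the paper \emph{keeps} the $F$-edges between $N_D$ (the neighbours of $b$ after $b$ in $\sigma_B$) and $V\setminus B_b$, removes only $F[B]$, the $F$-edges between $B_b\setminus N_D$ and $V\setminus B_b$, and (iteratively) the extremal $F$-edges of the new ordering, and then needs a four-case verification of the umbrella property at the seam --- the hardest case being $u,v\in N_D$, $w\in C$ with $uw\in F'$ non-extremal, resolved by locating the extremal $G$-edge above $uw$ and exploiting the nested neighbourhoods of the branch. Your sketch does not engage with this case at all.

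A second missing ingredient is the justification for concatenating $B_b$ with ``the induced umbrella ordering of $G+F'$ on $V\setminus B_b$''. For that induced ordering to glue onto the tail of $B_b$, the neighbours of $N_G[b]$ in $V\setminus B_b$ must form a prefix of it, and for this one first needs that \emph{all} of $V\setminus B_b$ lies on one side of $b$ in $\sigma_H$. The paper proves this (its Claim~\ref{claim:useful}) from the connectivity of $G$ and of $G[V\setminus B_b]$ together with $N_H[b]=N_G[b]$: a path of $G[V\setminus B_b]$ joining vertices on opposite sides of $b$ would produce an edge jumping over $b$ with neither endpoint adjacent to $b$, contradicting the umbrella property. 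Your plan never establishes this, and without it the concatenation step has no basis. So while the skeleton (pigeonhole, $B_b$, $F'\subseteq F$, re-gluing) is right, the proposal as written would fail at the re-gluing verification.
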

		
\begin{proof}
Let $F$ be any $k$-completion of $G$, $H = G + F$ and $\sigma_H$ be
the umbrella ordering of $H$. Since $|B _f| = 2k + 1$ and $|F|
\le k$, there exists a vertex $b \in B_f$ not incident to any
added edge of $F$.  We let $N_D$ be the set of neighbors of $b$ that are
\emph{after} $b$ in $\sigma_B$, $B'$ the set of vertices that are
\emph{before} $N_G[b]$ in $\sigma_B$, $B_b = B' \cup N_G[b]$ and $C = V \setminus B_b$
(see Figure~\ref{fig:1branchrule}).
			
\begin{claim}
\label{claim:useful}
$(i)$ $G[C]$ is a connected graph and\\
$(ii)$ Either $\forall u\in C\ b<_{\sigma_H}u$ holds or 
 $\forall u\in C\ u<_{\sigma_H}b$ holds.

\end{claim} 

\emph{Proof.} The first point follows from the fact that $G$ is
connected and that, by construction, $B_1 \subseteq C$ and $B_1$ is
connected. To see the second point, assume that there exist $u,v \in
C$ such that w.l.o.g. $u <_{\sigma_H} b <_{\sigma_H} v$. Since $G[C]$
is a connected graph, there exists a path between $u$ and $v$ in $G$
that avoids $N_G[b]$, which is equal to $N_H[b]$ since $b$ is not
incident to any edge of $F$. Hence there exist $u',v' \in C$ such that
$u' <_{\sigma_H} b <_{\sigma_H} v'$ and $u'v' \in E(G)$. Then, we have
$u'b, v'b \notin E(H)$, contradicting the fact that $\sigma_H$ is an
umbrella ordering for $H$.  \hfill $\diamond$ \\
			
In the following, we assume w.l.o.g. that $b <_{\sigma_H} u$ holds for
any $u \in C$.  We now consider the following ordering $\sigma$ of
$H$: we first put the set $B_b$ according to the order of $B$ and then
put the remaining vertices $C$ according to $\sigma_H$ (see
Figure~\ref{fig:1branchrule}).  We construct a completion $F'$ from
$F$ as follows: we remove from $F$ the edges with both extremities in
$B_b$, and remove all edges between $B_b \setminus N_D$ and $C$. In
other words, we set:
$$F' = F \setminus (F[B] \cup F[(B_b \setminus N_D) \times C])$$
Finally, we inductively remove from $F'$ any extremal edge of 
$\sigma$ that belongs to $F$, and abusively still call $F'$ 
the obtained edge set.
			
\begin{figure}[ht]
\centerline{
\input{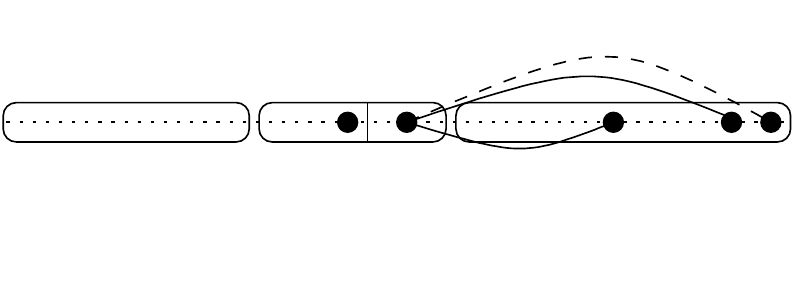_t}}
\caption{The construction of the ordering $\sigma$ according to
  $\sigma_H$.\label{fig:1branchrule}}
\end{figure}
			
\begin{claim}
\label{claim:newopt}
The set $F'$ is a $k$-completion of $G$.
\end{claim}

\emph{Proof.}  We prove that $\sigma$ is an umbrella ordering of $H' =
G + F'$. Since $|F'| \le |F|$ by construction, the result will
follow. Assume this is not the case. By definition of $F'$, $H'[B_b]$
and $H'[C]$ induce proper interval graphs. This means that there
exists a set of vertices $S = \{u,v,w\}$, $u <_{\sigma} v <_{\sigma}
w$, intersecting both $B_b$ and $C$ and violating the umbrella
property. We either have $(1)$ $uw \in E, uv \notin E$ or $(2)$ $uw
\in E, vw \notin E$.  Since neither $F'$ nor $G$ contain an edge
between $B_b \setminus N_D$ and $C$, it follows that $S$ intersects
$N_D$ and $C$. We study the different cases:

\begin{enumerate}[(i)]
\item \label{item1:1branches} $(1)$ holds and $u \in N_D,\ v,w \in C$:
  since the edge set between $N_D$ and $C$ is the same in $H$ and
  $H'$, it follows that $uv \notin E(H)$. Since $\sigma_H$ is an
  umbrella ordering of $H$, we either have $v <_{\sigma_H} u
  <_{\sigma_H} w$ or $v <_{\sigma_H} w <_{\sigma_H} u$ (recall that
  $C$ is in the same order in both $\sigma$ and $\sigma_H$).  Now,
  recall that $b <_{\sigma_H} \{v,w\}$ by assumption. In particular,
  since $bu \in E(G)$, this implies in both cases that $\sigma_H$ is
  not an umbrella ordering, what leads to a contradiction.
\item $(1)$ \label{item2:1branches} holds and $u,v \in N_D,\ w \in C$:
  this case cannot happen since $N_D$ is a clique of $H'$.
\item $(2)$ \label{item3:1branches} holds and $u \in N_D,\ v,w \in C$:
  this case is similar to (\ref{item1:1branches}). Observe that we may assume 
  $uv \in E(H)$ (otherwise (\ref{item1:1branches}) holds). By construction $vw
  \notin E(H)$ and hence $v <_{\sigma_H} w <_{\sigma_H} u$ or $v
  <_{\sigma_H} u <_{\sigma_H} w$. The former case contradicts the fact
  that $\sigma_H$ is an umbrella ordering since $bu \in E(H)$. In the
  latter case, since $\sigma_H$ is an umbrella ordering this means
  that $bv \in E(H)$. Since $b$ is non affected vertex and $bv \notin
  E(G)$, this leads to a contradiction.
\item $(2)$ \label{item4:1branches} holds and $u,v \in N_D,\ w \in C$:
  first, if $uw \in E(G)$, then we have a contradiction since $N_C(u)
  \subseteq N_C(v)$. So, we have $uw \in F'$. By construction of $F'$,
  we know that $uw$ is not an extremal edge. Hence there exists an
  extremal edge (of $G$) containing $uw$, which is either $uw'$ with
  $w <_{\sigma} w'$ , $u'w$ with $u' <_{\sigma} u$ or $u'w'$ with $u'
  <_{\sigma} u <_{\sigma} w <_{\sigma} w'$. The three situation are
  depicted in Figure~\ref{fig:1branchrule1}.  In the first case, $vw' \in E(G)$
  (since $N_C(u) \subseteq N_C(v)$ in $G$) and hence we are in configuration
  (\ref{item1:1branches}) with vertex set $\{v,w,w'\}$.  In the second
  case, since $u'w \in E(G)$, we have a contradiction since $N_C(u')
  \subseteq N_C(v)$ in $G$ (observe that $u' \in B$ by construction).
  Finally, in the third case, $uw', vw' \in E(G)$ since $N_C(u')
  \subseteq N_C(u) \subseteq N_C(v)$ in $G$, and we are in configuration
  (\ref{item1:1branches}) with vertex set $\{v,w,w'\}$.
			
\begin{figure}[ht]
\centerline{
\input{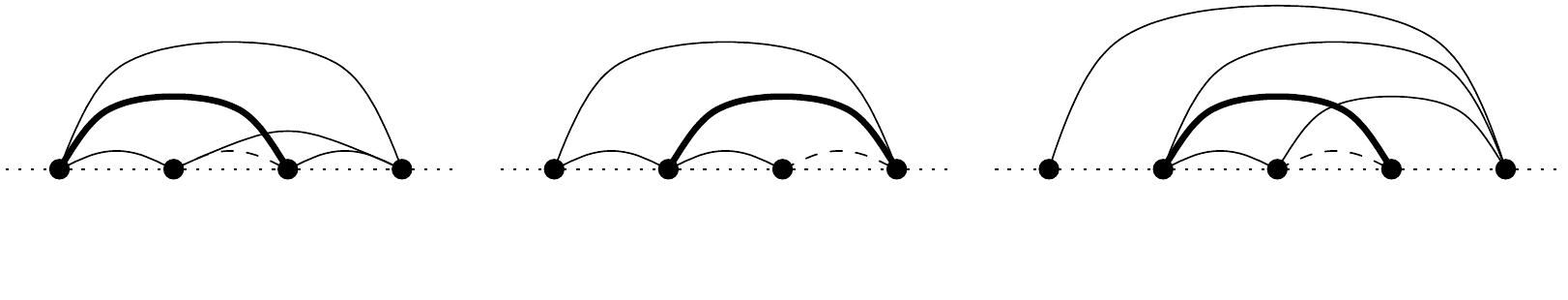_t}}
\caption{Illustration of the different cases of configuration
  (\ref{item4:1branches}) (the bold edges belong to $F'$). \label{fig:1branchrule1}}
\end{figure}
\end{enumerate}

\vspace{-0.5cm}

\hfill $\diamond$ \\
			
Altogether, we proved that for any $k$-completion $F$, there
exists an umbrella ordering where the vertices of $B_b$ are ordered in
the same way than in the ordering of $B$ and stand at the beginning of this
ordering, what concludes the proof.
\end{proof}
		
\begin{polyrule}[$1$-branches]
\label{rule:1branches}
Let $B$ be a $1$-branch such that $|B^ R| \ge 2k + 1$.  Remove
$B^R \setminus B_f$ from $G$, where $B_f$ denotes the $2k +1$ last
vertices of $B^R$.
\end{polyrule}
		
\begin{lemma}
\label{lem:1branchesrule}
Rule~\ref{rule:1branches} is safe.
\end{lemma}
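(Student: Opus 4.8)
The plan is to show that Rule~\ref{rule:1branches} does not change the existence of a $k$-completion, by proving both implications. The easy direction is that if the reduced graph $G' = G \setminus (B^R \setminus B_f)$ has a $k$-completion, then so does $G$: this requires re-inserting the removed vertices $B^R \setminus B_f$ into an umbrella ordering of $G' + F'$ without adding edges. The forward direction uses Lemma~\ref{lem:1branches}: given a $k$-completion $F$ of $G$, we obtain a $k$-completion $F'$ and a vertex $b \in B_f$ such that $G + F'$ has an umbrella ordering starting with the set $B_b$ (the vertices up to $b_{l'}$, where $b_{l'}$ is the last neighbor of $b$ in $G$), ordered exactly as in $\sigma_B$. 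Restricting such a completion to $G'$ (proper interval graphs being closed under induced subgraphs) gives a $k$-completion of $G'$, so that direction is immediate once Lemma~\ref{lem:1branches} is in hand.

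So the substance is in the backward direction. First I would observe that $B_f$ (the $2k+1$ last vertices of $B^R$) is retained in $G'$, so the subset $B_f$ together with $B_1$ — i.e. the whole ``right part'' $B_1 \cup B_f$ of the $1$-branch, which is itself a $1$-branch of $G'$ (its $B^R$-part being exactly $B_f$, of size $2k+1$) — still satisfies the hypotheses of Lemma~\ref{lem:1branches}. Applying that lemma to $G'$ yields a $k$-completion $F'$ of $G'$ and a vertex $b \in B_f$ such that $H' = G' + F'$ admits an umbrella ordering $\sigma_{H'}$ in which the set $B_b = \{v : b_1 \le_{\sigma_B} v \le_{\sigma_B} b_{l'}\}$ (here $b_{l'}$ is the last neighbor of $b$ in $G$, and note $B_b$ contains all of $B' \cup N_G[b]$, in particular all vertices of $B$ with small index, hence all the truncated vertices' ``neighborhood context'' from the left side) appears first, ordered exactly as in $\sigma_B$, and with $F'$ containing no edge inside $B_b$.

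The key step is then to re-insert the vertices of $M := B^R \setminus B_f$ into $\sigma_{H'}$. Since $B$ is a $1$-branch, the vertices of $M$ lie in $B^R$ and come, in $\sigma_B$, before all vertices of $B_f$ and after $b_1$; moreover by the $1$-branch definition their neighborhoods outside $B$ are confined to $R$ and are nested ($N_R(b_i) \subseteq N_R(b_{i+1})$), and $M$ has no neighbor in $C$. Crucially, $B_b$ already contains $B'$ and $N_G[b]$, and every vertex of $M$ is, in $G$, adjacent only to vertices of $B$ and of $R$ — all of which either lie in $B_b$ or are ``to the right'' of $B_b$ in $\sigma_{H'}$ in a way controlled by the join structure. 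I would insert each $m \in M$ at the position dictated by $\sigma_B$ (just before the block of $B_f$, in the same relative order as in $\sigma_B$), and then verify the umbrella property directly: a violation would involve a triple $\{u,v,w\}$ with $m$ among them, but the nestedness $N_R(b_i) \subseteq N_R(b_{i+1})$ for $i \ge l$, the fact that $b$ itself is not incident to any edge of $F'$ (so $N_{H'}[b] = N_G[b]$), and the absence of $M$--$C$ edges in $G$ together rule out each configuration, exactly as in the proof of Lemma~\ref{lem:$K$-join}. Since no new edges are added during re-insertion, the resulting $F'$ (viewed in $G$) is still of size at most $k$, which establishes safety.

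The main obstacle is the re-insertion argument: one has to be careful that the vertices of $M$ can really be slotted in at the ``far right of $B^R$'' without creating an umbrella violation against vertices of $R$ that might themselves have been placed by $F'$ in positions reflecting a different local order. This is where the hypothesis that $\sigma_{H'}$ preserves the order of $B_b$ and that $B_b$ contains all of $N_G[b]$ (so that $b$ ``anchors'' the boundary between $B$ and the rest) is essential, together with the nested-neighborhood property of the $1$-branch and the cleanness-type arguments (claws/$4$-cycles) used in Claim~\ref{claim:umbrellam}. I would expect the bulk of the write-up to mirror Claims~\ref{claim:clique}--\ref{claim:properm} of Lemma~\ref{lem:$K$-join}, specialized to the $1$-branch setting, and would simply invoke those arguments rather than reproving them in full.
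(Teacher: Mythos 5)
Your proposal is correct and follows essentially the same route as the paper: the forward direction is immediate since proper interval graphs are closed under induced subgraphs, and the backward direction applies Lemma~\ref{lem:1branches} to the $1$-branch $B_1 \cup B_f$ of the reduced graph $G'$ and then re-inserts the vertices of $B^R \setminus B_f$ into the resulting umbrella ordering in their $\sigma_B$-positions. The one simplification you miss is that, by Definition~\ref{def:1branch}, $B^R$ has no edges to $R$ or $C$ at all, so every neighbor of a removed vertex already lies in $N_G[b] \subseteq B_b$; hence the re-insertion is immediate and requires none of the claw/$4$-cycle machinery of Lemma~\ref{lem:$K$-join} that you invoke, nor any worry about vertices of $R$.
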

		
\begin{proof}
Let $G' = G \setminus (B^R \setminus B_f)$ denote the reduced
graph. Observe that any $k$-completion of $G$ is a $k$-completion of
$G'$ since proper interval graphs are closed under induced
subgraphs. So let $F$ be a $k$-completion of $G'$. We denote by $H =
G' + F$ the resulting proper interval graph and let $\sigma_H$ be the
corresponding umbrella ordering. By Lemma~\ref{lem:1branches} we know
that there exists a vertex $b \in B_f$ such that the order of $B_b =
N_G[b] \cup \{v \in B_f\ :\ v <_{\sigma_B} N_G[b]\}$ is the same than in
$B$ and the vertices of $B_b$ are the first of $\sigma_H$. Since
$N_G(B^R \setminus B_f) \subseteq N_G[b]$, it follows that the
vertices of $B^R \setminus B_f$ can be inserted into $\sigma_H$ while
respecting the umbrella property. Hence $F$ is a $k$-completion for
$G$, implying the result.
\end{proof}
		
Here again, the time complexity needed to compute
Rule~\ref{rule:1branches} will be discussed in the next section.  The
following property of a reduced graph will be used to bound the size
of our kernel.
		
\begin{observation}
\label{obs:1branch}
Let $G = (V,E)$ be a positive instance of \PIC{} reduced under
Rules~\ref{rule:twins} to \ref{rule:1branches}. The $1$-branches of
$G$ contain at most $k^3+4k^2+9k+4$ vertices.
\end{observation}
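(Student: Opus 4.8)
The plan is to combine the structure of a $1$-branch with the bounds already established on $K$-joins. Recall that a $1$-branch $B$ decomposes into its attachment clique $B_1$ and the part $B^R = B \setminus B_1$; after applying Rule~\ref{rule:1branches} we know that $|B^R| \le 2k+1$, so it remains only to bound $|B_1|$. The key observation is that $B_1$, being a clique, is in particular a $K$-join of $G$: indeed, $B_1$ is exactly the attachment clique, and by Definition~\ref{def:1branch} the set $V \setminus B$ splits into $R$ (whose vertices see an end segment of $\sigma_B$ in a nested fashion) and $C$ (with no edges to $B$). Writing $B_1 = (B_1 \cap B^R) \cup (B \setminus B^R)$ is not quite the right split; rather, I would verify directly that $B_1$ together with the outside vertices $R$ and $C$ satisfies the definition of a $1$- or $2$-branch with $G[B_1]$ a clique — the nesting condition $N_R(b_i) \subseteq N_R(b_{i+1})$ restricted to $B_1$ is inherited from the branch, and there are no edges from $B_1$ to $C$. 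Hence $B_1$ is a $K$-join.

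First I would invoke Observation~\ref{obs:clean$K$-join}: since $G$ is reduced under Rules~\ref{rule:twins} through~\ref{rule:$K$-join}, every $K$-join of $G$ has size at most $k^3 + 4k^2 + 7k + 3$. Applying this to the $K$-join $B_1$ gives $|B_1| \le k^3 + 4k^2 + 7k + 3$. Then, since $B = B_1 \cup B^R$ and after Rule~\ref{rule:1branches} we have $|B^R| \le 2k+1$, we conclude
\[
|B| \;=\; |B_1| + |B^R| \;\le\; (k^3 + 4k^2 + 7k + 3) + (2k+1) \;=\; k^3 + 4k^2 + 9k + 4,
\]
which is exactly the claimed bound.

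The main obstacle I anticipate is the small technical point that $B_1$ really is a $K$-join in the precise sense of the paper's definition — one must check that the "outside" of $B_1$ decomposes correctly (the vertices of $B^R \setminus B_1$ merge into the role of $R$, since they are adjacent only to an end segment of $B_1$ by the umbrella/nesting property of $\sigma_B$, while $C$ and the original $R$ play their same roles, and no spurious claw or long-hole obstruction is introduced). Once that is in place the rest is a one-line arithmetic combination of the two size bounds. A secondary subtlety is whether the clique $B_1$ could additionally be required to be \emph{clean}: it need not be, but that is fine, because Observation~\ref{obs:clean$K$-join} already bounds the size of \emph{every} $K$-join (clean or not) after the reductions, precisely by absorbing the $k^3 + 4k^2 + 5k + 1$ vertices that might lie in claws or $4$-cycles (Lemma~\ref{lem:claws_and_$4$-cycles}) on top of the $2(k+1)$ surviving clean vertices. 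So no further cleaning argument is needed here.
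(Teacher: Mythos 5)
Your proof is correct and takes essentially the same route as the paper's: both bound the attachment clique $B_1$, which is a $K$-join, by $k^3+4k^2+7k+3$ via Observation~\ref{obs:clean$K$-join}, bound $B^R$ by $2k+1$ because Rule~\ref{rule:1branches} no longer applies, and add the two quantities (the paper merely phrases this as a contradiction, and asserts without detail the fact you check explicitly, namely that $B_1$ is a $K$-join).
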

		
\begin{proof}
Let $B$ be a $1$-branch of a graph $G = (V,E)$ reduced under
Rules~\ref{rule:twins} to \ref{rule:1branches}. Assume $|B| >
k^3+4k^2+9k+4$. Since $G$ is reduced under Rule~\ref{rule:$K$-join}, we
know by Observation~\ref{obs:clean$K$-join} that the attachment clique
$B_1$ of $B$, which is a $K$-join, contains at most $k^3+4k^2+7k+3$
vertices. This implies that $|B^R| > 2k+1$, which cannot be
since $G$ is reduced under Rule~\ref{rule:1branches}.
		\end{proof}

\subsubsection{Cutting the $2$-branches}
\label{section:cutting_the_2branch}

To obtain a rule reducing the 2-branches, we need to introduce a
particular decomposition of 2-branches into $K$-joins.  Let $B$ be a
2-branch with an umbrella ordering $\sigma_B=b_1,\dots ,b_{|B|}$.  As
usual, we denote by $B_1=b_1,\dots, b_{l'}$ its first attachment
clique and by $B_2=b_l,\dots ,b_{|B|}$ its second. The reversal of the
permutation $\sigma_B$ gives a second possibility to fix $B_1$ and
$B_2$. We fix one of these possibilities and define ${\cal B}$, the
\emph{$K$-join decomposition} of $B$.  The $K$-joins of $\cal B$ are
defined by $B'_i=b_{l_{i-1}+1},\dots ,b_{l_{i}}$ where $b_{l_{i}}$ is
the neighbor of $b_{l_{i-1}+1}$ with maximal index. The first $K$-join
of $\cal B$ is $B_1$ (so, $l_0=0$ and $l_1=l'$), and once $B'_{i-1}$
is defined, we set $B'_{i}$: if $b_{l_{i-1}+1} \in B_2$, then we
set $B'_i=b_{l_{i-1}+1},\dots ,b_{|B|}$, otherwise, we choose
$B'_i=b_{l_{i-1}+1},\dots ,b_{l_{i}}$ (see
Figure~\ref{fig:bijoin_dec}).
	
\begin{figure}[ht]
\centerline{
\input{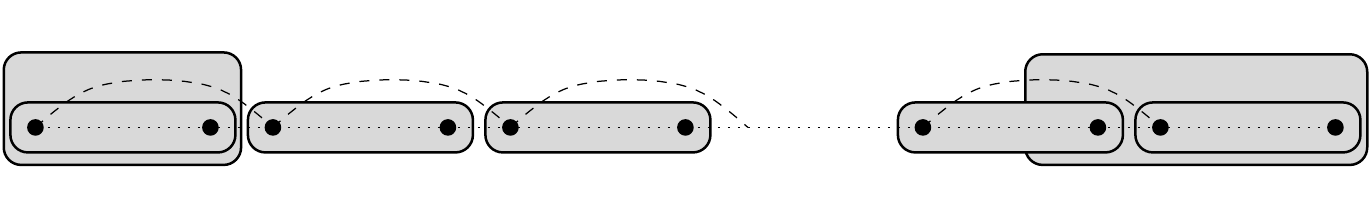_t}}
\caption{The $K$-join decomposition. \label{fig:bijoin_dec}}
\end{figure}

Now, we can prove the next lemma, which bounds the number of $K$-joins 
in the $K$-join decomposition of a $2$-branch providing that some 
connectivity assumption holds.

\begin{lemma}
\label{lem:2branchescc}
Let $G = (V,E)$ be an instance of \PIC{} and $B$ be a $2$-branch
containing  $p \ge (k+4)$ $K$-joins in its $K$-join
decomposition. Assume the attachment cliques $B_1$ and $B_2$ of $B$
belong to the same connected component of $G[V \setminus B^R]$ . Then
there is no $k$-completion for $G$.
\end{lemma}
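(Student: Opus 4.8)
The plan is to argue by contradiction: suppose $F$ is a $k$-completion of $G$, let $H = G+F$ and let $\sigma_H$ be an umbrella ordering of $H$. The key idea is that a $2$-branch with many $K$-joins forces a long ``linear'' structure in $H$, while the connectivity hypothesis forces $B^R$ to be ``straddled'' by a path through $V\setminus B^R$; combining these two facts will produce more than $k$ forbidden subgraphs (or more than $k$ independent edges of $F$), contradicting $|F|\le k$.

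First I would set up the structure. Write the $K$-join decomposition ${\cal B} = B'_1,\dots,B'_p$ of $B$ along $\sigma_B = b_1,\dots,b_{|B|}$, with $B'_1 = B_1$ and $B'_p = B_2$ the two attachment cliques. The defining property of the decomposition is that consecutive $B'_i, B'_{i+1}$ overlap only in a controlled way and, crucially, for indices in the ``interior'' $B^R$ there are no edges from $B$ to anything outside $L\cup R\cup N$ — in particular each extremal edge of $\sigma_B$ inside a $K$-join $B'_i$ witnesses that the two endpoints have different neighbourhoods towards $R$ (on the right) or $L$ (on the left). I would then use the connectivity assumption: since $B_1$ and $B_2$ lie in the same connected component of $G[V\setminus B^R]$, there is a path $P$ in $G$ from $B_1$ to $B_2$ avoiding $B^R$ entirely. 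All the internal vertices of $P$ lie in $L\cup R\cup N\cup C$ (by the definition of a $2$-branch, $C$ has no edges to $B$, so actually $P$'s vertices that touch $B$ lie in $L\cup R\cup N$).

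The heart of the argument is then a counting step. Consider the umbrella ordering $\sigma_H$. The vertices of $B_1$ form a clique that must sit ``to the left'' of the vertices of $B_2$ relative to the bulk of $B^R$, or vice versa — more precisely, since $G[B]$ is already a connected proper interval graph respecting $\sigma_B$, and since the path $P$ forces $B_1$ and $B_2$ to be connected through vertices outside $B^R$, the ordering $\sigma_H$ must either keep the $B'_i$'s essentially in the order $\sigma_B$ (so the path $P$ has to ``jump over'' the whole branch) or reverse it. In the first case, for each interior $K$-join $B'_i$ ($2 \le i \le p-1$) I would exhibit, using an extremal edge of $B'_i$ together with a vertex of the path $P$ and a vertex of $R$ (or $L$) that distinguishes the endpoints of that extremal edge, a claw or a $4$-cycle in $G$ whose completing non-edge is forced into $F$; and I would argue these forced non-edges are distinct across different $K$-joins because the distinguishing vertices $c_i$ associated to the extremal edges of distinct $K$-joins are distinct (this is exactly the mechanism used in the proof of Lemma~\ref{lem:center_of_claws}). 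Since there are $p-2 \ge k+2$ interior $K$-joins, this gives more than $k$ distinct edges of $F$, a contradiction. In the ``reversed'' case one gets a symmetric contradiction by swapping the roles of $L$ and $R$. Finally, the degenerate sub-case where $B^R = \emptyset$ (so $p$ is small) does not arise since $p \ge k+4 \ge 4$.

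I expect the main obstacle to be making precise the dichotomy ``$\sigma_H$ preserves or reverses the order of the $K$-joins'' and, within each branch of it, pinning down \emph{which} three or four vertices form the forbidden subgraph for a given interior $K$-join and why its completing edge lies in $F$ rather than already in $G$. The subtlety is that $F$ may have rearranged things so that a ``distinguishing'' vertex of an extremal edge of $B'_i$ in $G$ is no longer placed where one expects in $\sigma_H$; handling this will likely require, as in Lemma~\ref{lem:$K$-join} and Lemma~\ref{lem:1branches}, a case analysis on where the relevant vertices of $L$, $R$, $N$ and the path $P$ fall in $\sigma_H$, using Observation~\ref{obs:extremal} to locate an extremal edge of $G$ whenever an edge turns out to be in $F$. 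Once the forbidden subgraph is identified in each case, the distinctness of the associated non-edges (hence the counting) follows from the $K$-join structure exactly as in Lemma~\ref{lem:center_of_claws}.
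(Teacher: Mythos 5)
There is a genuine gap, and it lies in the core counting mechanism you propose. You plan to exhibit, for each of the $p-2\ge k+2$ interior $K$-joins, a claw or a $4$-cycle of $G$ whose completing non-edge is forced into $F$, and to argue distinctness as in Lemma~\ref{lem:center_of_claws}. But under the hypotheses of this lemma the graph need not contain \emph{any} claw or $4$-cycle. Take $G$ to be a single induced cycle of length about $2k+10$: every vertex has degree $2$, so there is no claw, and the only induced cycle is $G$ itself, so there is no $C_4$ either; yet one can exhibit a $2$-branch $B$ (a long subpath, with $L$ and $R$ each a single vertex and $C=\emptyset$) whose $K$-join decomposition has $p\ge k+4$ parts and whose attachment cliques are joined by the remaining edge of the cycle. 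The obstruction to a $k$-completion here is the long \emph{hole}, not any claw or $4$-cycle, so your counting argument finds nothing to count. The further machinery you invoke (the dichotomy on whether $\sigma_H$ preserves or reverses the order of the $K$-joins, relocating distinguishing vertices after completion) is also left as an acknowledged obstacle rather than resolved, but the claw/$C_4$ mechanism is the step that actually fails.

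The paper's proof goes a different and much more direct way, entirely inside $G$ and without ever invoking a completion's umbrella ordering: it takes a shortest path $\pi$ between $B_1$ and $B_2$ in $G[V\setminus B^R]$ (which exists by the connectivity hypothesis and has length at least two since $p\ge 3$ forces $B_1$ and $B_2$ to be non-adjacent), and builds an induced path of length at least $p-1$ through $B$ by greedily stepping from each vertex to its neighbour of maximum index in $\sigma_B$; the $K$-join decomposition guarantees this path advances by at least one $K$-join per step and that only its extremities meet $V\setminus B$. Concatenating with $\pi$ yields an induced cycle of length at least $p\ge k+4$, i.e. a hole, and filling a hole of length $q$ requires at least $q-3>k$ edges, which is the desired contradiction. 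You may want to redirect your argument toward constructing this hole rather than hunting for small forbidden subgraphs.
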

		
\begin{proof}
Let $B$ be a $2$-branch of an instance $G = (V,E)$ of \PIC{}
respecting the conditions of Lemma~\ref{lem:2branchescc}. Since $B_1$
and $B_2$ belong to the same connected component in $G[V \setminus
  B^R]$, let $\pi$ be a shortest path between $B_1$ and $B_2$ in $G[V
  \setminus B^R]$. As $B$ has $p\ge k+4\ge 3$ $K$-joins in its
decomposition, no vertex of $B_1$ is adjacent to a vertex f $B_2$ and
$\pi$ has length at least two.  We denote by $u \in B_1$ and $v \in
B_2$ the extremities of such a path. We now construct an induced path
$P_{uv}$ of length at least $p-1$ between $u$ and $v$ within $B$. To
do so, considering the $K$-join decomposition ${\cal B}=\{B'_1,\dots
,B'_p\}$ of $B$, we know that $u\in B'_1$ and that $v\in B'_{p-1}\cup
B'_p$.  We define $u_1=u$ and while $v\notin N[u_i]$, we choose
$u_{i+1}$ the neighbor of $u_i$ with maximum index in the umbrella
ordering of $B$.  In this case, we have $u_{i}\in \cup_{j=1}^{i}B'_j$
for every $1 \le i \le p$.  Indeed, the neighbor of a vertex of
$\cup_{j=1}^{i-1}B'_j$ with maximum index in the umbrella ordering of
$B$ is in $\cup_{j=1}^{i}B'_j$. Finally, when $v\in N[u_i]$, we just
choose $u_{i+1}=v$. So, the path $P_{uv}=u_1,\dots, u_l$ is an induced
path of length at least $p-1$, with $u_1=u$, $u_l=v$ and the only
vertices that could have neighbors in $G\setminus B$ are $u_1,
u_{l-1}$ and $u_l$ ($u_1\in B_1$, $u_l\in B_2$ and $u_{l-1}$ is
possibly in $B_2$). Using $\pi$, we can form an induced cycle of
length at least $p\ge k+4$ in $G$. Since at least $q - 3$ completions
are needed to triangulate any induced cycle of length
$q$~\cite{KST94}, it follows that there is no $k$-completion for $G$.
\end{proof}
		
The following observation is a straightforward implication of
Lemma~\ref{lem:2branchescc}.      
		
\begin{observation}
\label{obs:2branch_con}
Let $G = (V,E)$ be a connected positive instance of \PIC{}, reduced by
Rule~\ref{rule:$K$-join} and $B$ be a $2$-branch such that $G[V
  \setminus B^R]$ is connected. Then $B$ contains at most $k + 3$
$K$-joins in its $K$-join decomposition and hence at most $(k + 3)
(k^3+4k^2+5k+1)$ vertices.
\end{observation}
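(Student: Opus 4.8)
The plan is to obtain Observation~\ref{obs:2branch_con} as the contrapositive of Lemma~\ref{lem:2branchescc}, followed by a routine count over the $K$-join decomposition. First I would check that the hypothesis of Lemma~\ref{lem:2branchescc} is met: for a $2$-branch we have $B^R = B \setminus (B_1 \cup B_2)$, so both attachment cliques lie inside $V \setminus B^R$, and since $G[V \setminus B^R]$ is assumed connected they automatically belong to the same connected component of $G[V \setminus B^R]$. Now suppose the $K$-join decomposition $\mathcal{B} = \{B'_1, \dots, B'_p\}$ of $B$ has $p \ge k+4$ blocks. Then Lemma~\ref{lem:2branchescc} says that $G$ admits no $k$-completion, contradicting the assumption that $G$ is a positive instance of \PIC{}. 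Hence $p \le k+3$.

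For the vertex bound I would use that the blocks of the $K$-join decomposition, namely $B'_i = b_{l_{i-1}+1}, \dots, b_{l_i}$ (with the last block ending at $b_{|B|}$), are pairwise disjoint consecutive intervals of $\sigma_B$ whose union is $B$; in particular they genuinely partition $V(B)$, so that $|B| = \sum_{i=1}^p |B'_i|$. Each $B'_i$ is a $K$-join of $G$ — it is a subset of $B$ whose vertices span a clique, as it is built from an extremal edge of $\sigma_B$ in Section~\ref{section:cutting_the_2branch} — so since $G$ is reduced under Rule~\ref{rule:$K$-join} each $|B'_i|$ is bounded by the size bound for a $K$-join in a reduced instance (cf.\ Observation~\ref{obs:clean$K$-join}). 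Combining $p \le k+3$ with this per-block bound over the $p$ blocks yields the claimed bound on $|B|$.

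I do not expect any genuine obstacle here; the argument is essentially bookkeeping on top of Lemma~\ref{lem:2branchescc}. The only two points deserving an explicit line are: (i) that $\mathcal{B}$ really is a partition of $V(B)$ rather than an overlapping cover — the $K$-joins induced by all extremal edges of $\sigma_B$ do overlap, which is precisely why the particular decomposition $\mathcal{B}$ was singled out — so that summing block sizes is legitimate; and (ii) that each block $B'_i$ qualifies as a $K$-join of $G$ so that the size bound applies to it verbatim. Both are immediate from the construction of the $K$-join decomposition, so the proof is short.
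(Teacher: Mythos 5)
Your proposal is correct and follows exactly the route the paper intends: the paper states this observation without proof, calling it ``a straightforward implication of Lemma~\ref{lem:2branchescc}'', and your argument (contrapositive of that lemma for the bound $p\le k+3$, plus the fact that the blocks $B'_i$ partition $B$ and are each $K$-joins subject to the reduced-instance size bound) is precisely that implication spelled out. The only quibble is with the constant, and it is the paper's, not yours: Observation~\ref{obs:clean$K$-join} bounds a $K$-join by $k^3+4k^2+7k+3$, so summing over $k+3$ blocks literally gives $(k+3)(k^3+4k^2+7k+3)$ rather than the stated $(k+3)(k^3+4k^2+5k+1)$, which does not affect the $O(k^5)$ kernel bound.
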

		
\begin{polyrule}[$2$-branches]
\label{rule:2branches}
Let $G$ be a connected graph and $B$ be a $2$-branch such that $G[V
  \setminus B^ R]$ is not connected.  Assume that $|B^ R| \ge
4(k +1)$ and let $B'_1$ be the $2k + 1$ vertices after $B_1$ and
$B'_2$ the $2k+1$ vertices before $B_2$.  Remove $B \setminus (B_1
\cup B'_1 \cup B'_2 \cup B_2)$ from $G$.
\end{polyrule}
		
\begin{lemma}
\label{lem:2branchesrule}
Rule~\ref{rule:2branches} is safe.
\end{lemma}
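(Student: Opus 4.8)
The plan is to show that Rule~\ref{rule:2branches} is safe by the same "reinsertion into a fixed umbrella ordering" strategy used for the $K$-join and $1$-branch rules. Let $G' = G \setminus M$ where $M = B \setminus (B_1 \cup B'_1 \cup B'_2 \cup B_2)$. One direction is immediate: proper interval graphs are closed under induced subgraphs, so the restriction of any $k$-completion of $G$ is a $k$-completion of $G'$. For the converse, I would start from a $k$-completion $F$ of $G'$, set $H = G' + F$ with umbrella ordering $\sigma_H$, and argue that the vertices of $M$ can be inserted (while possibly \emph{removing} some edges of $F$ whose endpoints lie inside $B$) to obtain an umbrella ordering of a graph containing $G$; since no edges are added, this yields a $k$-completion of $G$.

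The key structural point is that, because $G[V\setminus B^R]$ is \emph{not} connected, the path $\pi$ used in Lemma~\ref{lem:2branchescc} does not exist, so $B_1$ and $B_2$ lie in different connected components of $G[V\setminus B^R]$; moreover every vertex outside $B$ that touches $B_1$-side vertices is separated (in $G\setminus B^R$) from every vertex touching $B_2$-side vertices. I would first use an analogue of Claim~\ref{claim:useful}: in $\sigma_H$, the component $C_L$ attached on the $B_1$-side lies entirely on one side and $C_R$ on the other, so $\sigma_H$ looks like $C_L,\, (B_1\cup B'_1\cup B'_2\cup B_2)\text{-block},\, C_R$ up to reordering true twins. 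Then, exactly as in Lemma~\ref{lem:1branches}, I would invoke the $2k+1$ buffer on each side: since $|B'_1| = |B'_2| = 2k+1 > k = |F|$, there is a vertex $b \in B'_1$ and a vertex $b' \in B'_2$ incident to no edge of $F$, and these unaffected vertices pin down the order of the blocks $B_b$ (the initial segment of $B$ up through $N_G[b]$) and $B_{b'}$ (the final segment from $N_G[b']$ on) to agree with $\sigma_B$, with $B_b$ first and $B_{b'}$ last in $\sigma_H$. This is the place to reuse Lemma~\ref{lem:1branches} essentially verbatim on each end of the $2$-branch.

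It remains to reinsert $M$ in between. The decomposition of $B$ into $K$-joins means $M$ is a union of $K$-joins, and because $G[V\setminus B^R]$ is disconnected, $M$ has no neighbors outside $B$ other than through $B_1,B_2$; in fact by the branch definition the neighborhood of each $m\in M$ within $V\setminus B$ is contained in the attachment cliques, and in $\sigma_H$ the buffer vertices $b,b'$ guarantee that $N_G(m)$ restricted to the part of $H$ outside $B$ is already correctly interval-ordered. So the analysis reduces to the internal umbrella structure of $B$, which is a proper interval graph by definition: I would reinsert the vertices of $M$ one at a time, in their $\sigma_B$-order, between the $B_b$-block and the $B_{b'}$-block, each time checking the umbrella property against the three types of violating triples as in Claim~\ref{claim:umbrellam} and Claim~\ref{claim:properm}, using the nested-neighborhood conditions $N_R(b_i)\subseteq N_R(b_{i+1})$ and $N_L(b_{i+1})\subseteq N_L(b_i)$ from Definition~\ref{def:2branch} to handle interactions with $L$ and $R$.

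The main obstacle I expect is the bookkeeping in the middle: unlike the $1$-branch case, a vertex of $M$ may have neighbors reaching toward \emph{both} ends (toward $L$ through $B_1$ and toward $R$ through $B_2$), so one must rule out claws and $4$-cycles spanning from $L$ across $M$ to $R$. Here the clean-$K$-join hypothesis underlying the decomposition, together with the fact that $b$ and $b'$ are unaffected (hence their neighborhoods are the same in $G$ and $H$, forcing any $L$-vertex adjacent to $b$ to be adjacent to all of $M$, and symmetrically for $R$ and $b'$), is what should close the gap — this mirrors Observation~\ref{obs:$K$-join} and Claim~\ref{claim:clique}. So the proof is a careful splice of the $1$-branch argument applied at each end with the $K$-join reinsertion argument applied in the middle, and the crux is verifying that the two ends do not interact badly through the reinserted middle vertices.
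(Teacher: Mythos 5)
Your overall architecture --- one direction by heredity, the converse by two applications of the $1$-branch machinery (one per end) exploiting the disconnectedness of $G[V\setminus B^R]$ --- is the same as the paper's. But the step you lean on, ``reuse Lemma~\ref{lem:1branches} essentially verbatim on each end,'' has a genuine gap. In $G'=G\setminus M$, neither end is a $1$-branch: the surviving halves $B_1\cup B'_1$ and $B'_2\cup B_2$ of the branch may still be adjacent to each other through edges of $G[B^R]$ (e.g.\ when $B^R$ is close to a clique, all of $B'_1$ is adjacent to all of $B'_2$), and the two attachment structures point in opposite directions, so the nested-neighborhood condition of Definition~\ref{def:1branch} fails for either end viewed inside the connected graph $G'$. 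The paper's key device, which your proposal is missing, is to first pick a balanced bipartition $B^R=B'\cup B''$ with $|B'|\ge |B''|\ge 2k+1$ and \emph{delete the edges} $E(B',B'')$: by the disconnectedness hypothesis this splits $G$ into two connected components $G_1,G_2$, in each of which one half of $B$ genuinely is a $1$-branch with free part of size $\ge 2k+1$. Lemma~\ref{lem:1branches} is then applied to each component separately, and the two resulting umbrella orderings (each preserving the order of its half of $B$ at its extremity) are concatenated with the edges $E(B',B'')$ restored; the glued ordering satisfies the umbrella property precisely because both sides preserve $\sigma_B$ and $G[B]$ is a proper interval graph. Without this splitting-and-regluing step, the ``pinning'' of both ends inside a single $\sigma_H$ is not justified by the lemma you cite.

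Your plan for the middle also imports hypotheses that are not there. Rule~\ref{rule:2branches} carries no cleanliness assumption, so the claw/$4$-cycle-freeness arguments of Claims~\ref{claim:umbrellam} and~\ref{claim:properm} are unavailable; moreover, by Definition~\ref{def:2branch} the vertices of $M\subseteq B^R$ have \emph{no} neighbors in $L\cup R\cup C$ at all (an $L$-vertex adjacent to $b\in B'_1\subseteq B^R$ cannot exist), so the interaction between $L$, $M$ and $R$ that you identify as the crux is vacuous, but so is the tool you propose to control it. In the paper, the reinsertion of $M$ is not a separate clean-$K$-join argument: it is entirely absorbed into the two applications of Lemma~\ref{lem:1branches} (which reinsert $B'\setminus B'_f$ and $B''\setminus B''_f$ in $\sigma_B$-order at the outer ends of $\sigma_{H_1}$ and $\sigma_{H_2}$) followed by the gluing step above.
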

		
\begin{proof}
As usual, we denote by $\sigma_b=b_1,\dots ,b_{|B|}$ the umbrella
ordering defined on $B$, with $B_1=\{b_1,\dots ,b_{l'}\}$ and
$B_2=\{b_l,\dots ,b_{|B|}\}$.  We partition $B^R$ into two sets
$B'=\{b_{l'+1},\dots ,b_i\}$ and $B''=\{b_{i+1},\dots ,b_{l-1}\}$ such
that $|B'| \ge |B''| \ge 2k + 1$. We now remove the edges $E(B',B'')$
between $B'$ and $B''$, obtaining two connected components of $G$,
$G_1$ and $G_2$. Observe that $B'$ defines a $1$-branch of $G_1$ with
attachment clique $B_1$ such that $B' \setminus B_1$ contains at least
$2k + 1$ vertices. Similarly $B''$ defines a $1$-branch of $G_2$ with
attachment clique $B_2$ such that $B'' \setminus B_2$ contains at
least $2k + 1$ vertices.  Hence Lemma~\ref{lem:1branches} can be
applied to both $G_1$ and $G_2$ and we continue as if
Rule~\ref{rule:1branches} has been applied to $G_1$ and $G_2$,
preserving exactly $2k + 1$ vertices $B'_f$ and $B''_f$,
respectively. We denote by $G'$ the reduced graph.  Let $F$ be a
$k$-completion of $G$. Let $F_1$ and $F_2$ be the completions of $G_1$
and $G_2$ such that $|F_1| + |F_2| \le k$. Moreover, let $H_1 = G_1 +
F_1$ and $H_2 = G_2 + F_2$.  By Lemma~\ref{lem:1branches}, we know
that the vertices of $B' \setminus B'_f$ (resp. $B'' \setminus B''_f$)
can be inserted into the umbrella ordering $\sigma_{H_1}$ of $H_1$
(resp. $\sigma_{H_2}$) in the same order than in $B'$
(resp. $B''$). We thus obtain two proper interval graphs $H'_1$ and
$H'_2$ whose respective umbrella ordering preserve the order of $B'$
and $B''$. We now connect $H'_1$ and $H'_2$ by putting back the edges
contained in $E(B',B'')$, obtaining a graph $H$ with ordering
$\sigma_H$.  Since $G[B]$ is a proper interval graph and $B'$ and
$B''$ are ordered according to $B$ in $H'_1$ and $H'_2$ , it follows
that $H$ is a proper interval graph, and hence $F = F_1 \cup F_2$ is a
$k$-completion of $G$.
\end{proof}
		
\begin{observation}
\label{obs:2branch}
Let $G = (V,E)$ be a positive instance of \PIC{} reduced under
Rules~\ref{rule:twins} to \ref{rule:2branches}. The $2$-branches of
$G$ contain at most $(k + 3)(k^3+4k^2+5k+1)$ vertices.
\end{observation}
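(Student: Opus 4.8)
The plan is to argue along the same dichotomy that motivates Rule~\ref{rule:2branches}: for a $2$-branch $B$, either the graph $G[V\setminus B^R]$ is connected or it is not, and I would treat these two cases separately.

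In the connected case there is essentially nothing new to prove. Since $G$ is in particular a connected positive instance reduced under Rule~\ref{rule:$K$-join}, Observation~\ref{obs:2branch_con} applies verbatim and tells us that the $K$-join decomposition of $B$ has at most $k+3$ $K$-joins, and hence that $|B|\le (k+3)(k^3+4k^2+5k+1)$ --- which is exactly the bound we want.

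For the disconnected case, the point is that $B^R$ must be short. Indeed, if $|B^R|\ge 4(k+1)$ then Rule~\ref{rule:2branches} would be applicable to $B$; since $G$ is reduced under that rule we get $|B^R|<4(k+1)$. Writing $B=B_1\cup B^R\cup B_2$, where $B_1$ and $B_2$ are the two attachment cliques of $B$, each of $B_1,B_2$ is a $K$-join, so Observation~\ref{obs:clean$K$-join} (which only needs $G$ reduced under Rules~\ref{rule:twins} to \ref{rule:$K$-join}) bounds $|B_1|$ and $|B_2|$ by $k^3+4k^2+7k+3$. Hence $|B|\le 2(k^3+4k^2+7k+3)+4(k+1)$, and a one-line computation shows this is at most $(k+3)(k^3+4k^2+5k+1)$ for every $k\ge 1$ (the target expression is a degree-$4$ polynomial in $k$, while the one just derived is cubic). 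So the stated uniform bound holds in both cases.

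I do not expect a genuine obstacle: the connected case is entirely outsourced to Observation~\ref{obs:2branch_con}, and the disconnected case is a routine size count followed by an arithmetic comparison. The only mildly delicate points are (i) that Rule~\ref{rule:2branches} and the present notion of $2$-branch are phrased for connected $G$, so if $G$ is disconnected one should run the argument inside the connected component containing $B$ (still reduced under all rules), and (ii) checking that the disconnected-case bound is dominated by the connected-case bound so that a single expression can be reported; both are immediate.
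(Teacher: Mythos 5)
Your proof is correct and follows essentially the same route as the paper's: split on whether the graph minus $B^R$ (within the connected component containing $B$) is connected, invoke Observation~\ref{obs:2branch_con} in the connected case, and in the disconnected case combine the bound $|B^R|\le 4k+3$ forced by Rule~\ref{rule:2branches} with the attachment-clique bound from Observation~\ref{obs:clean$K$-join} and check the arithmetic for $k\ge 1$. The only (immaterial) difference is that you use the $K$-join bound $k^3+4k^2+7k+3$ per attachment clique where the paper writes $k^3+4k^2+5k+1$; both totals are dominated by $(k+3)(k^3+4k^2+5k+1)$.
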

		
\begin{proof}
Let $B$ be a $2$-branch of a graph $G = (V,E)$ and $C$ be the connected 
component containing $B$.
If $G[C\setminus B^R]$ is connected, then Observation~\ref{obs:2branch_con}
implies the result. Otherwise, as
$G$ has been reduced under
Rules~\ref{rule:twins} to \ref{rule:2branches}, we know that $|B^R|\le 4k+4$
and then that $|B|\le 2(k^3+4k^2+5k+1)+(4k+4)$ which is less than
$(k + 3)(k^3+4k^2+5k+1)$, provided that $k\ge 1$.
\end{proof}

\subsection{Detecting the branches}

We now turn our attention to the complexity needed to compute
reduction rules~\ref{rule:$K$-join} to~\ref{rule:2branches}.
Mainly, we indicate how to obtain the maximum branches in order to reduce them.
The detection of a branch is straightforward except for the attachment
cliques, where several choices are possible.\\
So, first, we detect the maximum 1-branches of $G$. Remark that
for every vertex $x$ of $G$, the set $\{x\}$ is a 1-branch
of $G$. The next lemma indicates how to compute a maximum
1-branch that contains a fixed vertex $x$ as first vertex.

\begin{lemma}
\label{lem:find1branches}
Let $G = (V,E)$ be a graph and $x$ a vertex of $G$.
In time $O(n^2)$, it is possible to detect a maximum 1-branch of
$G$ containing $x$ as first vertex.
\end{lemma}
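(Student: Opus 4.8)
The plan is to build the maximum $1$-branch $B$ with first vertex $x$ greedily along an umbrella ordering, maintaining at each step the current ordered prefix $b_1=x,b_2,\dots,b_i$ together with the candidate sets $R$ (vertices outside $B$ having a neighbour in $B$) and $C$ (vertices with no neighbour in $B$ so far). Recall from Definition~\ref{def:1branch} that a $1$-branch is exactly: a connected proper interval graph $G[B]$ with umbrella ordering $\sigma_B=b_1,\dots,b_{|B|}$, such that the rest of the graph splits as $R\cup C$ with no edges between $B$ and $C$, every vertex of $R$ with a neighbour in $B$, no edge between $\{b_1,\dots,b_{l-1}\}$ and $R$ (where $b_l$ is the leftmost neighbour of $b_{|B|}$), and the monotonicity $N_R(b_i)\subseteq N_R(b_{i+1})$ for $l\le i<|B|$. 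The first step is to observe that, since $G[B]$ is proper interval and $b_1$ is fixed as first vertex, the umbrella ordering of $G[B]$ is essentially forced up to true twins: $b_1$'s closed neighbourhood inside $B$ must be an initial clique, and more generally the second condition of Theorem~\ref{thm:umbrella} rigidly constrains how $B$ can grow once we have decided it starts at $x$.

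The core step is the greedy extension itself. Having committed to $b_1,\dots,b_i$, I would argue that the next vertex $b_{i+1}$ (if $B$ can be extended at all) is determined up to a choice among true twins: it must be a neighbour of $b_i$ (by connectivity and the umbrella property, $B$ cannot "jump"), it must together with $b_i$ respect the umbrella constraint with respect to all of $b_1,\dots,b_{i-1}$, and it must be compatible with the $R$/$C$ split — in particular adding $b_{i+1}$ to $B$ moves some vertices from $C$ to $R$ and these newly arrived $R$-vertices must not violate the "$R$ sees only a suffix of $B$, monotonically" conditions retroactively. So at each of at most $n$ steps I examine the $O(n)$ candidates for $b_{i+1}$, and for each candidate check in $O(n)$ time (scanning its neighbourhood, or really the symmetric difference with $b_i$'s neighbourhood) whether the umbrella property and the $1$-branch conditions are preserved; among valid candidates I take the one that keeps the branch extendable longest, which a simple exchange/greedy argument shows is "adjacent to $b_i$ with inclusion-maximal neighbourhood toward the current $C$". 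This gives the $O(n^2)$ bound. Maximality of the resulting $B$ follows from an exchange argument: if some $1$-branch $B'$ with first vertex $x$ were strictly larger, one compares the two umbrella orderings position by position and shows the greedy choice is never beaten, because any vertex usable by $B'$ at step $i+1$ is also a valid greedy candidate at step $i+1$.

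The subtle point — and what I expect to be the main obstacle — is the handling of the attachment clique $B_1$ and the index $l$, i.e.\ the fact that the boundary between $B^R$ and $B_1$ is not visible locally. When we append a new vertex $b_{i+1}$, the vertex $b_l$ (leftmost neighbour of the current last vertex) may move left, which retroactively enlarges $B_1$ and shrinks $B^R$; vertices of $R$ that were previously only allowed to see a short suffix of $B$ may suddenly be fine, or conversely a vertex of $R$ may now be forced to be adjacent to part of $\{b_1,\dots,b_{l-1}\}$, killing the branch. I would handle this by maintaining, for each vertex $r$ outside $B$, the interval of indices of $B$ that $r$ currently sees, and checking after each extension that every such interval is a suffix of $\{1,\dots,i+1\}$ and that the suffixes are nested in the order $N_R(b_{l}),\dots,N_R(b_{i+1})$; this check is $O(n)$ per step. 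A secondary nuisance is true twins: several candidates for $b_{i+1}$ may be interchangeable, and one must make sure the greedy choice among them does not accidentally block a later extension — but since true twins have identical closed neighbourhoods, the order among them is irrelevant and any choice is safe. Assembling these observations: $n$ extension steps, each costing $O(n)$ to choose and validate the next vertex and to update the bookkeeping, yields the claimed $O(n^2)$ running time, and the exchange argument yields maximality.
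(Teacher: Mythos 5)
Your overall shape (grow the branch outward from $x$, maintain $R$ and $C$, validate the $1$-branch conditions incrementally) matches the spirit of the paper, but there is a genuine gap in the step where you claim a local greedy choice of $b_{i+1}$, justified by ``a simple exchange argument,'' yields a \emph{maximum} branch. The paper's algorithm is two-phase for a reason. In the interior $B^R$ there is no choice at all: any vertex with a neighbour in $B^R$ must itself be placed in $B$ (vertices of $R$ may only see the attachment clique), so the paper absorbs the entire neighbourhood layer $C_i$ at once, and the only test is whether $C_i$ admits a total order under the double neighbourhood inclusion --- if not, the interior is finished. In the attachment clique $B_1$, by contrast, there is a real combinatorial decision: each candidate vertex can go either into $B_1$ or into $R$, the candidates are only partially ordered by the relevant inclusion conditions, and one must select a \emph{longest chain} in that partial order. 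The paper does this by building a transitive digraph on the candidates and computing a longest path. Your greedy rule (``take the candidate with inclusion-maximal neighbourhood toward $C$'') does not solve longest-chain: two inclusion-maximal candidates can be incomparable, one starting a long chain and the other a dead end, and your exchange argument only establishes that the greedy procedure always \emph{has} a valid candidate whenever the optimum does --- it does not establish that the candidate it \emph{picks} leads to a branch as large as the optimum. Chains in a poset are not a matroid, so this form of exchange argument does not close the gap.

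A secondary issue: your own accounting gives $n$ steps, each examining $O(n)$ candidates at $O(n)$ per candidate, which is $O(n^3)$, yet you conclude $O(n^2)$; the two statements of the per-step cost in your write-up are inconsistent. This is fixable (the layer-at-a-time view lets one charge the work differently), but as written the complexity claim does not follow from the algorithm you describe. The fix for both problems is to adopt the paper's separation: detect $B^R$ by forced layer absorption, then reduce the attachment clique to a longest path in a transitive digraph whose strongly connected classes are sets of true twins.
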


\begin{proof}
To detect such a 1-branch, we design an algorithm which has two parts.
Roughly speaking, we first try to detect the set $B^R$ of a 1-branch
$B$ containing $x$. We set $B^R_0=\{x\}$ and $\sigma_0=x$. Once
$B^R_{i-1}$ has been defined, we construct the set $C_i$ of vertices
of $G\setminus (\cup_{l=1}^{i-1} B^R_l )$ that are adjacent to at
least one vertex of $B^R_{i-1}$.  Two cases can appear. First, assume
that $C_i$ is a clique and that it is possible to order the vertices of
$C_i$ such that for every $1 \leqslant j < |C_i|$, we have
$N_{B^R_{i-1}}(c_{j+1}) \subseteq N_{B^R_{i-1}}(c_j)$ and $(N_G(c_j)
\setminus B^R_{i-1}) \subseteq (N_G(c_{j+1}) \setminus B^R_{i-1})$.
In this case, the vertices of $C_i$ correspond to a new $K$-join of
the searched 1-branch (remark that, along this inductive construction,
there is no edge between $C_i$ and $\cup_{l=1}^{i-2}B^R_l$). So, we
let $B^R_i = C_i$ and $\sigma_i$ be the concatenation of
$\sigma_{i-1}$ and the ordering defined on $C_i$.  In the other case,
such an ordering of $C_i$ can not be found, meaning that while
detecting a $1$-branch $B$, we have already detected the vertices of
$B^R$ and at least one (possibly more) vertex of the attachment clique
$B_1$ with neighbors in $B^R$.  Assume that the process stops at step
$p$ and let $C$ be the set of vertices of $G\setminus \cup_{l=1}^{p}
B_l^R$ which have neighbors in $\cup_{l=1}^{p} B_l^R$ and
$B'_1\subseteq B_p^R$ be the set of vertices that are adjacent to all
the vertices of $C$. Remark that $B'_1\neq \emptyset$, as $B'_1$
contains at least the last vertex of $\sigma_p$. We denote by $B^R$
the set $(\cup_{l=1}^p B^R_l) \setminus B'_1$ and we will construct
the largest $K$-join containing $B'_1$ in $G \setminus B^R$ which is
compatible with $\sigma_p$, in order to define the attachment clique
$B_1$ of the desired 1-branch. The vertices of $C$ are the candidates
 to complete the attachment
clique. On $C$, we define the following oriented graph: there is an
arc from $x$ to $y$ if: $xy$ is an edge of $G$,
$N_{B^R}(y)\subseteq N_{B^R}(x)$ and $N_{G\setminus B^R}[x]\subseteq
N_{G\setminus B^R}[y]$. This graph can be computed in time
$O(n^2)$. Now, it is easy to check that the obtained oriented graph is
a transitive graph, in which the equivalent classes are made of true
twins in $G$.  A path in this oriented graph corresponds, by
definition, to a $K$-join containing $B_1'$ and compatible with
$\sigma_p$. As it is possible to compute a longest path in linear time
in this oriented graph, we obtain a maximum 1-branch of $G$ that
contains $x$ as first vertex.
\end{proof}

Now, to detect the 2-branches, we first detect for all pairs of
vertices a maximum $K$-join with these vertices as ends.  More
precisely, if $\{x,y\}$ are two vertices of $G$ linked by an edge,
then $\{x,y\}$ is a $K$-join of $G$, with $N=N_G(x)\cap N_G(y)$,
$L=N_G(x)\setminus N_G[y]$ and $R=N_G(y)\setminus N_G[x]$.  So,
there exist $K$-joins with $x$ and $y$ as ends, and we will compute
such a $K$-join with maximum cardinality.

\begin{lemma}
\label{lem:bijoins}
Let $G = (V, E)$ be a graph and $x$ and $y$ two adjacent vertices of
$G$.  It is possible to compute in cubic time a maximum (in
cardinality) $K$-join that admits $x$ and $y$ as ends.
\end{lemma}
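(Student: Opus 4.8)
The goal is to compute, in cubic time, a maximum-cardinality $K$-join having the two adjacent vertices $x$ and $y$ as its ends (i.e. as first and last vertex of the associated umbrella ordering). The key structural observation is that a $K$-join $B$ with $x,y$ as ends is just a clique whose vertices can be linearly ordered $x = z_1, z_2, \dots, z_t = y$ so that the $N$-, $L$-, $R$-partition behaves monotonically along the order: reading from $x$ to $y$, the neighborhood in $L$ shrinks and the neighborhood in $R$ grows, while every vertex must be adjacent to all of $N$. So the plan is to set $N = N_G(x)\cap N_G(y)$, $L = N_G(x)\setminus N_G[y]$ and $R = N_G(y)\setminus N_G[x]$ as in the paragraph preceding the statement, and then to identify the candidate set $S$ of vertices that could possibly lie strictly between $x$ and $y$ in such a $K$-join: a vertex $z$ qualifies only if $z$ is adjacent to $x$, to $y$, and to every vertex of $N$, and $z$ has no neighbor outside $B$ other than inside $N\cup L\cup R$ — more precisely $N_G(z)\setminus\{x,y\}$ must be contained in $N\cup L\cup R$, with $N\subseteq N_G(z)$. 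This candidate set is computed in $O(n^2)$ time.

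Next, mimicking the device used in Lemma~\ref{lem:find1branches}, I would build an auxiliary directed graph $D$ on vertex set $S\cup\{x,y\}$ by placing an arc from $u$ to $v$ whenever $u$ and $v$ are adjacent in $G$ and the two monotonicity conditions hold in the direction $u\to v$, namely $N_L(v)\subseteq N_L(u)$ and $N_R(u)\subseteq N_R(v)$ (equivalently, writing it in terms of closed neighborhoods restricted to the outside, exactly as in the proof of Lemma~\ref{lem:find1branches}). One checks that $D$ is transitively oriented and that its equivalence classes of mutually-reachable vertices are precisely sets of true twins of $G$; in particular $D$ is a DAG up to contracting true twins. A directed path in $D$ from $x$ to $y$ corresponds exactly to the vertex set of a $K$-join with $x$ and $y$ as ends, and conversely every such $K$-join yields such a path (its umbrella ordering restricted to $B$). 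Hence a maximum-cardinality $K$-join with ends $x,y$ corresponds to a longest $x$–$y$ path in $D$, which can be found in linear time in the size of $D$ once $D$ is built.

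The dominant cost is building $D$: it has $O(n)$ vertices and potentially $\Theta(n^2)$ arcs, and deciding each arc requires comparing two neighborhoods restricted to $L$ and to $R$, which naively costs $O(n)$ per pair and hence $O(n^3)$ overall — this is where the claimed cubic bound comes from. (One could shave this with bitset tricks, but $O(n^3)$ suffices.) Everything else — computing $N,L,R$, filtering the candidate set $S$, and the longest-path computation on the DAG — is $O(n^2)$ or better.

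The step I expect to be the main obstacle is the correctness of the reduction to longest paths, specifically verifying that an arbitrary directed $x$–$y$ path in $D$ really is a $K$-join of $G$, not merely a clique with the right local monotonicity between consecutive vertices. The two points to nail down are: (i) that the set of vertices on the path is a clique — this needs transitivity of $D$ together with the fact that an arc of $D$ is in particular an edge of $G$, so that non-consecutive path vertices are still adjacent; and (ii) that the global monotonicity of $N_L$ and $N_R$ along the whole path (not just across single arcs) holds, which again follows from transitivity of the arc relation. One must also double-check the boundary behaviour — that putting $x$ first forces $L\subseteq N_G(x)$ as required and $R\cap N_G(x)=\emptyset$, and symmetrically for $y$ — and handle the degenerate cases $L=\emptyset$ or $R=\emptyset$ (which collapse the $2$-branch situation to a $1$-branch) and the case where some candidate is adjacent to all of $B$ and should be moved into $N$ rather than kept in $S$.
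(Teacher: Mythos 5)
Your proposal follows essentially the same route as the paper: compute $N$, $L$, $R$ from $x$ and $y$, restrict to the candidate vertices adjacent to all of $N$, orient this candidate set by the neighborhood-containment conditions on $L$ and $R$ (a transitive relation whose equivalence classes are sets of true twins), and read off a maximum $K$-join as a longest $x$--$y$ path, with the $O(n^3)$ cost coming from deciding $O(n^2)$ arcs at $O(n)$ each. The only difference is your additional (correct, and in fact slightly more careful than the paper's) filtering of candidates whose neighborhoods leak outside $\{x,y\}\cup N\cup L\cup R$.
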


\begin{proof}
We denote $N_G[x]\cap N_G[y]$ by $N$, $N_G(x)\setminus N_G[y]$ by $L$
and $N_G(y)\setminus N_G[x]$ by $R$. Let us denote by $N'$ the set of
vertices of $N$ that contains $N$ in their closed neighborhood.  The
vertices of $N'$ are the candidates to belong to the desired
$K$-join. Now, we construct on $N'$ an oriented graph, putting, for
every vertices $u$ and $v$ of $N'$, an arc from $u$ to $v$ if:
 $N_G(v)\cap L \subseteq N_G(u)\cap L$ and $N_G(u)\cap R \subseteq
N_G(v)\cap R$. Basically, it could take a $O(n)$ time to decide if
there is an arc from $u$ to $v$ or not, and so the whole oriented graph 
could be computed in time $O(n^3)$. Now, it is easy to check that
the obtained oriented graph is a transitive graph in which the
equivalent classes are made of true twins in $G$. In this oriented
graph, it is possible to compute a longest path from $x$ to $y$ in
linear time. Such a path corresponds to a maximal $K$-join that admits
$x$ and $y$ as ends. It follows that the desired $K$-join can be identified in
$O(n^3)$ time.
\end{proof}

Now, for every edge $xy$ of $G$, we compute a maximum $K$-join
that contains $x$ and $y$ as ends and a reference to all the vertices
that this $K$-join contains. This computation takes a $O(n^3m)$ time and
gives, for every vertex, some maximum $K$-joins that contain this vertex.
These $K$-joins will be useful to compute the 2-branches of $G$,
in particular through the next lemma.

\begin{lemma}
\label{lem:maxBijoinIn2branch}
Let $B$ be a 2-branch of $G$ with $B^R\neq \emptyset$, and $x$ a
vertex of $B^R$. Then, for every maximal (by inclusion) $K$-join $B'$
that contains $x$ there exists an extremal edge $uv$ of $\sigma_B$ such that
$B'= \{w \in B\ :\ u\le_{\sigma_B} w \le_{\sigma_B} v\}$.
\end{lemma}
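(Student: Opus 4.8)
The plan is to work with a fixed maximal $K$-join $B'$ containing $x$, and to show that $B'$ must be ``interval-shaped'' inside the umbrella ordering $\sigma_B$ of $B$, i.e. that it consists exactly of the vertices between some $u$ and $v$ with $u \le_{\sigma_B} w \le_{\sigma_B} v$, and then that $uv$ is forced to be an extremal edge. First I would recall the structural consequence of the definition of a $2$-branch: since $B'$ is a $K$-join, $G[B']$ is a clique, every vertex of $B'$ sees all of $B'$, and $B'$ comes equipped with the partition $(N,L,R)$ of $V\setminus B'$ where $L$ (resp. $R$) is the set of vertices with a neighbor in $B'$ lying strictly on one side, and $N$ those adjacent to all of $B'$. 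The key point to extract from the umbrella ordering $\sigma_B$ of $B$ is monotonicity: as we move along $\sigma_B$ the neighborhoods ``outside to the left'' shrink and those ``outside to the right'' grow (this is exactly the content of conditions $(ii)$ in Definitions~\ref{def:1branch} and~\ref{def:2branch}, together with the umbrella property of $\sigma_B$ itself). So the first step is: show that $B' \subseteq B$. Indeed $x \in B^R \subseteq B$, and any $y \in B'$ is a true neighbor of $x$ that sees all of $B'$; if $y \notin B$ then $y \in L \cup R \cup C$ relative to $B$, and using that $x$ has neighbors on both sides of it inside $B$ (since $x \in B^R$, it is interior) one derives a claw or a contradiction with the $K$-join structure of $B'$ — this is routine but needs the monotonicity of $N_L, N_R$ along $\sigma_B$.

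Next I would show that $B'$ is an interval of $\sigma_B$. Let $u$ and $v$ be the $\sigma_B$-first and $\sigma_B$-last vertices of $B'$, and let $w$ satisfy $u <_{\sigma_B} w <_{\sigma_B} v$. Since $uv \in E$ (both in the clique $B'$), the umbrella property applied in $\sigma_B$ gives $uw, wv \in E$; I want $w \in B'$. The obstruction is that $w$ could be a genuine neighbor of both $u$ and $v$ without lying in $B'$ — but then $B' \cup \{w\}$ would still be a clique (one checks $w$ is adjacent to every vertex of $B'$, again using umbrella property: any $b \in B'$ lies between $u$ and $v$ or equals one of them, so $wb \in E$), and moreover $B' \cup \{w\}$ would still satisfy the $K$-join axioms with respect to the same $(N,L,R)$ (the monotonicity of $N_L$ and $N_R$ along $\sigma_B$ is inherited), contradicting maximality of $B'$. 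Hence $w \in B'$, so $B' = \{w \in B : u \le_{\sigma_B} w \le_{\sigma_B} v\}$.

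Finally I would argue that $uv$ is an extremal edge of $\sigma_B$. Suppose not: then by definition of extremal there is an edge $u'v' \ne uv$ with $u' \le_{\sigma_B} u$ and $v \le_{\sigma_B} v'$, and we may take it extremal, so (by the remark after Theorem~\ref{thm:umbrella}) $\{w : u' \le_{\sigma_B} w \le_{\sigma_B} v'\}$ is a clique of $G$ strictly containing $B'$. Call this clique $K$; it contains $B'$ and is itself a set of pairwise-adjacent vertices lying in $B$. I claim $K$ is again a $K$-join containing $x$, which contradicts maximality of $B'$: one must re-verify the branch axioms for $K$, and this is where the monotonicity built into the umbrella ordering $\sigma_B$ does the work — extending $B'$ leftward to $u'$ only shrinks the ``left-outside'' neighborhoods and extending rightward to $v'$ only grows the ``right-outside'' ones, so the nested-neighborhood conditions defining $L$ and $R$ are preserved, and any vertex previously in $N$ is still adjacent to all of $K$ or else gets reclassified consistently. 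The main obstacle is precisely this last verification: making sure that enlarging $B'$ to the extremal interval $K$ does not break the clean separation of $V \setminus B'$ into the sets with neighbors only on one side — one has to rule out the possibility that some outside vertex adjacent to part of $B'$ becomes adjacent to ``the wrong end'' of $K$, which again is exactly prevented by the umbrella property of $\sigma_B$ restricted to $B$. Once $K$ is shown to be a $K$-join containing $x$ and strictly larger than $B'$, we contradict maximality, so $uv$ was extremal after all, completing the proof.
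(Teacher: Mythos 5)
Your proposal is correct and follows essentially the same route as the paper's proof: pin $B'$ between its first and last vertices $u,v$ in $\sigma_B$, note that the whole interval $\{w\in B : u\le_{\sigma_B} w\le_{\sigma_B} v\}$ is itself a $K$-join so that maximality forces equality, and then invoke maximality once more to force $uv$ to be extremal. One small correction: to get $B'\subseteq B$ you should not reach for a claw argument (finding a claw yields no contradiction here, since $G$ is an arbitrary instance and may contain claws) --- the $2$-branch axioms already say that a vertex $x\in B^R$ has no neighbour in $L\cup R\cup C$, so the clique $B'\subseteq N_G[x]$ automatically lies inside $B$.
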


\begin{proof}
As usually, we denote by $L$, $R$ and $C$ the partition of $G\setminus
B$ associated with $B$ and by $\sigma_B$ the umbrella ordering
associated with $B$.  Let $B'$ be a maximal $K$-join that contains $x$
and define by $b_f$ (resp. $b_l$) the first (resp. last) vertex of
$B'$ according to $\sigma_B$. As there is no edge between $\{u\in B
\ :\ u<_{\sigma_B} b_f \}\cup L \cup C$ and $b_l$ and no edge between
$\{u\in B \ :\ b_l<_{\sigma_B} u \}\cup R \cup C$ and $b_f$, we have
$B'\subseteq \{u\in B \ :\ b_f\le_{\sigma_B} u \le
b_l\}$. Furthermore, as $\{u\in B \ :\ b_f\le_{\sigma_B} u \le b_l\}$
is a $K$-join and $B'$ is maximal, we have $B'=\{u\in B
\ :\ b_f\le_{\sigma_B} u \le b_l\}$.  Now, if $b_fb_l$ was not an
extremal edge of $\sigma_B$, it would be possible to extend $B'$,
contradicting the maximality of $B'$.
\end{proof}

Now, we can detect the 2-branches $B$ with a set $B^R$ non
empty. Observe that this is enough for our purpose since we want to
detect $2$-branches of size at least $(k + 3)(k^3+4k^2+5k+1)$
and the attachment cliques contain at most $2(k^3+4k^2+7k+3)$
vertices.

\begin{lemma}
\label{lem:find2branches}
Let $G = (V,E)$ be a graph, $x$ a vertex of $G$ and $B'$ a given
maximal $K$-join that contains $x$. There is a quadratic time algorithm
to decide if there exists a 2-branch $B$ of $G$ which contains $x$ as
a vertex of $B^R$, and if it exists, to find a maximum 2-branch with
this property.
\end{lemma}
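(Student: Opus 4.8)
The plan is to turn $B'$ into a maximum $2$-branch by extending it simultaneously to the left and to the right, mimicking the construction of Lemma~\ref{lem:find1branches}. Indeed, by Lemma~\ref{lem:maxBijoinIn2branch}, if some $2$-branch $B_0$ with $x\in B_0^R$ exists, then $B'$ must appear in its umbrella ordering $\sigma_{B_0}$ as the interval $\{w\in B_0 : u\le_{\sigma_{B_0}} w\le_{\sigma_{B_0}} v\}$ associated with some extremal edge $uv$; in particular $B'$ is a contiguous block of $\sigma_{B_0}$. We fix an umbrella ordering $\sigma_{B'}$ of $G[B']$ and let $p$ and $q$ be its first and last vertices. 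Since we do not know a priori which side of $B'$ faces the attachment clique $B_1$ and which faces $B_2$, we run the procedure below for both orientations of $\sigma_{B'}$ (that is, for $\sigma_{B'}$ and for its reversal) and keep the largest valid outcome.

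\emph{Extending the $K$-join.} Starting from $B^R_0=B'$ and $\sigma_0=\sigma_{B'}$, we grow to the right exactly as in the proof of Lemma~\ref{lem:find1branches}: having built $B^R_{i-1}$, we let $C_i$ be the set of vertices of $G\setminus(\cup_{l=0}^{i-1}B^R_l)$ that have a neighbor in $B^R_{i-1}$; if $C_i$ is a clique admitting an ordering $c_1,\dots,c_{|C_i|}$ with $N_{B^R_{i-1}}(c_{j+1})\subseteq N_{B^R_{i-1}}(c_j)$ and $(N_G(c_j)\setminus B^R_{i-1})\subseteq (N_G(c_{j+1})\setminus B^R_{i-1})$ for every $j$, then $C_i$ is the next $K$-join of the branch and we set $B^R_i=C_i$, appending the ordering of $C_i$ to $\sigma_{i-1}$. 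Otherwise the extension of the middle part stops, the vertices witnessing the failure lie in the right attachment clique, and we compute $B_2$ together with the remaining vertices of $B^R$ on the right by the transitive oriented graph / longest-path argument of Lemma~\ref{lem:find1branches}. We then repeat the process to the left of $p$ to obtain $B_1$ and the left part of $B^R$. The candidate $2$-branch $B$ is the union of all blocks produced, endowed with the concatenated ordering $\sigma_B$, and $V\setminus B$ is partitioned into the set $R$ of vertices adjacent to the right part of $B$, the set $L$ of vertices adjacent to the left part, and the set $C$ of the remaining vertices (together with the set $N$ of vertices adjacent to all of $B$, handled as in Definition~\ref{def:2branch} when $B^R$ turns out to be $B'$). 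We finally check that every condition of Definition~\ref{def:2branch} holds; if not, we discard this orientation, and if both orientations fail we report that no $2$-branch with $x\in B^R$ exists.

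\emph{Correctness and complexity.} If a $2$-branch $B_0$ with $x\in B_0^R$ exists, then for the orientation of $\sigma_{B'}$ agreeing with $\sigma_{B_0}$, Lemma~\ref{lem:maxBijoinIn2branch} pins $B'$ down as a contiguous interval of $\sigma_{B_0}$, and the maximality of the $K$-join $B'$ together with the fact that no vertex of $L$ (resp. $R$) has a neighbor in the right (resp. left) part of $B_0$ forces each set $C_i$ produced by the algorithm to be the corresponding block of the $K$-join decomposition of a largest such branch; the longest-path step resolves the only remaining freedom, namely the choice of the two attachment cliques, exactly as in Lemma~\ref{lem:find1branches}. Hence the algorithm returns a maximum $2$-branch with the desired property. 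For the running time, the sets $B^R_i$ obtained on each side are pairwise disjoint, so the union of all the computations of the $C_i$ and all the nesting checks costs $O(n)$ per vertex and $O(n^2)$ overall; each of the two attachment-clique computations is $O(n^2)$ as in Lemma~\ref{lem:find1branches}, and the verification of Definition~\ref{def:2branch} is $O(n^2)$ as well. As the procedure is run a constant number of times, the whole algorithm is quadratic. The main obstacle is precisely this maximality argument: one must argue that the greedy block-by-block growth cannot be improved upon by a differently shaped $2$-branch, using that $B'$ is forced to sit as a contiguous interval and that the umbrella structure together with the restricted way the vertices of $L$, $R$, $C$ and $N$ attach to $B$ leaves no genuine choice apart from the one settled by the longest-path computation, and one must carefully treat the degenerate case $B^R=B'$ where both attachment cliques are reached immediately.
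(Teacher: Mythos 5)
Your overall strategy is the same as the paper's: use Lemma~\ref{lem:maxBijoinIn2branch} to pin $B'$ down as a contiguous interval of any candidate $2$-branch, then recover the two halves of the branch by two applications of the $1$-branch machinery of Lemma~\ref{lem:find1branches}. However, there is one concrete step that fails as you have written it. You initialize the greedy growth with $B^R_0=B'$ and define $C_1$ as the set of vertices outside $B'$ having a neighbor in $B'$. That set is $L'\cup N'\cup R'$, i.e.\ it mixes the left-side neighbors of $B'$ (adjacent to a prefix of $\sigma_{B'}$) with the right-side neighbors (adjacent to a suffix). Whenever both $L'$ and $R'$ are nonempty, $C_1$ cannot be ordered with the required nested neighborhoods in $B'$ (a vertex of $L'$ and a vertex of $R'$ have incomparable, indeed disjoint, neighborhoods in $B'$ unless $B^R=\emptyset$), so your procedure would declare the growth of the middle part finished at the very first step and jump to the attachment-clique computation, returning something far too small. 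Saying ``we grow to the right'' does not by itself tell the algorithm which of the neighbors of $B'$ are the right-side ones; you never specify how to make that separation, and reversing $\sigma_{B'}$ does not help since it merely swaps the roles of $L'$ and $R'$.

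The paper resolves exactly this point by building two auxiliary graphs before invoking the $1$-branch detection: $H_1$ is obtained from $G$ by deleting $\{u\in B' : u<_{\sigma_{B'}} x\}$ \emph{and the edges between $L'$ and the remaining suffix of $B'$}, so that in $H_1$ the only outside neighbors reachable from the suffix are the right-side ones, and $\{u\in B:\ x\le_{\sigma_B} u\}$ becomes a genuine $1$-branch of $H_1$ with $x$ as first vertex; $H_2$ is defined symmetrically. Lemma~\ref{lem:find1branches} is then applied as a black box to $H_1$ and $H_2$ and the two results are glued. Your proof needs this (or an equivalent) edge-deletion step to be correct; with it, the rest of your argument (maximality forced block by block, longest-path resolution of the attachment cliques, quadratic running time, and the degenerate case $B^R=B'$) goes through along the same lines as the paper's.
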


\begin{proof}
By Lemma~\ref{lem:maxBijoinIn2branch}, if there exists a 2-branch $B$
of $G$ which contains $x$ as a vertex of $B^R$, then $B'$ corresponds
to a set $\{u\in B \ :\ b_f\le_{\sigma_B} u \le_{\sigma_B} b_l\}$
where $b_fb_l$ is an extremal edge of $B$. We denote by $L'$,
$R'$ and $C'$ the usual partition of $G\setminus B'$ associated with
$B'$, and by $\sigma_{B'}$ the umbrella ordering of $B'$.  In $G$, we
remove the set of vertices $\{u\in B' \ :\ u<_{\sigma_{B'}} x\}$ and
the edges between $L'$ and $\{u\in B' \ :\ x\le_{\sigma_{B'}} u\}$ and
denote by $H_1$ the resulting graph. From the definition of the
2-branch $B$, $\{u\in B \ :\ x\le_{\sigma_{B}} u\}$ is a 1-branch of
$H_1$ that contains $x$ as first vertex. So, using
Lemma~\ref{lem:find1branches}, we find a maximal 1-branch $B_1$ that
contains $x$ as first vertex.  Remark that $B_1$ has to contain
$\{u\in B \ :\ x\le_{\sigma_{B}} u\}\cap B^R$ at its
beginning. Similarly, we define $H_2$ from $G$ by removing the vertex
set $\{u\in B' \ :\ x<_{\sigma_{B'}} u\}$ and the edges between $R'$
and $\{u\in B' \ :\ u\le_{\sigma_{B'}} x\}$. We detect in $H_2$ a
maximum 1-branch $B_2$ that contains $x$ as last vertex, and as
previously, $B_2$ has to contain $\{u\in B \ :\ u\le_{\sigma_{B}}
x\}\cap B^R$ at its end. So, $B_1\cup B_2$ forms a maximum
2-branch of $G$ containing $x$.
\end{proof}

We would like to mention that it could be possible to improve
 the execution time of our detecting branches
algorithm, using possibly more involved techniques (as for instance, inspired
from~\cite{Corn95}). However, this is not our main objective here.\\
Anyway,  using a $O(n^4)$ brute force detection to localize all
the 4-cycles and the claws,  we obtain the following result.

\begin{lemma}
\label{lem:rulespoly}
Given a graph $G = (V,E)$, the reduction rules~\ref{rule:$K$-join}
to~\ref{rule:2branches} can be carried out in polynomial time, namely in 
time $O(n^3m)$.
\end{lemma}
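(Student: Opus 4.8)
The plan is to account, one reduction rule at a time, for the dominant cost of a single application, then multiply by the number of times a rule can fire and by the (polynomially many) objects we must enumerate. The overall target $O(n^3m)$ should come out of the most expensive ingredient, namely the preprocessing that computes, for every edge, a maximum $K$-join having its endpoints as ends (Lemma~\ref{lem:bijoins} costs $O(n^3)$ per edge, so $O(n^3m)$ in total). Everything else should be subsumed by this bound, and the proof amounts to checking that claim.

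\textbf{First I would dispose of the cheap rules.} Rules~\ref{rule:cc}, \ref{rule:twins} and \ref{rule:sunflower} are not in the statement, but I will note that the $4$-cycle/claw enumeration used throughout takes $O(n^4)$ by brute force, as already remarked in the text. For Rule~\ref{rule:$K$-join}: to apply it I must exhibit a clean $K$-join of size $\ge 2k+2$. By Lemma~\ref{lem:bijoins} and the remark following it, a maximum $K$-join through each edge is precomputed in total time $O(n^3m)$; by Lemma~\ref{lem:center_of_claws} and the paragraph after it, from any such $K$-join one extracts a clean $K$-join by deleting the $O(k^3)$ vertices lying in a claw or $4$-cycle, and those vertices are known once the $O(n^4)$ enumeration is done. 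Removing the middle set $M$ is $O(n)$. So one pass of Rule~\ref{rule:$K$-join} is dominated by $O(n^3m)$, and since each application deletes at least one vertex it fires at most $n$ times — but one does not re-run the $O(n^3m)$ preprocessing from scratch each time; even if one did, the crude bound $O(n^4m)$ is still polynomial, and a more careful accounting (the $K$-join data only needs local updating) keeps it at $O(n^3m)$. I would state this and not belabour it, since the lemma only claims a polynomial bound with the headline figure $O(n^3m)$.

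\textbf{Next the branch rules.} For Rule~\ref{rule:1branches} I invoke Lemma~\ref{lem:find1branches}: a maximum $1$-branch with a prescribed first vertex $x$ is found in $O(n^2)$; ranging over all $n$ choices of $x$ gives $O(n^3)$ to locate every maximal $1$-branch, and deleting the surplus of $B^R$ is linear. For Rule~\ref{rule:2branches} I invoke Lemma~\ref{lem:find2branches}: given a vertex $x$ and a maximal $K$-join $B'$ through $x$, a maximum $2$-branch having $x$ in $B^R$ is found in $O(n^2)$; the $K$-joins $B'$ come from the $O(n^3m)$ preprocessing (by Lemma~\ref{lem:maxBijoinIn2branch} the relevant $K$-joins are exactly those attached to extremal edges, hence among the precomputed ones), so over all $x$ and all such $B'$ this is again dominated by $O(n^3m)$; one must also test whether $G[V\setminus B^R]$ is (dis)connected, an $O(n+m)$ check. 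Each branch rule deletes vertices, so fires at most $n$ times, and the same remark as above keeps the cumulative cost polynomial in the claimed range.

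\textbf{The main obstacle} I expect is bookkeeping rather than mathematics: making sure that after a deletion the precomputed $K$-join information (used by Rules~\ref{rule:$K$-join} and~\ref{rule:2branches}) and the claw/$4$-cycle list are either still valid on the smaller graph or can be refreshed without blowing past $O(n^3m)$. The clean way is to observe that deletions only shrink the graph, so any precomputed maximal $K$-join, restricted to the surviving vertices, is still a $K$-join (a subset of a $K$-join is a $K$-join, as noted in the preliminaries), and likewise any claw or $4$-cycle list can only lose members; hence one recomputation up front, at cost $O(n^3m)$, suffices, with $O(n)$ or $O(n^2)$ maintenance per deletion. Assembling these observations — cheapest rules $O(n^4)$, $1$-branch detection $O(n^3)$, $2$-branch detection and $K$-join rule both $O(n^3m)$, at most $O(n)$ applications each — yields the bound $O(n^3m)$ stated in Lemma~\ref{lem:rulespoly}, and I would close the proof there.
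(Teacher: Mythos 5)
Your proposal is correct and follows essentially the same route as the paper: the paper's justification for this lemma is precisely the preceding detection lemmas (Lemma~\ref{lem:find1branches} at $O(n^2)$ per starting vertex, Lemma~\ref{lem:bijoins} at $O(n^3)$ per edge giving the dominant $O(n^3m)$ term, Lemma~\ref{lem:maxBijoinIn2branch} and Lemma~\ref{lem:find2branches} reusing that precomputation, plus the $O(n^4)$ brute-force enumeration of claws and $4$-cycles), which is exactly the accounting you assemble. Your additional remarks on re-applying the rules after deletions go slightly beyond what the paper spells out, but they do not change the argument.
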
 

\subsection{Kernelization algorithm}
	
We are now ready to the state the main result of this Section. The
kernelization algorithm consists of an exhaustive application of
Rules~\ref{rule:cc} to~\ref{rule:2branches}.
		
\begin{theorem}
The \PIC{} problem admits a kernel with $O(k^5)$ vertices.
\end{theorem}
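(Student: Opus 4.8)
The plan is to show that after exhaustive application of Rules~\ref{rule:cc} to~\ref{rule:2branches}, a positive instance $(G,k)$ has $O(k^5)$ vertices. By Rule~\ref{rule:cc} we may assume $G$ is connected and, by Rule~\ref{rule:cc} again applied to the (only) remaining component, $G$ is not a proper interval graph, so it contains one of the forbidden subgraphs of Theorem~\ref{thm:pic}. The first main step is to fix a $k$-completion $F$ (which exists since $(G,k)$ is positive) and an umbrella ordering $\sigma_H$ of $H = G+F$, and to partition the vertices of $H$ according to the edges of $F$: let $A$ be the set of endpoints of edges of $F$, so $|A| \le 2k$. Removing the at most $2k$ "affected" intervals from $\sigma_H$ breaks $\sigma_H$ into at most $2k+1$ contiguous blocks of unaffected vertices. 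The key observation is that each such block, together with the structure of the neighborhoods of the (few) affected vertices adjacent to it, is essentially a branch of $G$ — more precisely, maximal runs of unaffected vertices with nested neighborhoods in a block form $K$-joins, and the concatenation of a bounded number of consecutive blocks forms a $1$- or $2$-branch (the affected vertices delimiting at most $2k$ "interfaces"). Here I would invoke the structural facts about umbrella orderings from Section~1 (the consecutiveness of true twins, the join property of prefixes/suffixes) to make precise that each block decomposes into $K$-joins that glue into branches.

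The second step is the counting. Each block of unaffected vertices in $\sigma_H$ is a proper interval graph whose umbrella ordering decomposes into $K$-joins along its extremal edges; since $G$ is reduced under Rules~\ref{rule:twins}–\ref{rule:2branches}, each such $K$-join has size at most $k^3 + 4k^2 + 7k + 3 = O(k^3)$ by Observation~\ref{obs:clean$K$-join}, each $1$-branch has $O(k^3)$ vertices by Observation~\ref{obs:1branch}, and each $2$-branch has $O(k^4)$ vertices by Observation~\ref{obs:2branch}. So it suffices to bound the number of such pieces needed to cover $V(G)$ by $O(k)$. This follows because the $\le 2k$ affected vertices together with the $O(1)$ forbidden-subgraph vertices cut $\sigma_H$ into $O(k)$ intervals, each of which is covered by a branch (a $1$-branch if it abuts an affected vertex on only one side, a $2$-branch if on both), plus $|A| = O(k)$ leftover vertices. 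Multiplying $O(k)$ pieces by the $O(k^4)$ bound on a $2$-branch yields $O(k^5)$ vertices in total.

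The delicate point — and the main obstacle — is verifying that a block of unaffected vertices between two affected vertices really does satisfy the definition of a $1$- or $2$-branch of $G$ (Definitions~\ref{def:1branch}, \ref{def:2branch}), in particular the nesting conditions $N_R(b_i) \subseteq N_R(b_{i+1})$ and the absence of edges between the "far" part of the block and $R$ (resp.\ $L$). These conditions are inherited from the umbrella property of $\sigma_H$: if $b_i <_{\sigma_H} b_{i+1}$ are unaffected and $r$ is a vertex to the right of the block with $rb_i \in E$, then $rb_{i+1} \in E$ because the interval $[b_i, r]$ in $\sigma_H$ contains $b_{i+1}$; and since $b_i, b_{i+1}, r$ are related by edges of $H$ that are in fact edges of $G$ (the relevant ones being incident to unaffected vertices or forced by the clique structure of intervals in $\sigma_H$), the same nesting holds in $G$. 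One must also handle the bookkeeping of which affected vertices play the role of the sets $L, R, C, N$ for each block, and observe that the attachment cliques of adjacent branches may overlap but this only costs a constant factor. Once this structural correspondence is established, the size bound is a direct consequence of Observations~\ref{obs:clean$K$-join}, \ref{obs:1branch} and \ref{obs:2branch} together with the $O(k)$ bound on the number of pieces, and Lemma~\ref{lem:rulespoly} guarantees the whole reduction runs in polynomial time, so $(G,k)$ is a genuine kernel with $O(k^5)$ vertices.
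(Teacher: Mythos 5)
Your proposal is correct and follows essentially the same route as the paper: fix a $k$-completion $F$ and an umbrella ordering of $G+F$, observe that the at most $2k$ affected vertices cut the ordering into $O(k)$ blocks of unaffected vertices, argue that the two extremal blocks are $1$-branches and each interior block is a $2$-branch (or two $1$-branches), and multiply the $O(k)$ block count by the $O(k^4)$ per-branch bounds from Observations~\ref{obs:1branch} and~\ref{obs:2branch}. The ``delicate point'' you flag (that unaffected blocks satisfy the branch definitions in $G$) is handled in the paper exactly as you sketch, via the fact that unaffected vertices have identical neighborhoods in $G$ and $H$ so the umbrella property transfers.
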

		
\begin{proof}
Let $G = (V,E)$ be a positive instance of \PIC{} reduced under
Rules~\ref{rule:cc} to~\ref{rule:2branches}.  Let $F$ be a
$k$-completion of $G$, $H = G + F$ and $\sigma_H$ be the umbrella
ordering of $H$. Since $|F| \le k$, $G$ contains at most $2k$
\emph{affected} vertices (i.e. incident to an added edge). Let $A =
\{a_1 <_{\sigma_H} \ldots <_{\sigma_H} a_i <_{\sigma_H} \ldots
<_{\sigma_H} a_p\}$ be the set of such vertices, with $p \le 2k$. The
size of the kernel is due to the following observations (see
Figure~\ref{fig:size}):
\begin{itemize}
\item Let $L_0 = \{l \in V\ :\ l <_{\sigma_H} a_1\}$ and $R_{p+1} =
  \{r \in V\ :\ a_p <_{\sigma_H} r\}$.  Since the vertices of $L_0$ and
  $R_{p+1}$ are not affected, it follows that $G[L_0]$ and
  $G[R_{p+1}]$ induce a proper interval graph. As Rule~\ref{rule:cc}
  has been applied, $G[L_0]$ and $G[R_{p+1}]$ both contain one
  connected component,  and $L_0$ and $R_{p+1}$ are $1$-branches of
  $G$. So, by Observation~\ref{obs:1branch}, $L_0$ and $R_{p+1}$ both
  contain at most $k^3+4k^2+9k+4$ vertices.
\item Let $S_i = \{s \in V\ :\ a_i <_{\sigma_H} s <_{\sigma_H}
  a_{i+1}\}$ for every $1 \le i < p$.  Again, since the vertices
  of $S_i$ are not affected, it follows that $G[S_i]$ is a proper
  interval graph. As Rule~\ref{rule:cc} as been applied, there are
at most two connected components in $G[S_i]$. If $G[S_i]$ is connected, 
then, $S_i$ is a $2$-branch of $G$ and, by Observation~\ref{obs:2branch}, $S_i$
  contains at most $(k+3)(k^3+4k^2+5k+1)$ vertices.
Otherwise, if $G[S_i]$ contains two connected components, they correspond
to two 1-branches of $G$, and by Observation~\ref{obs:1branch}, $S_i$ 
contain at most $2(k^3+4k^2+9k+4)$ vertices. In both cases, we bound the number 
of vertices of $S_i$ by  $(k+3)(k^3+4k^2+5k+1)$, provided that $k\ge 1$.
\end{itemize}
			
\begin{figure}[ht]
\centerline{
\input{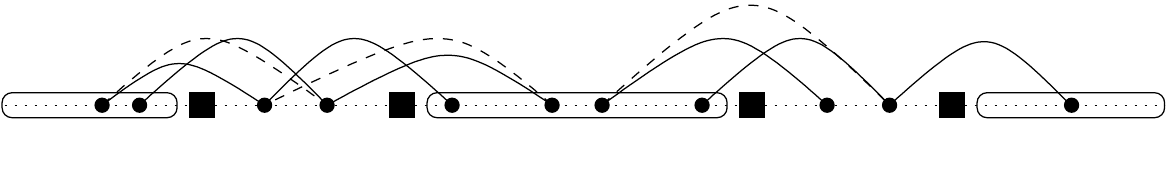_t}}
\caption{Illustration of the size of the kernel. The figure represents
  the graph $H=G+F$, the square vertices stand for the \emph{affected}
  vertices, $L_0$ and $R_{p+1}$ are $1$-branches of $G$, and, on the
  figure, $S_i$ defines a $2$-branch. \label{fig:size}}
\end{figure}
			
Altogether, the proper interval graph $H$ (and hence $G$) contains at
most:
			
$$ 2( k^3+4k^2+9k+4 ) + (2k - 1)( (k + 3)(k^3+4k^2+5k+1) )$$
			
vertices, which implies the claimed $O(k^5)$ bound. The complexity
directly follows from Lemma~\ref{lem:rulespoly}.
		
\end{proof}


\section{A special case: \sc{\BCC}}

\emph{Bipartite chain graphs} are defined as bipartite graphs whose
parts are connected by a join. Equivalently, they are known to be the
graphs that do not admit any $\{2K_2,C_5,K_3\}$ as an induced
subgraph~\cite{Yan81} (see Figure~\ref{fig:forbiddenbcg}).
In~\cite{Guo07}, Guo proved that the so-called \textsc{Bipartite Chain
 Deletion With Fixed Bipartition} problem, where one is given a
\emph{bipartite} graph $G = (V,E)$ and seeks a subset of $E$ of size
at most $k$ whose deletion from $E$ leads to a bipartite chain graph,
admits a kernel with $O(k^2)$ vertices. We define \emph{bi-clique chain 
graph} to be the graphs formed by two disjoint cliques linked by a join.
They correspond to interval graphs that can be covered by two cliques.
Since the complement
of a bipartite chain graph is a bi-clique chain graph, this
result also holds for the \textsc{Bi-clique Chain Completion With
  Fixed Bi-clique Partition} problem.  Using similar techniques than in
Section~\ref{sec:pic}, we prove that when the bipartition is not
fixed, both problems admit a quadratic-vertex kernel.  For the sake of
simplicity, we consider the completion version of the problem,
defined as follows.\\
\\ 
\BCC{}: \\ 
\textbf{Input}: A graph $G = (V,E)$ and a positive integer $k$. \\ 
\textbf{Parameter}: $k$. \\ 
\textbf{Output}: A set $F \subseteq (V \times V) \setminus E$
of size at most $k$ such that the graph $H = (V, E \cup F)$ is a
bi-clique chain graph. \\
	
It follows from definition that bi-clique chain graphs do not admit
any $\{C_4, C_5, 3K_1\}$ as an induced subgraph, where a $3K_1$ is an
independent set of size $3$ (see
Figure~\ref{fig:forbiddenbcg}). Observe in particular that bi-clique
chain graphs are proper interval graphs, and hence admit an umbrella
ordering.
	
\begin{figure}[ht]
\centerline{
\input{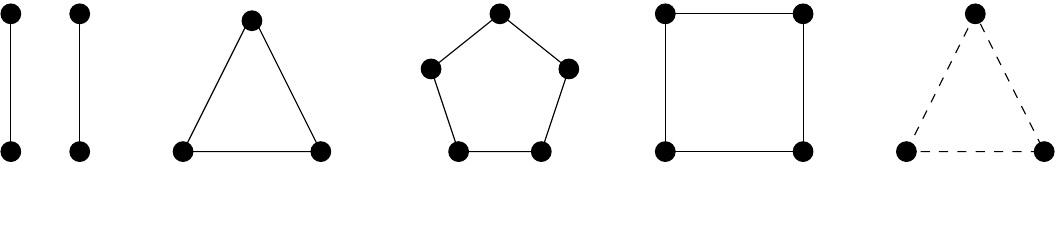_t}}
\caption{The forbidden induced subgraphs for bipartite and bi-clique
  chain graphs.\label{fig:forbiddenbcg}}
\end{figure}
	 
We provide a kernelization algorithm for the \BCC{} problem which
follows the same lines that the one in Section~\ref{sec:pic}.
	
\begin{polyrule}[Sunflower]
\label{rule:sunflowerbcc}
Let $\mathcal{S} = \{C_1, \ldots, C_m\}$, $m > k$ be a set of $3K_1$
having two vertices $u,v$ in common but distinct third vertex. Add
$uv$ to $F$ and decrease $k$ by $1$.\\ 
Let $\mathcal{S} = \{C_1, \ldots, C_m\}$, $m > k$ be a set of distinct
$4$-cycles having a non-edge $uv$ in common. Add $uv$ to $F$ and
decrease $k$ by $1$.
\end{polyrule}
	
The following result is similar to
Lemma~\ref{lem:claws_and_$4$-cycles}.
	
\begin{lemma}
\label{lem:s3}
Let $G = (V,E)$ be a positive instance of \BCC{} on which
Rule~\ref{rule:sunflowerbcc} has been applied. There are at most
$k^2+2k$ vertices of $G$ contained in $3K_1$'s. Furthermore, there at most 
$2k^2+2k$ vertices of $G$ that are vertices of a $4$-cycle.
\end{lemma}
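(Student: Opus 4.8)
The plan is to follow the proof of Lemma~\ref{lem:claws_and_$4$-cycles} almost line for line, the only difference being that the obstructions of bi-clique chain graphs that matter here are the $3K_1$ and the $C_4$ (the $K_3$ cannot be produced by a completion, and the $C_5$ is irrelevant, since the argument only needs \emph{some} non-edge of each obstruction to be added by the completion). Fix a $k$-completion $F$ of the positive instance $G$; then $|F|\le k$ and every pair of $F$ is a non-edge of $G$. Every $3K_1$ $\{u,v,w\}$ of $G$ has all three of $uv,uw,vw$ as non-edges, so at least one lies in $F$; likewise every $4$-cycle of $G$ has its two diagonals as non-edges, so at least one diagonal lies in $F$.

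First I would bound the vertices lying in a $3K_1$. Each such vertex $w$ is contained in some $3K_1$ $\{u,v,w\}$, and either $w$ is incident to a pair of $F$ --- there are at most $2k$ such vertices --- or the $F$-pair of this $3K_1$ avoids $w$, say it is $uv$, in which case $w$ is the \emph{third vertex} of a $3K_1$ whose shared pair lies in $F$. Since Rule~\ref{rule:sunflowerbcc} has been applied exhaustively, each of the at most $k$ pairs of $F$ is shared by at most $k$ many $3K_1$'s with pairwise distinct third vertices, so this second class contains at most $k^2$ vertices. Summing gives the claimed $k^2+2k$.

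The bound for $4$-cycles is obtained in the same fashion. A vertex $b$ lying in a $4$-cycle $C$ is incident to exactly one of the two diagonals of $C$; if that diagonal lies in $F$, then $b$ is one of the at most $2k$ endpoints of pairs of $F$, and otherwise, since at least one diagonal of $C$ lies in $F$, it must be the other one, say $xy$, so that $b$ --- and likewise the fourth vertex of $C$ --- is one of the two vertices of $C$ adjacent to both $x$ and $y$, that is, a vertex of a $4$-cycle whose common non-edge $xy$ belongs to $F$. By Rule~\ref{rule:sunflowerbcc}, each of the at most $k$ pairs of $F$ is the common non-edge of at most $k$ distinct $4$-cycles, each contributing two such vertices, hence at most $2k^2$ vertices in this second class; adding the at most $2k$ endpoints yields $2k^2+2k$.

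I do not expect a genuine obstacle: the statement is pure bookkeeping. The only point needing care is the case split in each part, ensuring that every vertex of an obstruction is attributed either to the set of (at most $2k$) endpoints of completed pairs or to the family controlled by the sunflower rule, so that no vertex is left uncounted; once the obstructions are decomposed as above this is immediate.
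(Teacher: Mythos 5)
Your proof is correct and is exactly the intended argument: the paper gives no separate proof of this lemma, stating only that it is similar to the analogous claim for \PIC{}, and your adaptation (every $3K_1$ or $4$-cycle must have a non-edge in $F$, each of the at most $k$ pairs of $F$ accounts for at most $k$ third vertices of $3K_1$'s and at most $k$ distinct $4$-cycles by the exhausted sunflower rule, plus the at most $2k$ endpoints of pairs of $F$) matches the paper's bookkeeping and yields the same bounds $k^2+2k$ and $2k^2+2k$. No issues.
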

	
We say that a $K$-join is \emph{simple} whenever $L = \emptyset$ or $R
= \emptyset$. In other words, a simple $K$-join consists in a clique
connected to the rest of the graph by a join.  We will see it as a
1-branch which is a clique and use for it the classical notation
devoted to the 1-branch. Moreover, we (re)define a \emph{clean
  $K$-join} as a $K$-join whose vertices do not belong to any $3K_1$
or $4$-cycle.  The following reduction rule is similar to
Rule~\ref{rule:$K$-join}, the main ideas are identical, only some
technical arguments change. Anyway, to be clear, we give the proof in
all details.

\begin{polyrule}[$K$-join]
\label{rule:simple$K$-join}
Let $B$ be a simple clean $K$-join of size at least $2(k+1$)
associated with an umbrella ordering $\sigma_B$. Let $B_L$
(resp. $B_R$) be the $k + 1$ first (resp. last) vertices of $B$
according to $\sigma_B$, and $M = B \setminus (B_L \cup B_R)$. Remove
the set of vertices $M$ from $G$.
\end{polyrule}
		
\begin{lemma}
\label{lem:$K$-joinbcc}
Rule~\ref{rule:simple$K$-join} is safe and can be computed in polynomial time.
\end{lemma}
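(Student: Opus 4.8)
The plan is to mirror the proof of Lemma~\ref{lem:$K$-join}, adapted to the forbidden-subgraph list $\{C_4,C_5,3K_1\}$ for bi-clique chain graphs. First I would handle the easy direction: since bi-clique chain graphs are closed under taking induced subgraphs, the restriction of any $k$-completion of $G$ to $G' = G\setminus M$ is a $k$-completion of $G'$. For the converse, let $F$ be a $k$-completion of $G'$, let $H = G'+F$, and fix an umbrella ordering $\sigma_H$ of $H$ (recall bi-clique chain graphs are proper interval graphs, so this exists). The goal is to reinsert each vertex $m\in M$ into $\sigma_H$ without adding any edge — in fact possibly deleting some edges of $F$ — while keeping the result a bi-clique chain graph. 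Since $B$ is a \emph{simple} clean $K$-join, one of $L,R$ is empty; say $R=\emptyset$, so the only external neighbors of $B$ sit in $L\cup N$, where $N=\bigcap_{b\in B}N_G(b)\setminus B$. As in Claim~\ref{claim:struct} one checks $L$ is a clique of $G$ (two non-adjacent vertices of $L$ together with $b_1$ and $b_{|B|}$ would create a $3K_1$ or a $C_4$, contradicting cleanliness), and the analogue of Observation~\ref{obs:$K$-join} holds: any $l\in L$ with $N_B(l)\cap B_R\neq\emptyset$ satisfies $M\subseteq N_B(l)$.

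Next I would reproduce the structural steps of Claims~\ref{claim:struct}--\ref{claim:properm}. Let $b_f,b_l$ be the first and last vertices of $B\setminus M$ in $\sigma_H$ and $B_H=\{u : b_f\le_{\sigma_H}u\le_{\sigma_H}b_l\}$; this is a clique of $H$ containing $B\setminus M$. Reorder true twins of $H$ lying left of $b_f$ (and right of $b_l$) according to their $M$-neighborhoods exactly as in the original proof — this is still an umbrella ordering. Then: (1) $B_H\cup\{m\}$ is a clique of $G$ for each $m\in M$, because any $u\in B_H$ is adjacent in $H$ to all of $B\setminus M$, hence meets both $B_L$ and $B_R$ (each of size $k+1$), hence lies in $L\cup N$ and so $um\in E(G)$ by the observation; (2) the ordering $\sigma'_H$ obtained by contracting $B_H$ to a single vertex $m$ respects the umbrella property — here the case analysis is \emph{shorter} than in Lemma~\ref{lem:$K$-join} because there is no set $R$ to deal with, so only the cases ``$u\in N$'' and ``$u\in L$'' (and $v\in C\cup L$) survive, and in each the argument ``$u$ is adjacent to enough of $B_L$ or $B_R$, so $vm\in E(G)$'' goes through; the ``distinguishing vertex / extremal edge'' subcase uses Observation~\ref{obs:extremal} together with the fact that a $3K_1$ or $C_4$ through a vertex of $B$ is forbidden, replacing the ``claw or $4$-cycle through a vertex of $B$'' obstruction of the original proof. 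Finally (3), the analogue of Claim~\ref{claim:properm}: defining $v_i,v_j$ as the extreme $\sigma_H$-neighbors of $m$ in $G$, one shows $N_H(m)$ must be exactly the interval $[v_i,v_j]$ and that $v_{i-1}v_{j+1}\notin E(G)$; then $m$ can be slotted into $\sigma_H$ somewhere inside $B_H$ — possibly after the $X_1/X_2$ reordering-and-edge-deletion trick — yielding an umbrella ordering. Since $B_H\cup M$ is a clique, these insertions do not interfere, so all of $M$ can be reinserted iteratively, giving a $k$-completion of $G$ (with $|F|$ unchanged or decreased).

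I also have to check that the resulting graph is not merely a proper interval graph but a genuine \emph{bi-clique} chain graph, i.e. is $\{C_4,C_5,3K_1\}$-free. This is where the argument differs in flavour from Lemma~\ref{lem:$K$-join}: I would argue that since $G'+F$ was $3K_1$-free and $C_5$-free, and we only inserted vertices $m$ whose neighborhood is an interval contained in (and containing) part of the clique $B_H$, any new $3K_1$, $C_4$ or $C_5$ would have to involve $m$ together with vertices forming a forbidden configuration already present in $G$ among $N_G[m]$'s complement — contradicting that $B$ is a clean $K$-join, or that the completed graph restricted to the relevant vertices was already bi-clique chain. Concretely: $m$ together with two non-neighbors $x,y$ would give a $3K_1$ only if $xy\notin E(G)$; but then $\{x,y\}$ together with a vertex of $B_L$ (which is adjacent to $m$ hence, in the completed graph, "near" $x,y$ in the umbrella order) already witnessed a forbidden subgraph handled in step (2). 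For the polynomial running time, detecting a simple clean $K$-join of size $\ge 2(k+1)$ reduces to the $K$-join detection of Lemma~\ref{lem:bijoins} (taking $L$ or $R$ empty) combined with the brute-force $O(n^4)$ enumeration of $4$-cycles and $3K_1$'s to test cleanliness, so the rule is computable in polynomial time.

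The main obstacle I anticipate is step (2) — verifying that the contraction ordering $\sigma'_H$ is an umbrella ordering — and, intertwined with it, confirming that no new $C_4$ or $C_5$ is created: the paper's own remark that ``only some technical arguments change'' signals that the bookkeeping of which external vertices can be adjacent to $m$, and the extremal-edge subcase, must be redone carefully against the new forbidden set rather than quoted verbatim. Everything else is a direct transcription of the $K$-join proof with $R=\emptyset$.
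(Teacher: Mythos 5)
Your high-level plan --- mirror the proof of Lemma~\ref{lem:$K$-join} with the forbidden set $\{C_4,C_5,3K_1\}$ replacing the claw and the $4$-cycle --- is indeed how the paper proceeds for the structural claims (the analogues of Claims~\ref{claim:struct}--\ref{claim:umbrellam}), and two of your adjustments are on target: the cleanliness arguments now produce $3K_1$'s through $b_1$ or $b_{|B|}$, and the case analysis shrinks because one of $L,R$ is empty. One small correction there: what is needed (and what the paper proves as Claim~\ref{claim:structbcc}) is that the \emph{whole} set of external vertices $R\cup C$ (in your convention $L\cup C$) is a clique, not merely $R$ (resp.\ $L$); this follows from a single $3K_1$ of the form $\{b_1,u,v\}$ and is used in case (ii) of the umbrella claim, where $v$ may lie in $C$.

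The genuine gap is in your step (3) and the subsequent ``check bi-clique-ness'' paragraph. You propose to insert $m$ ``possibly after the $X_1/X_2$ reordering-and-edge-deletion trick'' and then argue a posteriori that no $C_4$, $C_5$ or $3K_1$ is created; but that trick only certifies an umbrella ordering, i.e.\ a proper interval graph, and your sketch of why the result is still $3K_1$-free is not a proof (deleting the edges of $F$ between $X_1$ and $N^-(X_1)$, say, could a priori leave two vertices on opposite sides of $B_H$ with a common non-neighbor). The paper's Claim~\ref{claim:propermbcc} avoids this entirely by an argument that your proposal does not contain: since $H$ is already a bi-clique chain graph, fix its two cliques $H_1,H_2$; if $B_H$ meets both, then $m$ adjacent to neither $b_1$ nor $b_{|H|}$ would give the $3K_1$ $\{b_1,b_{|H|},m\}$ through a vertex of the clean $K$-join, so $H_1\cup\{m\}$ or $H_2\cup\{m\}$ is a clique; if $B_H\subseteq H_1$, then every vertex of $H_1$ meets $B_L$ in $H$ and is hence adjacent to $m$ in $G$, so again $H_1\cup\{m\}$ is a clique. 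In either case $m$ is slotted into one of the two cliques at a position compatible with the join, which preserves the bi-clique chain structure by construction rather than by a forbidden-subgraph check, and no $X_1/X_2$ surgery is needed. Without this idea your plan does not establish that the completed graph is a bi-clique chain graph. The polynomial-time part of your proposal is fine.
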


\begin{proof}
Let $G' = G \setminus M$. Observe that any $k$-completion of $G$ is a
$k$-completion of $G'$ since bi-clique chain graphs are closed under
induced subgraphs. So, let $F$ be a $k$-completion for $G'$. We denote
by $H = G' + F$ the resulting bi-clique chain graph and by $\sigma_H$
an umbrella ordering of $H$. We prove that we can always insert the
vertices of $M$ into $\sigma_H$ and modify it if necessary, to obtain
an umbrella ordering of a bi-clique chain graph for $G$ without adding
any edge. This will imply that $F$ is a $k$-completion for $G$.  To
see this, we need the following structural property of $G$. As usual,
we denote by $R$ the neighbors in $G\setminus B$ of the vertices of
$B$, and by $C$ the vertices of $G\setminus (R\cup B)$. For the sake
of simplicity, we let $N = \cap_{b \in B} N_G(b) \setminus B$, and
remove the vertices of $N$ from $R$. We abusively still denote by $R$
the set $R\setminus N$, see Figure~\ref{fig:Kjoin-biclique}.

\begin{figure}[ht]
\centerline{
\input{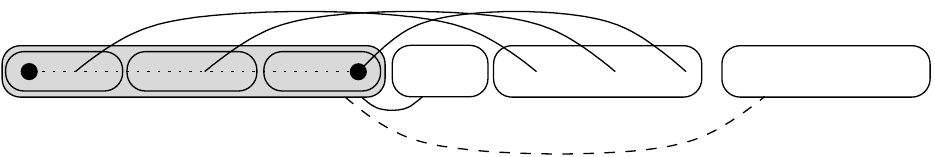_t}}
\caption{The $K$-join decomposition for the \BCC{}
  problem. \label{fig:Kjoin-biclique}}
\end{figure}

\begin{claim}
\label{claim:structbcc}
The set $R \cup C$ is a clique of $G$.
\end{claim}
			
\emph{Proof.}  Observe that no vertex of $R$ is a neighbor of $b_1$,
since otherwise such a vertex must be adjacent to all the vertices of
$B$ and then must stand in $N$.  So, if $R\cup C$ contains two
vertices $u,v$ such that $uv \notin E$, we form the $3K_1$
$\{b_1,u,v\}$, contradicting the fact that $B$ is clean.  \hfill
$\diamond$ \\
			
The following observation comes from the definition of a simple $K$-join.
			
\begin{observation}
\label{obs:$K$-joinbcc}
Given any vertex $r \in R$, if $N_B(r) \cap B_L \neq \emptyset$ holds
then $M \subseteq N_B(r)$. 
\end{observation}

We use these facts to prove that an umbrella ordering of a bi-clique
chain graph can be obtained for $G$ by inserting the vertices of $M$
into $\sigma_H$. Let $b_f, b_l$ be the first and last vertex of $B
\setminus M$ appearing in $\sigma_H$, respectively. We let $B_H$
denote the set $\{u \in V(H)\ :\ b_f <_{\sigma_H} u <_{\sigma_H}
b_l\}$.  Now, we modify $\sigma_H$ by ordering the twins in $H$
according to their neighborhood in $M$: if $x$ and $y$ are twins in
$H$, are consecutive in $\sigma_H$, verify
$x<_{\sigma_H}y<_{\sigma_H}b_f$ and $N_M(y)\subset N_M(x)$, then we
exchange $x$ and $y$ in $\sigma_H$. This process stops when the
considered twins are ordered following the join between $\{u \in
V(H)\ :\ u <_{\sigma_H} b_f\}$ and $M$.  We proceed similarly on the
right of $B_H$, i.e. for $x$ and $y$ consecutive twins with
$b_l<_{\sigma_H}x<_{\sigma_H}y$ and $N_M(x)\subset N_M(y)$.  The
obtained order is clearly an umbrella ordering of a bi-clique chain
graph too (in fact, we just re-labeled some vertices in $\sigma_H$,
and we abusively still denote it by $\sigma_H$).\\
			
\begin{claim}
\label{claim:cliquebcc}
The set $B_H \cup \{m\}$ is a clique of $G$ for any $m \in M$, and 
consequently $B_H\cup M$ is a clique of $G$.
\end{claim}
			
\emph{Proof.}  Let $u$ be any vertex of $B_H$. We claim that $um \in
E(G)$.  Observe that if $u \in B$ then the claim trivially holds. So,
assume that $u \notin B$. By definition of $\sigma_H$, $B_H$ is a clique in
$H$ since $b_fb_l \in E(G)$. It follows that $u$ is incident to every
vertex of $B\setminus H$ in $H$. Since $B_L$ contains $k + 1$ vertices, it
follows that $N_G(u) \cap B_L \neq \emptyset$. Hence, $u$ belongs to
$N \cup R$ and $um \in E$ by Observation~\ref{obs:$K$-join}.  \hfill
$\diamond$
			
\begin{claim}
\label{claim:umbrellambcc}
Let $m$ be any vertex of $M$ and $\sigma'_H$ be the ordering obtained
from $\sigma_H$ by removing $B_H$ and inserting $m$ to the position of
$B_H$. The ordering $\sigma'_H$ respects the umbrella property.
\end{claim}
			
\emph{Proof.}  Assume that $\sigma'_H$ does not respect the umbrella
property, i.e. that there exist (w.l.o.g.) two vertices $u, v\in
H\setminus B_H$ such that either $(1)$ $u<_{\sigma'_H} v <_{\sigma'_H}
m$, $um \in E(H)$ and $uv\notin E(H)$ or $(2)$ $u<_{\sigma'_H} m
<_{\sigma'_H} v$, $um\notin E(H)$ and $uv\in E(H)$ or $(3)$
$u<_{\sigma'_H} v <_{\sigma'_H} m$, $um \in E(H)$ and $vm \notin
E(H)$. First, assume that $(1)$ holds. Since $uv \notin E$ and
$\sigma_H$ is an umbrella ordering, $uw \notin E(H)$ for any $w \in
B_H$, and hence $uw \notin E(G)$.  This means that $B_R \cap N_G(u) =
\emptyset$, which is impossible since $um \in E(G)$.  If $(2)$ holds,
since $uv\in E(H)$ and $\sigma_H$ is an umbrella ordering of $H$, we
have $B_H\subseteq N_H(u)$. In particular, $B_L\subseteq N_H(u)$
holds, and as $|B_L|=k+1$, we have $B_L\cap N_G(u)\neq \emptyset$
and $um$ should be an edge of $G$, what contradicts the assumption
$um\notin E(H)$.  So, $(3)$ holds, and we choose the first $u$
satisfying this property according to the order given by
$\sigma'_H$. So we have $wm \notin E(G)$ for any $w <_{\sigma'_H}
u$. Similarly, we choose $v$ to be the first vertex satisfying $vm
\notin E(G)$. Since $um \in E(G)$, we know that $u$ belongs to $N \cup
R$.  Moreover, since $vm \notin E(G)$, $v \in R\cup C$.  There are
several cases to consider:
				
\begin{enumerate}[(i)]
\item $u \in N$: in this case we know that $B \subseteq N_G(u)$, and
  in particular that $ub_l \in E(G)$.  Since $\sigma_H$ is an umbrella
  ordering for $H$, it follows that $vb_l \in E(H)$ and that
  $B_L\subseteq N_H(v)$. Since $|B_L| = k + 1$ we know that $N_G(v) \cap
  B_L \neq \emptyset$ and hence $v \in R$.  It follows from
  Observation~\ref{obs:$K$-join} that $vm \in E(G)$.
\item $u \in R, v \in R\cup C$: in this case $uv\in E(G)$, by
  Claim~\ref{claim:structbcc}, but $u$ and $v$ are not true twins in
  $H$ (otherwise $v$ would be placed before $u$ in $\sigma_H$ due to
  the modification we have applied to $\sigma_H$). This means that
  there exists a vertex $w \in V(H)$ that \emph{distinguishes} $u$
  from $v$ in $H$.
  
  Assume first that $w <_{\sigma_H} u$ and that $uw \in E(H)$ and $vw
  \notin E(H)$. We choose the first $w$ satisfying this according to
  the order given by $\sigma'_H$. Since $vm, wm, vw \notin E(H)$, it
  follows that $\{v,w,m\}$ defines a $3K_1$ of $G$, which cannot be
  since $B$ is clean.  Hence we can assume that for any $w''
  <_{\sigma_H} u$, $uw'' \in E(H)$ implies that $vw'' \in E(H)$. Now,
  suppose that $b_l <_{\sigma_H} w$ and $uw \notin E(H),\ vw \in
  E(H)$. In particular, this means that $B_L \subseteq N_H(v)$.  Since
  $|B_L| = k + 1$ we have $N_G(v) \cap B_L \neq \emptyset$, implying
  $vm \in E(G)$ (Observation~\ref{obs:$K$-join}). Assume now that $v
  <_{\sigma_H} w <_{\sigma_H} b_f$.  In this case, since $uw \notin
  E(H)$, $B \cap N_H(u) = \emptyset$ holds and hence $B \cap N_G(u) =
  \emptyset$, which cannot be since $u \in R$. Finally, assume that $w
  \in B_H$ and choose the last vertex $w$ satisfying this according to
  the order given by $\sigma'_H$ (i.e. $vw' \notin E(H)$ for any $w
  <_{\sigma_H} w'$ and $w'\in B_H$).  If $vw \in E(G)$ then
  $\{u,m,w,v\}$ is a $4$-cycle in $G$ containing a vertex of $B$,
  which cannot be (recall that $B_H \cup \{m\}$ is a clique of $G$ by
  Claim~\ref{claim:clique}).  Hence $vw \in F$ and there exists an
  extremal edge above $vw$. The only possibility is that this edge is
  some edge $u'w$ for some $u'$ with $u' \in V(H)$, $u <_{\sigma_H} u'
  <_{\sigma_H} v$ and $u'w \in E(G)$. By the choice of $v$ we know
  that $u'm \in E(G)$. Moreover, by the choice of $w$, observe that
  $u'$ and $v$ are true twins in $H$ (if a vertex $s$ distinguishes
  $u'$ and $v$ in $H$, $s$ cannot be before $u$, since otherwise $s$
  would distinguish $u$ and $v$, and not before $w$, by choice of
  $w$). This leads to a contradiction because $v$ should have been
  placed before $u$ through the modification we have applied to
  $\sigma_H$.  \hfill $\diamond$
\end{enumerate}
			
\begin{claim}
\label{claim:propermbcc}
Every vertex $m \in M$ can be added to the graph $H$ while preserving
an umbrella ordering.
\end{claim}

\emph{Proof.}  Let $m$ be any vertex of $M$. The graph $H$ is a
bi-clique chain graph. So, we know that in its associated umbrella
ordering $\sigma_H=b_1,\dots ,b_{|H|}$, there exists a vertex $b_i$
such that $H_1=\{b_1,\dots ,b_i\}$ and $H_2=\{b_{i+1},\dots
,b_{|H|}\}$ are two cliques of $H$ linked by a join.  We study the
behavior of $B_H$ according to the partition $(H_1,H_2)$.
\begin{enumerate}[(i)]
\item \label{item:1properbcc} Assume first that $B_H \subseteq H_1$
  (the case $B_H \subseteq H_2$ is similar). We claim that the set
  $H_1 \cup \{m\}$ is a clique. Indeed, let $v \in H_1 \setminus B_H$:
  since $H_1$ is a clique, $B_H \subseteq N_H(v)$ and hence $N_G(v)
  \cap B_L \neq \emptyset$. In particular, this means that $vm \in
  E(G)$ by Observation~\ref{obs:$K$-joinbcc}.  Since $B_H \cup \{m\}$
  is a clique by Claim~\ref{claim:cliquebcc}, the result follows.
  Now, let $u$ be the neighbor of $m$ with maximal index in
  $\sigma_H$, and $b_{u}$ the neighbor of $u$ with minimal index in
  $\sigma_H$. Observe that we may assume $u \in H_2$ since otherwise
  $N_H(m) \cap H_2 = \emptyset$ and hence we insert $m$ at the
  beginning of $\sigma_H$.  First, if $b_u \in H_1$, we prove that the
  order $\sigma_m$ obtained by inserting $m$ directly before $b_{u}$
  in $\sigma_H$ yields an umbrella ordering of a bi-clique chain
  graph. Since $H_1 \cup \{m\}$ is a clique, we only need to show that
  $N_{H_2}(v) \subseteq N_{H_2}(m)$ for any $v \le_{\sigma_m} b_u$ and
  $N_{H_2}(m) \subseteq N_{H_2}(w)$ for any $w\in H_2$ with $w
  \ge_{\sigma_m} b_u$.  Observe that by Claim~\ref{claim:umbrellambcc}
  the set $\{w \in V\ :\ m \le_{\sigma_m} w \le_{\sigma_m} u\}$ is a
  clique. Hence the former case holds since $vu' \notin E(G)$ for any
  $v \le_{\sigma_m} b_u$ and $u' \ge_{\sigma_m} u$. The latter case
  also holds since $N_H(m) \subseteq N_H(b_u)$ by construction.
  Finally, if $b_u \in H_2$, then $b_u = b_{|H_1|+1}$ since $H_2$ is a
  clique. Hence, using similar arguments one can see that inserting
  $m$ directly after $b_{|H_1|}$ in $\sigma_H$ yields an umbrella
  ordering of a bi-clique chain graph.
\item Assume now that $B_H \cap H_1 \neq \emptyset$ and $B_H \cap H_2
  \neq \emptyset$. In this case, we claim that $H_1 \cup \{m\}$ or
  $H_2 \cup \{m\}$ is a clique in $H$. Let $u$ and $u'$ be the
  neighbors of $m$ with minimal and maximal index in $\sigma_H$,
  respectively. If $u = b_1$ or $u' = b_{|H|}$ then
  Claims~\ref{claim:cliquebcc} and \ref{claim:umbrellambcc} imply that
  $H_1 \cup \{m\}$ or $H_2 \cup \{m\}$ is a clique and we are
  done. So, none of these two conditions hold and $mb_1\notin E(H)$
  and $mb_{|H|}\notin E(H)$ Then, by Claim~\ref{claim:umbrellambcc}, we know 
  that $b_1b_{|H|}$ and the set $\{b_1,b_{|H|},m\}$ defines a $3K_1$
  containing $m$ in $G$, which cannot be.  This means that we can
  assume w.l.o.g. that $H_1 \cup \{m\}$ is a clique, and we can
  conclude using similar arguments than in (\ref{item:1properbcc}).
\end{enumerate}
\hfill $\diamond$ \\
	
Since the proof of Claim~\ref{claim:propermbcc} does not use the fact
that the vertices of $H$ do not belong to $M$, it follows that we can
iteratively insert the vertices of $M$ into $\sigma_H$, preserving an
umbrella ordering at each step. To conclude, observe that the
reduction rule can be computed in polynomial time using
Lemma~\ref{lem:bijoins}.
\end{proof}
	
\begin{observation}
\label{obs:bcc}
Let $G = (V,E)$ be a positive instance of \BCC{} reduced under
Rule~\ref{rule:simple$K$-join}. Any simple $K$-join $B$ of $G$ has
size at most $3k^2+6k+2$.
\end{observation}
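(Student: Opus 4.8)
The plan is to follow exactly the pattern of Observation~\ref{obs:clean$K$-join} from the \PIC{} section: reduce the task of bounding an arbitrary simple $K$-join to that of bounding a simple \emph{clean} $K$-join, which is precisely what Rule~\ref{rule:simple$K$-join} controls. First I would fix a simple $K$-join $B$ of $G$ (so $G[B]$ is a clique and, up to relabelling, $L=\emptyset$), and note that since $G$ is reduced, Rule~\ref{rule:sunflowerbcc} has been applied, so Lemma~\ref{lem:s3} is available. That lemma bounds by $k^2+2k$ the number of vertices of $G$ lying in some $3K_1$ and by $2k^2+2k$ the number lying in some $4$-cycle, hence at most $3k^2+4k$ vertices of $B$ lie in a $3K_1$ or a $4$-cycle. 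Deleting exactly these vertices from $B$ yields a set $B^\star$ with $|B^\star|\ge |B|-(3k^2+4k)$.

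Next I would check that $B^\star$ is again a \emph{simple clean} $K$-join of $G$. Cleanness is immediate from the construction, since no remaining vertex lies in a $3K_1$ or a $4$-cycle and that is the definition of clean for \BCC{}. That $B^\star$ is a $K$-join follows from the already-noted fact that a subset of a $K$-join is a $K$-join. Finally it stays \emph{simple}: passing from the clique $B$ to the subclique $B^\star$ only enlarges the set $N$ of vertices adjacent to the whole clique (every vertex of $B\setminus B^\star$ joins $N$), the empty side $L$ stays empty, a vertex of the old $R$ either stays in $R$, moves to $N$, or loses all neighbours in the clique and moves to $C$, and the umbrella-ordering nesting condition $N_R(b_i)\subseteq N_R(b_{i+1})$ restricts to $B^\star$; so the partition attached to $B$ restricts to a valid one for $B^\star$ with empty left side.

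Then I would invoke the reduction. If $|B^\star|\ge 2k+3=2(k+1)+1$, the first $k+1$ and the last $k+1$ vertices of $B^\star$ in its umbrella ordering are disjoint, so the middle set $M$ of Rule~\ref{rule:simple$K$-join} is nonempty and the rule genuinely applies to $B^\star$, contradicting that $G$ is reduced. Hence $|B^\star|\le 2k+2$, and therefore $|B|\le |B^\star|+3k^2+4k\le (2k+2)+3k^2+4k=3k^2+6k+2$, as claimed. The only mildly delicate point is the second paragraph — verifying that a subclique of a simple $K$-join is again a \emph{simple} $K$-join (the "clean" and "$K$-join" parts being routine or already stated); and one must be careful to observe that reducedness only forces $|B^\star|\le 2(k+1)$ rather than $\le 2(k+1)-1$, because when $|B^\star|=2(k+1)$ the set $M$ is empty and the rule does nothing — this is exactly where the additive $+2$ in the bound comes from.
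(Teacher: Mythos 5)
Your proof is correct and follows essentially the same route as the paper: bound the vertices of $B$ lying in a $3K_1$ or a $4$-cycle via Lemma~\ref{lem:s3}, remove them to obtain a clean simple $K$-join, and use reducedness under Rule~\ref{rule:simple$K$-join} to cap that clean part at $2(k+1)$ vertices. Your extra care in checking that a subclique of a simple $K$-join remains simple, and your remark about why the bound is $2(k+1)$ rather than $2(k+1)-1$, are both sound (the paper glosses over the first and your count of $3k^2+4k$ excluded vertices is the correct one).
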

	
\begin{proof}
Let $B$ be any simple $K$-join of $G$, and assume $|B| >
3k^2+6k+2$. By Lemma~\ref{lem:s3} we know that at most
$3k^2+2k$ vertices of $B$ are contained in a $3K_1$ or a
$4$-cycle. Hence $B$ contains a set $B'$ of at least $2k+3$ vertices
not contained in any $3K_1$ or a $4$-cycle. Now, since any subset of a
$K$-join is a $K$-join, it follows that $B'$ is a \emph{clean} simple
$K$-join. Since $G$ is reduced under rule~\ref{rule:simple$K$-join},
we know that $|B'| \le 2(k+1)$ what gives a contradiction.
\end{proof}
	
Finally, we can prove that Rules~\ref{rule:sunflowerbcc}
and~\ref{rule:simple$K$-join} form a kernelization algorithm.

\begin{theorem}
\label{thm:bcc}
The \BCC{} problem admits a kernel with $O(k^2)$ vertices.
\end{theorem}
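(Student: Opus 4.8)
The plan is to imitate the kernelization argument of Section~\ref{sec:pic}, exploiting the fact that a bi-clique chain graph is just two cliques glued along a join; this collapses the whole $2$-branch machinery into a single clique on each side of an umbrella ordering. First I would observe that the kernelization algorithm simply applies Rules~\ref{rule:sunflowerbcc} and~\ref{rule:simple$K$-join} exhaustively: by Lemma~\ref{lem:$K$-joinbcc}, together with the $O(n^4)$ enumeration of $3K_1$'s and $4$-cycles used both for the sunflower rule (as in Lemma~\ref{lem:sunflower}) and for extracting clean simple $K$-joins, this runs in polynomial time; and if at some stage a sunflower of more than $k$ petals is found whose common non-edge cannot be completed within the remaining budget, we return a trivial constant-size negative instance. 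So from now on assume $(G=(V,E),k)$ is a positive instance reduced under both rules, and the goal is to bound $|V|$.

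Fix a $k$-completion $F$, set $H=G+F$, and let $\sigma_H=b_1,\dots,b_{|H|}$ be an umbrella ordering of the bi-clique chain graph $H$ (Theorem~\ref{thm:umbrella}). Since $H$ has no $3K_1$, $C_4$ or $C_5$, its complement is a bipartite chain graph and in particular bipartite, so $V$ splits into two cliques; one checks that this bipartition can be chosen so that $H_1=\{b_1,\dots,b_i\}$ and $H_2=\{b_{i+1},\dots,b_{|H|}\}$ are cliques of $H$ and $(H_1,H_2)$, with $H_1$ ordered $b_1,\dots,b_i$, is a join of $H$. Because $|F|\le k$, at most $2k$ vertices of $G$ are \emph{affected} (incident to an edge of $F$); call this set $A$ and put $U=V\setminus A$, so that $V=A\sqcup(U\cap H_1)\sqcup(U\cap H_2)$. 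The key observation is that every $u\in U$ satisfies $N_G(u)=N_H(u)$, hence the join structure of $H$ restricts verbatim to $G$.

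The heart of the proof, and the step I expect to require the most care, is the claim that $B:=U\cap H_1$ is a \emph{simple $K$-join} of $G$ (and symmetrically $U\cap H_2$, by reversing $\sigma_H$). I would verify this directly against Definition~\ref{def:1branch}: $G[B]=H[B]$ is an induced subclique of $H_1$, hence a clique, ordered by $\sigma_H$; being a clique, its attachment clique is all of $B$ and $B^R=\emptyset$, so the only thing to check is the neighbourhood nesting outside $B$. Let $R=\{x\in V\setminus B:N_B(x)\neq\emptyset\}$ and $C=V\setminus(B\cup R)$; there are no $B$--$C$ edges by the choice of $C$, and every vertex of $R$ has a neighbour in $B$. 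For $b\in B$, since $b$ is unaffected, $N_G(b)=N_H(b)=(H_1\setminus\{b\})\cup(N_H(b)\cap H_2)$; moreover $R\cap H_1=H_1\setminus B$ (every vertex of $H_1\setminus B$ sees all of $B$) and $N_H(b)\cap H_2\subseteq R$, so $N_R(b)=(H_1\setminus B)\cup(N_H(b)\cap H_2)$. The first term is independent of $b$, and the second term is non-decreasing as $b$ runs through $B$ in the order of $\sigma_H$, by the join property of $(H_1,H_2)$. Hence $B$ meets Definition~\ref{def:1branch} with $L=\emptyset$, and being a clique it is a simple $K$-join.

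Finally I would invoke Observation~\ref{obs:bcc}: as $G$ is reduced under Rule~\ref{rule:simple$K$-join}, every simple $K$-join of $G$ has at most $3k^2+6k+2$ vertices, so $|U\cap H_1|,|U\cap H_2|\le 3k^2+6k+2$ and therefore $|V|\le 2k+2(3k^2+6k+2)=6k^2+14k+4=O(k^2)$, which is the claimed bound. The only delicate points beyond the simple-$K$-join claim are bookkeeping ones: checking that the two-clique bipartition of $H$ can be aligned with $\sigma_H$ as a prefix/suffix, that the nesting in the last display runs in the correct direction, and the (symmetric) treatment of $U\cap H_2$ via the reversed ordering.
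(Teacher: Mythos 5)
Your proposal is correct and follows essentially the same route as the paper: fix a $k$-completion $F$, split $V$ into the at most $2k$ affected vertices and the unaffected parts of the two cliques $H_1,H_2$ of $H=G+F$, argue that each unaffected part is a simple $K$-join of $G$ (your set $U\cap H_1$ is exactly the paper's $A'_1$), and bound it by Observation~\ref{obs:bcc}. The only difference is that you spell out the verification that $U\cap H_1$ satisfies the simple-$K$-join definition, which the paper leaves as a one-line observation; the resulting bound $2k+2(3k^2+6k+2)$ is identical.
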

	
\begin{proof}
Let $G = (V,E)$ be a positive instance of \BCC{} reduced under
Rules~\ref{rule:sunflowerbcc} and~\ref{rule:simple$K$-join}, and $F$
be a $k$-completion for $G$.  We let $H = G+F$ and $H_1$, $H_2$ be the
two cliques of $H$. Observe in particular that $H_1$ and $H_2$ both
define simple $K$-joins. Let $A$ be the set of affected vertices of
$G$. Since $|F| \le k$, observe that $|A| \le 2k$.  Let $A_1 = A \cap
H_1$, $A_2 = A \cap H_2$, $A'_1 = H_1 \setminus A_1$ and $A'_2 = H_2
\setminus A_2$ (see Figure~\ref{fig:sizebcc}). Observe that since
$H_1$ is a simple $K$-join in $H$, $A'_1 \subseteq H_1$ is a simple
$K$-join of $G$ (recall that the vertices of $A'_1$ are not
affected). By Observation~\ref{obs:bcc}, it follows that $|A'_1| \le
3k^2+6k+2$. The same holds for $A'_2$ and $H$ contains at most
$2(3k^2+6k+2)+2k$ vertices.
		
\begin{figure}[ht]
\centerline{
\input{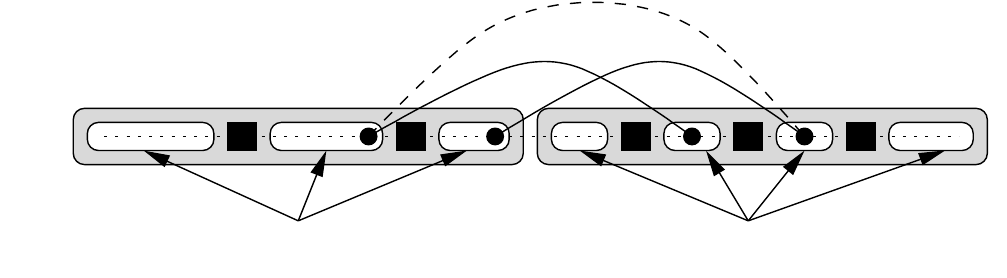_t}}
\caption{Illustration of the bi-clique chain graph $H$. The square
  vertices stand for affected vertices, and the sets $A'_1 = H_1
  \setminus A_1$ and $ A'_2 = H_2 \setminus A_2$ are simple $K$-joins
  of $G$, respectively. \label{fig:sizebcc}}
\end{figure}
\end{proof}
	
\begin{corollary}
\label{coro:bcd}
The \textsc{Bipartite Chain Deletion} problem admits a kernel with
$O(k^2)$ vertices.
\end{corollary}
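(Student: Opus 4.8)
The plan is to obtain the result for \BCD{} from Theorem~\ref{thm:bcc} by a simple complementation argument, so that no new reduction rules are needed. The key observation is that complementation exchanges edge deletions with edge additions while preserving the vertex set, and that the complement of a bipartite chain graph is exactly a bi-clique chain graph. This last fact is immediate from the definitions (two independent sets linked by a join complement to two cliques linked by a join), and is also reflected by the self-complementary pairing of the forbidden families: $\overline{2K_2}=C_4$, $\overline{C_5}=C_5$ and $\overline{K_3}=3K_1$.

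First I would make the correspondence precise. Given an instance $(G,k)$ of \BCD{}, with $G=(V,E)$, form the instance $(\overline{G},k)$ of \BCC{}, where $\overline{G}=(V,(V\times V)\setminus E)$; this takes $O(n^2)$ time. A set $D\subseteq E$ witnesses that $G$ can be transformed into a bipartite chain graph by deleting $|D|\le k$ edges if and only if the same set $D$, now a set of non-edges of $\overline{G}$, witnesses that $\overline{G}$ can be transformed into a bi-clique chain graph by adding $|D|\le k$ edges. Hence $(G,k)$ is a positive instance of \BCD{} if and only if $(\overline{G},k)$ is a positive instance of \BCC{}.

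Next I would run the kernelization algorithm behind Theorem~\ref{thm:bcc} (the exhaustive application of Rules~\ref{rule:sunflowerbcc} and~\ref{rule:simple$K$-join}) on $(\overline{G},k)$, obtaining in polynomial time an equivalent \BCC{} instance $(G_1,k')$ with $k'\le k$ and $|V(G_1)|=O(k'^2)=O(k^2)$. Finally I would complement back: $(\overline{G_1},k')$ is a \BCD{} instance on the same vertex set, so with $O(k^2)$ vertices, and chaining the equivalences shows that $(G,k)$ is a positive instance of \BCD{} iff $(\overline{G},k)$ is a positive instance of \BCC{} iff $(G_1,k')$ is a positive instance of \BCC{} iff $(\overline{G_1},k')$ is a positive instance of \BCD{}. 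Thus $(\overline{G_1},k')$ is the desired kernel.

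There is essentially no obstacle here beyond bookkeeping; the only point deserving an explicit line is the identity ``complement of a bipartite chain graph $=$ bi-clique chain graph'', which I would justify either directly or via the complementary forbidden subgraphs above. Everything else is the routine composition of a polynomial-time reduction, a polynomial kernelization, and its inverse reduction, together with the observation that the number of vertices is invariant under complementation.
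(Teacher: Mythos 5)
Your proof is correct and matches the paper's intent: the corollary is stated without an explicit proof precisely because it follows from Theorem~\ref{thm:bcc} by the complementation argument the paper already sets up at the start of the section (deletion in $G$ corresponds to completion in $\overline{G}$, the complement of a bipartite chain graph is a bi-clique chain graph, and complementation preserves the vertex count). Your write-up just makes this bookkeeping explicit.
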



\section{Conclusion}

In this paper we prove that the \PIC{} problem admits a kernel with
$O(k^ 5)$ vertices.  Two natural questions arise from our results:
firstly, does the \textsc{Interval Completion} problem admit a
polynomial kernel? Observe that this problem is known to be FPT not
for long~\cite{VHPT09}.  The techniques we developed here intensively
use the fact that there are few claws in the graph, what help us to
reconstruct parts of the umbrella ordering. Of course, these
considerations no more hold in general interval graphs. The second
question is: does the \textsc{Proper Interval Edge-Deletion} problem
admit a polynomial kernel?  Again, this problem admits a
fixed-parameter algorithm~\cite{Vil10a}, and we believe that our
techniques could be applied to this problem as well.  Finally, we
proved that the \BCC{} problem admits a kernel with $O(k^2)$ vertices,
which completes a result of Guo~\cite{Guo07}. In all
cases, a natural question is thus whether these bounds can be
improved?

\bibliographystyle{plain}
\bibliography{biblio}
\end{document}